\def\showauthornotes{0}
\def\showtableofcontents{0}
\def\showkeys{0}
\def\showdraftbox{0}
\def\showcolorlinks{1}
\def\usemicrotype{1}
\def\showfixme{1}
\def\writemode{0}
\newtheorem{theorem}{Theorem}[section]
\newtheorem*{theorem*}{Theorem}
\newtheorem*{proposition*}{Proposition}
\newtheorem{lemma}[theorem]{Lemma}
\newtheorem*{lemma*}{Lemma}
\newtheorem{corollary}[theorem]{Corollary}
\newtheorem*{conjecture*}{Conjecture}
\newtheorem{fact}[theorem]{Fact}
\newtheorem*{fact*}{Fact}
\newtheorem*{hypothesis*}{Hypothesis}
\theoremstyle{definition}
\newtheorem{definition}[theorem]{Definition}
\theoremstyle{remark}
\newtheorem{claim}[theorem]{Claim}
\newtheorem*{claim*}{Claim}
\newtheorem{remark}[theorem]{Remark}
\newtheorem*{remark*}{Remark}
\newtheorem*{observation*}{Observation}
\let\mathbb\varmathbb
\crefname{lemma}{Lemma}{Lemmas}
\crefname{definition}{Definition}{Definitions}
\newcommand{\Sref}[1]{\hyperref[#1]{\S\ref*{#1}}}
\let\nfrac=\nicefrac
\newcommand{\Authornote}[2]{{\sffamily\small\color{red}{[#1: #2]}}}
\newcommand{\Authornotecolored}[3]{{\sffamily\small\color{#1}{[#2: #3]}}}
\newcommand{\Authorcomment}[2]{{\sffamily\small\color{gray}{[#1: #2]}}}
\newcommand{\Authorstartcomment}[1]{\sffamily\small\color{gray}[#1: }
\newcommand{\Authorfnote}[2]{\footnote{\color{red}{#1: #2}}}
\newcommand{\Authorfixme}[1]{\Authornote{#1}{\textbf{??}}}
\newcommand{\Authormarginmark}[1]{\marginpar{\textcolor{red}{\fbox{\Large #1:!}}}}
\newcommand{\Authornote}[2]{}
\newcommand{\Authornotecolored}[3]{}
\newcommand{\Authorcomment}[2]{}
\newcommand{\Authorstartcomment}[1]{}
\newcommand{\Authorfnote}[2]{}
\newcommand{\Authorfixme}[1]{}
\newcommand{\Authormarginmark}[1]{}
\definecolor{forestgreen(traditional)}{rgb}{0.0, 0.27, 0.13}
\newcommand{\Rnote}{\Authornote{M}}
\newcommand{\norm}[1]{\lVert#1\rVert}
\newcommand{\textparen}[1]{\text{(#1)}}
\newcommand{\because}[1]{\textparen{because #1}}
\renewcommand{\because}[1]{\textparen{because #1}}
\newcommand\bdot\bullet
\DeclareMathOperator{\poly}{poly}
\DeclareMathOperator{\argmax}{argmax}
\DeclareMathOperator{\supp}{supp}
\newcommand{\Lovasz}{Lov\'asz\xspace}
\newcommand{\Z}{\mathbb Z}
\newcommand{\R}{\mathbb R}
\newcommand{\C}{\mathbb C}
\newcommand{\cD}{\mathcal D}
\newcommand{\cF}{\mathcal F}
\renewcommand{\leq}{\leqslant}
\renewcommand{\le}{\leqslant}
\renewcommand{\geq}{\geqslant}
\renewcommand{\ge}{\geqslant}
\newcommand{\draftbox}{\begin{center}
  \fbox{%
    \begin{minipage}{2in}%
      \begin{center}%
          \Large\textsc{Working Draft}\\%
        Please do not distribute%
      \end{center}%
    \end{minipage}%
  }%
\end{center}
\vspace{0.2cm}}
\newcommand{\draftbox}{}
\let\epsilon=\varepsilon
\numberwithin{equation}{section}
\newcommand\MYcurrentlabel{xxx}
\newcommand{\MYstore}[2]{%
  \global\expandafter \def \csname MYMEMORY #1 \endcsname{#2}%
}
\newcommand{\MYload}[1]{%
  \csname MYMEMORY #1 \endcsname%
}
\newcommand{\MYnewlabel}[1]{%
  \renewcommand\MYcurrentlabel{#1}%
  \MYoldlabel{#1}%
}
\newcommand{\MYdummylabel}[1]{}
\newcommand{\torestate}[1]{%
  \let\MYoldlabel\label%
  \let\label\MYnewlabel%
  #1%
  \MYstore{\MYcurrentlabel}{#1}%
  \let\label\MYoldlabel%
}
\newcommand{\restatetheorem}[1]{%
  \let\MYoldlabel\label
  \let\label\MYdummylabel
  \begin{theorem*}[Restatement of \prettyref{#1}]
    \MYload{#1}
  \end{theorem*}
  \let\label\MYoldlabel
}
\newcommand{\restatelemma}[1]{%
  \let\MYoldlabel\label
  \let\label\MYdummylabel
  \begin{lemma*}[Restatement of \prettyref{#1}]
    \MYload{#1}
  \end{lemma*}
  \let\label\MYoldlabel
}
\newcommand{\restateprop}[1]{%
  \let\MYoldlabel\label
  \let\label\MYdummylabel
  \begin{proposition*}[Restatement of \prettyref{#1}]
    \MYload{#1}
  \end{proposition*}
  \let\label\MYoldlabel
}
\newcommand{\restatefact}[1]{%
  \let\MYoldlabel\label
  \let\label\MYdummylabel
  \begin{fact*}[Restatement of \prettyref{#1}]
    \MYload{#1}
  \end{fact*}
  \let\label\MYoldlabel
}
\newcommand{\restate}[1]{%
  \let\MYoldlabel\label
  \let\label\MYdummylabel
  \MYload{#1}
  \let\label\MYoldlabel
}
\newcommand{\addreferencesection}{
  \phantomsection
  \addcontentsline{toc}{section}{References}
}
\newcommand{\e}{\epsilon}
\newcommand{\eps}{\epsilon}
\let\origparagraph\paragraph
\renewcommand{\paragraph}[1]{\origparagraph{#1.}}
\DeclareUrlCommand\email{}
\DeclareMathOperator{\zo}{\{0,1\}}
\DeclareMathOperator*{\pE}{\tilde{\mathbb E}}
\def\@aparagraph[#1]#2{\paragraph[#1]{#2\@addpunct{.}}}
\def\@bparagraph#1{\paragraph*{#1\@addpunct{.}}}
\newcommand{\ones}{\mbox{\boldmath $1$}}
\newcommand{\zeros}{\mbox{\boldmath $0$}}
\newcommand{\xx}{\mbox{\boldmath $x$}}
\newcommand{\yy}{\mbox{\boldmath $y$}}
\renewcommand{\e}{\mbox{\boldmath $e$}}
\renewcommand{\e}{\mbox{\boldmath $e$}}
\newcommand{\tmu}{\tilde{\mu}}
\newcommand{\bS}{{\bar{S}}}
\newcommand{\NE}{\mathrm{NE}}
\renewcommand{\P}{\mathcal{P}}
\newcommand{\OV}{\mathsf{OV}}
\newcommand{\epshat}{\hat{\epsilon}}
\newcommand{\hxx}{\hat{\xx}}
\newcommand{\hyy}{\hat{\yy}}
\newcommand{\rh}{\hat{R}}
\newcommand{\ch}{\hat{C}}
\newcommand{\CF}{{\mathcal{F}}}
\newcommand{\CS}{{\mathcal{S}}}
\newcommand{\CG}{{\mathcal{G}}}
\newcommand{\ud}[1]{{{\mathcal{U}_{#1}}}}
\newcommand{\et}[1]{{e_{#1}^\top}}
\newcommand{\vv}{{\mathbf{v}}}
\newcommand{\tR}{{\tilde{R}}}
\newcommand{\txx}{{\tilde{\xx}}}
\newcommand{\tx}{{\tilde{x}}}
\newcommand{\tyy}{{\tilde{\yy}}}
\newcommand{\tC}{{\tilde{C}}}
\newcommand{\Supp}{{\mbox{\textsf{supp}}}}
\newcommand{\nice}{\textsf{SoSHard }} 
\newcommand{\good}{\textsf{EnumHard }} 
\newcommand{\BR}{{\mbox{\textsf{BR}}}}
\newcommand{\eBR}{{\mbox{$\eps$-\textsf{BR}}}}
\newcommand{\ehBR}{{\mbox{$\epshat$-\textsf{BR}}}}
\newcommand{\NP}{\mathrm{NP}}
\newcommand{\coNP}{\mathrm{co}\text{-}\mathrm{NP}}
\newcommand{\PPAD}{\mathrm{PPAD}}
\newcommand{\WNE}{\textsf{WNE }}
\newcommand{\alert}[1]{{\color{black} #1}}
\title{Sum-of-Squares meets Nash: Optimal Lower Bounds for Finding \emph{any} Equilibrium}
\author{Pravesh K. Kothari \thanks{Princeton University/IAS, Princeton. Supported by a Schmidt Foundation Fellowship and NSF Grant \# CCF-1412958} \and Ruta Mehta \thanks{UIUC, Urbana-Champaign.}}
\date{}
\begin{document}

\maketitle
 \draftbox
\thispagestyle{empty}


\begin{abstract}
Several works have shown unconditional hardness (via \emph{integrality gaps}) of computing equilibria using strong hierarchies of convex relaxations. Such results however only apply to the problem of computing equilibria that optimize a certain objective function and not to the (arguably more fundamental) task of finding \emph{any} equilibrium. 

We present an algorithmic model based on the sum-of-squares (SoS) hierarchy that allows escaping this inherent limitation of integrality gaps. In this model, algorithms access the input game only through a relaxed solution to the natural SoS relaxation for computing equilibria. They can then adaptively construct a list of candidate solutions and invoke a \emph{verification} oracle to check if any candidate on the list is a solution. This model captures most well-studied approximation algorithms such as those for Max-Cut, Sparsest Cut and Unique-Games.

The state-of-the-art algorithms for computing exact and approximate~\cite{LMM} equilibria in two-player, n-strategy games are captured in this model and require that at least one of i) size ($\approx$ running time) of the SoS relaxation or ii) the size of the list of candidates, be at least $2^{\Omega(n)}$ and $n^{\Omega(\log{(n)})}$ respectively. Our main result shows a lower bound that matches these upper bound up to constant factors in the exponent.

This can be interpreted as an unconditional confirmation, in our restricted algorithmic framework, of Rubinstein's recent \emph{conditional} hardness~\cite{Rub} for computing approximate equilibria.

Our proof strategy involves constructing a family of games that all share a common sum-of-squares solution but every (approximate) equilibrium of one game is far from every (approximate) equilibrium of any other game in the family. Along the way, we strengthen the unconditional lower bound against enumerative algorithms for finding approximate equilibria due to Daskalakis-Papadimitriou~\cite{DP} and the classical hardness result for finding equilibria maximizing welfare due to Gilboa-Zemel~\cite{GZ}. 

\end{abstract}

\clearpage

\ifnum\showtableofcontents=1
{
\tableofcontents
\thispagestyle{empty}
 }
\fi

\clearpage

\setcounter{page}{1}

\newcommand{\on}{\{-1,1\}}
\section{Introduction}
\label{sec:intro}

\centerline{\em How hard is it to compute equilibria in two player finite games? }

This foundational question has been a driving force of research in algorithmic game theory for almost three decades. Beginning with the formalization of the complexity class $\PPAD$, a systematic investigation~\cite{PPAD,winlose,Savani-Stengel,DP,DGP,CDT} of this question eventually led to the elegant result~\cite{DGP, CDT} establishing that computing equilibria ($\NE$) in two-player games is $\PPAD$-complete. 

Algorithmic progress has thus relied on making additional assumptions on the structure of the games or relaxing the notion of equilibria~\cite{DP,LMM,winlose,KT,Barman,TS,AGMS,BR}. Perhaps the biggest success in this direction has been the celebrated quasi-polynomial time approximate scheme of Lipton, Markakis and Mehta~\cite{LMM} for approximate $\NE$. In a major recent breakthrough, Rubinstein~\cite{Rub} showed that we cannot obtain any significant asymptotic improvement over this algorithm assuming a strong, new conjecture: solving the $\PPAD$-complete ``End-of-the-line'' problem requires exponential time.

We thus obtain a fairly complete picture of complexity of $\NE$ in general games if we believe strong enough fine-grained conjectures about the complexity of PPAD. This work is focused on the central goal of obtaining reliable, independent evidence (given that a resolution currently appears out of reach!) for such fine-grained complexity assumptions.

\paragraph{Unconditional Hardness for $\NE$?} 
A standard approach for investigating such questions attempts to build a \emph{conditional} theory of hardness based on (more believable) conjectures. This approach has the attractive feature of understanding the limitations of any algorithm for computing $\NE$. However, since it is known that $\NE$ cannot be NP-hard unless $\NP = \coNP$~\cite{MP}, such a plan encounters serious obstacles.

The other influential approach builds \emph{unconditional} hardness results for strong but restricted class of relevant algorithmic techniques. Such results have largely focused on algorithmic techniques based on systematic hierarchies of linear/semi-definite programs (LP/SDP), such as Sherali-Adams, \Lovasz-Schrjiver and Sum-of-Squares. The focus of this work is investigating the sum-of-squares SDP hierarchy for computing $\NE$.

The sum-of-squares SDP hierarchy generalizes spectral methods and linear programming and is formally the strongest known hierarchy of general, convex-programming based algorithmic schemes. It captures the state-of-the-art algorithms for many fundamental algorithmic problems \cite{MR3424199-Arora15,MR2932723-Barak11,DBLP:journals/eccc/GuruswamiS11,MR2535878-Arora09} including the celebrated QPTAS for computing approximate equilibria in game~\cite{HNW}, breaks many known hard instances of basic problems in combinatorial optimization \cite{MR2961513-Barak12,MR3202997-ODonnell12}, and has been remarkably successful in algorithm design for both worst-case and average case problems \cite{BarakKS2017,DBLP:conf/stoc/BarakKS14,DBLP:journals/corr/MaSS16,MR3388192-Barak15,DBLP:conf/colt/BarakM16,KS17a,KS17b,DBLP:conf/eurocrypt/BarakBKK18,HL17,KKM18}. 

Given the power of the technique, lower bounds against it can be credible indicators of computational hardness and have been successfully used as such especially in areas where standard conditional hardness results are infeasible. For example, some of the strongest known evidence of hardness of many fundamental \emph{average-case} problems such as Planted Clique, Refuting Random Constraint Satisfaction Problems and Maximizing Random Polynomials (or, ``Tensor Principal Component Analysis'') comes from strong lower bounds~\cite{DBLP:journals/eccc/BarakHKKMP16,MR3473335-Allen15,DBLP:conf/stoc/KothariMOW17,DBLP:journals/corr/RaghavendraRS16,DBLP:conf/colt/HopkinsSS15}  against the sum of squares algorithm.  

However, a priori, this approach also suffers from a similar foundational problem as the first one. Lower bounds for convex relaxations are usually formalized as \emph{integrality gaps}. Integrality gaps show impossibility of approximating a certain objective function on the underlying solution space with the given convex relaxation. However, $\NE$ is a search problem without any pre-defined objective function. Restricting to the natural decision version trivializes the problem as every two player finite game has an equilibrium. Thus, integrality gaps do not provide means of establishing hardness for the problem.

Prior works~\cite{CS,Hazan-Krauthgamer,BKW,Deligkas-Fearnley-Savani} dealt with this issue by forcing some natural objective function on the space of equilibria and establishing the hardness of finding an equilibrium that optimizes the chosen objective. For appropriate choices of such objectives, finding (or even approximating) the best equilibria is known to be NP-hard~\cite{GZ,CS}. However, such a result doesn't allow distinguishing whether the hardness is of finding \emph{any} equilibrium or one that optimizes the chosen objective function. 

The main contribution of this work is a new approach to circumvent such issues and establish unconditional hardness for finding \emph{any} equilibrium using the sum-of-squares method by relying on \emph{rounding gaps} instead of \emph{integrality gaps}.

\paragraph{Hardness via Rounding Gaps} The conceptual idea underlying our framework is quite simple. A convex relaxation, such as the SoS SDP, returns a certain relaxed solution for the problem of our interest. A relaxed solution will not generally be an actual solution so we thus must use an additional, second step, usually known as \emph{rounding}, that transforms a relaxed solution into an actual equilibrium. Given that any algorithm for $\NE$ based on such a convex relaxation must go through the additional rounding step, we will  show hardness results for both the steps above combined.

There's an important issue, however. At the outset, the rounding algorithm might just ignore the convex relaxation's solution and just find an equilibrium for the underlying game from scratch. Without settling the complexity of PPAD, thus, we cannot hope to prove a lower bound against such ``rounding'' algorithms. Thus, we must restrict our rounding to procedures that really ``use'' the solution to the convex relaxation and do not ``cheat'' by ignoring it.

\paragraph{Oblivious Rounding and Verification Oracles} 
An \emph{oblivious} rounding algorithm takes input the relaxed solution (generated by the convex relaxation) and outputs a true solution with the crucial restriction that the output be a function of \emph{only} the relaxed solution. In particular for us, an oblivious rounding algorithm for computing $\NE$ accesses the underlying payoff matrices of the input game only indirectly via the solution to the SoS SDP.

This class of roundings was first formalized in a work of Feige, Feldman and Talgam-Cohen~\cite{DBLP:conf/approx/FeigeFT16} in the context of certain mechanism design problems (see a longer discussion in Section \ref{sec:related-work}). While such rounding algorithms might appear restrictive at the outset, a simple inspection reveals that, in hindsight, some of the most powerful SDP rounding algorithms in the literature are in fact oblivious! This includes, for instance, the famous algorithms for Max-Cut~\cite{DBLP:conf/stoc/GoemansW01}, Sparsest Cut~\cite{MR2535878-Arora09} and unique games/small-set-expansion~\cite{MR2932723-Barak11}. Finally, a recent work of Harrow-Natarajan and Wu~\cite{HNW} shows that the guarantees of the Lipton-Markakis-Mehta~\cite{LMM} algorithm for computing approximate equilibria can be matched by an \emph{oblivious} rounding based algorithm. 

While oblivious rounding captures many famous rounding algorithms, there are a few notable exceptions that do not fit into this framework. This includes, for e.g., the rounding method for arbitrary Constraint Satisfaction Problems~\cite{MR2648437-Raghavendra09} and the recent works on polynomial optimization and its variants~\cite{DBLP:conf/stoc/BarakKS14}. Such works construct a list of candidate solutions based on the solution to the relaxation and show that one of the candidates must in fact be a true solution. Indeed, such an idea in particular captures the spectacularly successful ``enumerative techniques'' in computing equilibria for various classes of games: cleverly construct a small search space that contains a solution and then brute-force search over it to find the solution, e.g.,~\cite{LMM,KT,DP,MR3210829-Alon13}. Perhaps the most famous example of the usage of this idea is in fact the algorithm in~\cite{LMM} discussed above.

We thus allow our model to (possibly adaptively) generate a list of candidates and check if one of them is a solution. This gives us the final class of roundings we work with in this paper: \emph{Oblivious Roundings with Verification Oracle} ($\OV$). We measure the cost of the rounding in two parameters: the running time of the relaxation and the number of candidate solutions generated. Notably, we do not restrict the running time of the algorithm that constructs such a list from the solution to the relaxation - in that sense, our model is information theoretic. We also allow certain strengthening of the verification oracle but defer the details to the next section. 

The resulting model is a strict strengthening of the \emph{oblivious algorithms} framework used by Daskalakis-Papadimitriou~\cite{DP} to show optimality of the QPTAS for approximate equilibria (longer discussion in Section \ref{sec:related-work}). Our lower bounds thus immediately strengthen theirs.

Finally, we must note that our algorithmic model is strictly weaker than integrality gaps for situations where they make sense. This is because by construction, they can never capture the class of all rounding algorithms for a given convex relaxation. More practically,  there are important rounding schemes that are not captured in our framework - for e.g., iterative rounding techniques such as those used for the facility location problem~\cite{DBLP:journals/iandc/Li13}. At present, we do not know a clean extension of our framework that can capture such rounding methods.

\paragraph{Known Algorithms} It is not hard to show that $2^{O(n)}$ queries to the verification oracle (and ``0'' convex relaxation cost) can find exact equilibria. The QPTAS of~\cite{LMM} implies an algorithm with $n^{O(\log{(n)})}$ queries to find constant approximate equilibria (and ``0'' convex relaxation cost). Finally, Harrow-Natarajan-Wu~\cite{HNW} show that the same result can be obtained via an algorithm with $n^{O(\log{(n)})}$-convex relaxation cost and ``0'' queries to the verification oracle. 

We are now ready to summarize our main hardness results for computing equilibria in this framework.

\paragraph{Summary of Results} 
Our main results establish that the known algorithms for computing equilibria, both exact and approximate, are optimal, up to polynomial factors in our model. Specifically, we show that for two-player games with $n$ strategies and all payoffs in $[-1,1]$, 
\begin{enumerate}
\item for any $\epsilon \leq \Theta(1/n^4)$, there's no algorithm that uses a $2^{o(n)}$-time SoS relaxation to construct a $2^{o(n)}$-size list of candidates to compute an $\epsilon$-approximate $\NE$. 
\item for some constant $\epsilon > 0$, there's no algorithm that uses a $n^{o(\log{(n)})}$-time SoS relaxation to construct a list of size $n^{o(\log{(n)})}$ to compute an $\epsilon$-approximate equilibrium. 
\end{enumerate}
Our results can be seen as unconditional confirmation, in our restricted algorithmic framework, of the recent result of Rubinstein~\cite{Rub}. 

Since ours is a stronger model of computation, our lower-bounds strictly improve upon those of Daskalakis-Papadimitriou~\cite{DP}. Along the way, we also strengthen the lower bound in their more restricted model (see Section~\ref{sec:our-results}). We also obtain a ``gapped'' (approximation hardness) version improving on the classical result of Gilboa-Zemel~\cite{GZ} (see Section~\ref{sec:our-results}).

We give a brief outline of what follows next before the technical sections. In Section~\ref{sec:setup}, we present a formal description of our algorithmic model and state precise versions of our results. We also discuss some of the improvements to previous work that we obtain along the way. In Section~\ref{sec:techoverview}, we give an overview of the main technical ideas required in our proofs. Finally, in Section \ref{sec:related-work}, we briefly discuss important related work. 

\section{Algorithmic Model and Statements of Results} 
\label{sec:setup} 
We first set up the standard terminology for talking about games and equilibria. 

A game $G$ between Alice and Bob with $n$ strategies for both is described by two payoff matrices $R$ and $C$ in $[-1,1]^{n\times n}$,\footnote{Adding dummy strategies to make the number of strategies equal, and normalizing payoffs to lie in $[-1,1]$ is without loss of generality.} 
where Alice plays rows and Bob plays columns. A mixed-strategy (now on called strategy) for either of the players is an element of $\Delta_n = \{x \in [0,1]^n \mid \sum_{i = 1}^n x_i = 1\}$ and encodes a probability distribution on the space of $n$ pure strategies $\{1,\dots,n\}$ denoted by $[n]$ now on.

We write $\e_i  \in \Delta_n$, for the $i^{th}$ pure strategy. When Alice plays $\xx\in \Delta_n$ and Bob $\yy \in \Delta_n$, the expected \emph{payoffs} for the two players are $\xx^{\top} R \yy$ and $\xx^{\top} C \yy$, respectively. 

A Nash Equilibrium ($\NE$) is a strategy pair $(\xx,\yy)$ such that no player gains by deviating unilaterally. That is,
\begin{align}
\xx^{\top} R \yy \geq \e_i^{\top} R \yy  \text{ for all } i \in [n]; \ \ \ \xx^{\top} C \yy \geq \xx^{\top} C \e_j  \text{ for all } j \in [n]. \label{eq:NEchar}
\end{align} 
At an $\epsilon$-approximate NE ($\eps$-$\NE$) no player gains by more than $\eps$ by deviating unilaterally: 
\begin{align}
\xx^{\top} R \yy \geq \e_i^{\top} R \yy -\epsilon  \text{ for all } i \in [n]; \ \ \ \   
\xx^{\top} C \yy \geq \xx^{\top} C \e_j -\epsilon \text{ for all } j \in [n]. \label{eq:approx-NEchar}
\end{align}

\paragraph{Sum-of-Squares Method and \emph{Pseudo-equilibria}}
The sum-of-squares method is a sequence of increasingly tight SDP relaxations, indexed by an integer parameter $d$, for the problem of finding a solution to a system of polynomial inequalities in real valued variables. We provide a brief overview of this method specialized to computing (exact/approximate) equilibria here and point the reader to the lecture notes \cite{sos-notes} for further details on method and its applications.

Central to the sum of squares method is the notion of a pseudo-distribution that generalizes probability distributions. 

\begin{definition}[Pseudo-distribution]\label{def:pe}
A degree $d$ pseudo-distribution is a finitely supported signed measure $\tilde{\mu}$ on $\R^n$ such that the associated linear functional (pseudo-expectation) $\pE$ that maps any function $f:\R^n \rightarrow \R$ to $\pE[f] = \sum_{x: \tilde{\mu}(x) \neq 0} \tilde{\mu}(x) f(x)$ satisfies the following properties:
\begin{enumerate}
\item \textbf{Normalization: } $\pE[1] = 1$ or equivalently, $\sum_{x: \tilde{\mu}(x) \neq 0} \tilde{\mu}(x) = 1,$ and
\item \textbf{Positivity: } $\pE[q^2] \geq 0$ for every degree $\leq d/2$ polynomial $q$ on $\R^n.$
\end{enumerate}
A pseudo-distribution $\tilde{\mu}$ is said to satisfy a polynomial inequality constraint $q \geq 0$ if for every polynomial $p$, $\pE[p^2 q] \geq 0$ whenever $\deg(p^2q) \leq d.$
\end{definition}

It's not hard to show that any degree $\infty$ pseudo-distribution is a probability distribution on $\R^n.$

Since $\pE_{\tmu}$ is a linear operator, it is completely specified by its action on any basis of $\P_d.$ The monomial $x^S = \Pi_{i}x_i^{S_i}$ for any $S \in \Z^n$, $\sum_{i = 1}^n S_i \leq d$ (the total degree) form a basis for $\P_d.$ Let $(1,x) \in \R^{n+1}$ denote the vector with first coordinate $1$ and last $n$ coordinates matching $x \in \R^n.$ Then, the order-$d$ tensor $\pE_{\tmu}[(1,x)^{\otimes d}]$ has entries equal to pseudo-moments of $\tmu$ and thus completely describes the pseudo-expectation $\pE_{\tmu}.$

When $\tmu$ is a probability measure on $\R^n$, the associated pseudo-expectation is an actual expectation operator. Key to the utility of the notion of pseudo-distributions is the following classical fact. 

\begin{fact}[Lasserre, Shor, Parillo, Nesterov \cite{DBLP:journals/siamjo/Lasserre01,MR1778235-Nesterov00,parrilo2000structured,MR1034072-Shor89}] \label{fact:sos-algorithm}
Let $g_1, g_2, \ldots, g_r, h_1, h_2, \ldots, h_q$ be polynomials of degree at most $d$. Then, the convex set 
\[
\C_d = \{ \pE_{\tmu}[(1,x)^{\otimes d}] \mid \tmu \text{: deg $d$, satisfies } g_1 = 0, g_2 = 0, \ldots, g_r = 0, h_1 \geq 0, h_2 \geq 0, \ldots, h_s \geq 0\}
\] has an $n^{O(d)}$-time weak separation oracle in the sense of \cite{MR625550-Grotschel81}. 
\end{fact}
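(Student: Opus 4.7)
The plan is to express $\C_d$ as the intersection of an affine subspace with a small product of positive semidefinite cones in an ambient space of dimension $N = \binom{n+d}{d} = n^{O(d)}$, and then appeal to the standard weak separation oracle for spectrahedra. I would identify a candidate $M \in \R^N$ with a table of numbers $M_\alpha$ indexed by multi-indices $\alpha$ with $|\alpha|\le d$, thought of as proposed values of $\pE_{\tmu}[x^\alpha]$.

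First I would translate each condition in the definition of $\C_d$ into an explicit constraint on $M$. Normalization is the single affine equation $M_0 = 1$; each polynomial equality $g_j = 0$ is equivalent to the family of linear equations $\pE_{\tmu}[x^\beta g_j] = 0$ for $|\beta| + \deg(g_j)\le d$, which, after expanding $x^\beta g_j$ in the monomial basis, read off as explicit linear functionals in $M$. The positivity axiom $\pE_{\tmu}[q^2]\ge 0$ for all $q$ of degree $\le d/2$ is precisely the PSD condition $\mathcal{M}(M)\succeq 0$ on the moment matrix whose $(\alpha,\beta)$-entry is $M_{\alpha+\beta}$; analogously, each inequality $h_i\ge 0$ becomes a PSD condition $\mathcal{M}_{h_i}(M)\succeq 0$ on a localizing matrix whose entries are explicit linear functionals of $M$. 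All of these matrices have size $n^{O(d)}$.

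Given these reformulations, the oracle itself is routine. On input $M$ and a tolerance $\delta > 0$, scan the $n^{O(d)}$ affine equations first; any violation is itself a separating hyperplane. Otherwise compute the minimum eigenvalue and an associated unit eigenvector $v$ of $\mathcal{M}(M)$ and each $\mathcal{M}_{h_i}(M)$ to additive precision $\delta$; if some minimum eigenvalue is below $-\delta$, the map $M' \mapsto v^\top \mathcal{M}(M') v$ (respectively $v^\top \mathcal{M}_{h_i}(M') v$) is a linear functional in $M'$ that is nonnegative on all of $\C_d$ and strictly negative at $M$, and so yields a separating hyperplane. If no violation is detected, declare near-membership. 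Each sub-step runs in time polynomial in $n^{O(d)}$ and $\log(1/\delta)$, matching the claimed $n^{O(d)}$ bound. The only delicate point, and the reason the oracle is only \emph{weak} in the sense of \cite{MR625550-Grotschel81}, is the bit-complexity bookkeeping for eigenvectors of matrices of polynomial bit length, which is handled by standard eigenvalue and eigenvector perturbation bounds.
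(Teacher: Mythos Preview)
The paper does not prove this fact at all; it is stated as a classical result with citations to Lasserre, Shor, Parrilo, and Nesterov, and is used as a black box. Your sketch is the standard argument one finds in those references---reformulate membership as affine constraints plus positive semidefiniteness of the moment matrix and the localizing matrices, then separate via an eigenvector of any violated PSD constraint---and it is correct.
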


\begin{definition}[Sum-of-Squares Algorithm]
The degree $d$ sum-of-squares algorithm takes input a system of polynomial constraints $g_1 = 0, g_2 = 0, \ldots, g_r = 0, h_1 \geq 0, h_2 \geq 0, \ldots, h_s \geq 0$ each of degree at most $d$ and either outputs ``infeasible'' or returns a degree $d$ pseudo-expectation $\pE_{\tmu}$ satisfying Definition \ref{def:pe}. Fact \ref{fact:sos-algorithm} implies that the degree $d$ SoS algorithm runs in time $n^{O(d)}.$
\end{definition}

Pseudo-distributions naturally suggest a relaxation of the notion of equilibrium, which we call, \emph{pseudo-equilibrium}. The advantage is that unlike equilibria, pseudo-equilibria of degree $d$ are \emph{efficiently} computable using Fact \ref{fact:sos-algorithm}. It might be helpful to note that degree $2$ pseudo-equilibria (without the semi-definite constraints) correspond to the well-studied notion of \emph{correlated} equilibria.  

\begin{definition}[Degree $d$ Pseudo-equilibrium]
\label{def:pe}
Given a two player game $(R,C)$, a degree $d$ pseudo-equilibrium for $(R,C)$ is a degree $d$ pseudo-distribution on strategy profiles $(\xx,\yy)$ satisfying the quadratic polynomial inequality constraints in \eqref{eq:NEchar}. A degree $d$, $\epsilon$-approximate pseudo-equilibrium is a degree $d$ pseudo-distribution over strategy profiles $(\xx,\yy)$ satisfying the quadratic polynomial constraints in \eqref{eq:approx-NEchar}.
\end{definition}

\subsection{Rounding Framework for Equilibria} 
In this section, we present our algorithmic framework based on restricted roundings of sum-of-squares relaxations.
While the framework naturally generalizes to any polynomial feasibility problems, we will focus only on (exact) $\NE$ here.

Let $\Psi(G)$ be the system of polynomial inequalities in $\xx,\yy \in \R^n$ parameterized by a game $G$ represented by $R,C \in \R^{n\times n}$ described in \eqref{eq:NEchar}. 

A rounding algorithm takes a solution to the SoS relaxation for $\Psi(G)$ and outputs a candidate equilibrium strategy profile $(\xx^{*}, \yy^{*})$. Before presenting our restricted rounding framework, we formally define what we mean by a rounding algorithm.

\begin{definition}[Rounding Algorithm]
A degree $d$ rounding algorithm for $\Psi(G)$ takes input a game $G$ and a degree $d$ pseudo-expectation satisfying the NE constraints of $G$ and outputs a solution to $\Psi(G)$.
\end{definition}
\begin{remark}
As discussed, this definition captures all algorithms as one could simply ignore the degree $d$ pseudo-expectation. Notice that we place no restriction on the running time of the rounding algorithm at all. 
\end{remark}

Next, we present a meaningful restriction of arbitrary roundings that forces the algorithm to ``use'' the solution to the relaxation. We do this by simply disallowing the access to the game $G$ itself. 

\begin{definition}[Oblivious Rounding]
\label{def:alg-or}
A degree $d$ oblivious rounding algorithm is a rounding algorithm that does not get access to the instance (i.e., game) $G$ itself. The cost of an oblivious rounding algorithm is the degree of the SoS solution it accesses.

\end{definition}
\begin{remark}
Two comments are in order:
\begin{enumerate}
\item Observe that we do not restrict the running time of an oblivious algorithm. In that respect, this model and the resulting lower bounds are information theoretic.
\item This definition was first proposed in~\cite{DBLP:conf/approx/FeigeFT16} for optimization problems (ours is a ``feasibility'' problem as it has no objective) who gave a characterization of problems where oblivious rounding algorithms achieved the underlying integrality gap. 
\end{enumerate}

\end{remark}
While this may appear like a restricted framework, it nonetheless captures several famous algorithms obtained via convex relaxation + rounding paradigm such as those for Max-Cut, Sparsest-Cut, Unique Games etc.

To allow our rounding algorithm more power and capture a longer list of existing algorithms, we allow the oblivious rounding algorithm generate a list of candidate solutions instead of a single one and then check if one of them is indeed a solution using access to a \emph{verification oracle}. This allows us to $(i)$ capture most known rounding algorithms mentioned above with the exception of techniques such as iterative rounding $(ii)$ provide our model the ability to simulate ``enumeration'' over a restricted search space - a widely used technique in computing equilibria for various classes of games.

To formalize this model, we will allow the algorithm to access the game through a \emph{verification oracle} in addition to a degree $d$ pseudo-equilibrium.

\begin{definition}[Verification Oracle]
\label{or1}
A verification oracle for an instance (i.e., game) $G$ takes input a candidate solution (i.e., a strategy profile $(\xx,\yy)$) and correctly outputs ``accept'' if the candidate is a true solution (i.e, an equilibrium) and ``reject'' otherwise.
\end{definition}

We can now augment the oblivious rounding algorithms with a verification oracle access to the underlying instance (game) $G$.
\begin{definition}[Oblivious Rounding with Verification Oracle $\OV$]
\label{def:alg-ov}
A degree $d$, $q$-query \emph{oblivious rounding} algorithm \emph{with verification oracle} is a degree $d$ oblivious rounding algorithm that, in addition, accessed a verification oracle for the underlying instance $G$ at most $q$ times.  
\end{definition}
\begin{remark}
Three comments are in order, again. 
\begin{enumerate}
\item As in the case of oblivious rounding algorithms, note that we do not restrict/measure the running time of the algorithm.
\item Observe that the model allows ``adaptivity'' - the candidate solutions can be generated after looking at the replies of the verification oracle for previous candidates. 
\item This model strictly strengthens the ``Oblivious algorithms" model studied in~\cite{DP}. Oblivious algorithms, in the sense of ~\cite{DP} are simply degree $0$ $\OV$ algorithms. That is, $\OV$ algorithms that get access only to the verification oracle and neither the game nor the pseudo-equilibrium. 
\end{enumerate}
\end{remark}

Finally, for the special case of equilibria, we can allow an even stronger access to the underlying game through a more powerful oracle. Observe that if we fix one of $\xx$ or $\yy$ in \eqref{eq:NEchar} and \eqref{eq:approx-NEchar}, then the resulting system of constraints is linear in the other variables. Thus, one can check feasibility of this resulting system in polynomial time. 

Motivated by this, we can define an oracle that takes input a ``partial strategy'' - either one of $\xx$ or $\yy$ - and check if there's an equilibrium that agrees with the given partial strategy. This definition, unlike all others, is not meaningful for arbitrary polynomial systems $\Psi(G)$.

\begin{definition}[Partial Verification Oracle]
\label{or2}
A partial verification oracle for a game $G$ takes input one of $\xx,\yy$ and outputs ``accept'' if there's an equilibrium for $G$ that agrees with the input and ``reject'' otherwise.
\end{definition}
\begin{remark}
It is important to observe that a partial verification oracle is a \emph{stronger} access to the underlying game - namely, it allows checking if there's an equilibrium matching only one of the player's strategies. Indeed, as we will note in the discussion of our results, the lower bound of Daskalakis-Papadimitriou does \emph{not} hold for algorithms that have this stronger access to game via a Partial verification oracle. 

As a by product of our proof, we obtain a strengthening of their result that holds against oblivious algorithms that get access to the stronger partial verification oracle. 
\end{remark}

\subsection{Our Results} \label{sec:our-results}

Our main results are optimal lower bounds for $\OV$ roundings of Sum-of-Squares for computing exact and approximate equilibria. Throughout this section, we refer to a two-player game with $n$ strategies each and all payoffs bounded in $[-1,1]$ by just a ``game''.

Our result for computing exact equilibria is actually robust and allows a certain small but non-trivial relaxation of the notion of equilibria. More precisely, we show that:

\begin{theorem}[Exponential Lower Bound for $\NE$] \label{thm:main-exact}
Suppose there is degree $d$, $q$ query $\OV$ rounding for finding $\epsilon = \Theta(1/n^4)$-approximate equilibria in games. Then, either $d = \Omega(n)$ or $q = 2^{\Omega(n)}.$ Further, the same results holds for $\OV$ algorithms with the stronger partial verification oracle. 
\end{theorem}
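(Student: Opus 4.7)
My plan follows the blueprint sketched in the abstract: construct a family of games $\cF = \{G_S\}_{S \in \cS}$ with $|\cS| = 2^{\Omega(n)}$ that all share a common degree-$d$ pseudo-equilibrium $\pE$ for some $d = \Omega(n)$, yet whose $\epsilon$-approximate Nash equilibria are pairwise far in strategy space. A covering argument against the oblivious-rounding algorithm then immediately gives the claimed lower bound on $q$.

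The first step is to design $\cF$. I index $\cS$ by some combinatorial class of size $2^{\Omega(n)}$ (e.g.\ balanced subsets of $[n]$, or codewords of a constant-rate, linear-distance binary code). For each key $S \in \cS$, I build $G_S$ by starting from a ``skeleton'' zero-sum/win--lose game and planting a small perturbation of magnitude $\poly(1/n)$ (well above the target slack $\epsilon = \Theta(1/n^4)$) that forces any $\epsilon$-approximate equilibrium of $G_S$ to concentrate near a canonical strategy profile $(\xx_S,\yy_S) \in \Delta_n \times \Delta_n$ determined by $S$. Choosing the planting so that \emph{both} marginals $\xx_S$ and $\yy_S$ encode $S$ yields the far-apart property: for $S\neq T$ in the code, any $(\xx,\yy)$ approximately equilibrating $G_S$ and any $(\xx',\yy')$ approximately equilibrating $G_T$ satisfy $\normo{\xx-\xx'}, \normo{\yy-\yy'} = \Omega(1)$. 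The fact that each marginal alone recovers $S$ is exactly what is needed to extend the lower bound to the stronger partial verification oracle.

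The second, and more technical, step is to build a single degree-$\Omega(n)$ pseudo-distribution $\tmu$ on $\Delta_n \times \Delta_n$ whose pseudo-expectation $\pE$ simultaneously satisfies the quadratic inequality constraints of~\eqref{eq:approx-NEchar} for \emph{every} $G_S$ in $\cF$. I plan to use pseudo-calibration: specify $\pE$ by declaring its low-degree pseudo-moments to equal the true moments of a ``planting'' distribution that first samples $S$ uniformly from $\cS$ and then outputs the planted equilibrium $(\xx_S,\yy_S)$. Under this planting each realization is an honest equilibrium of the corresponding $G_S$, so the quadratic NE inequalities hold pointwise in the planted world. A moment-matching calculation then transfers the feasibility of the NE inequalities for each fixed $G_S$ to $\pE$, provided $d$ is small enough that a degree-$d$ test polynomial cannot distinguish the planted distribution from a structure-free ``null'' distribution. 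The remaining positivity constraint is proved via a spectral-norm bound on the moment matrix in the style of standard planted SoS lower bounds.

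With the shared $\pE$ in hand, the conclusion is clean: any degree-$d$, $q$-query $\OV$ rounding invoked on $\pE$ produces the same distribution over (adaptively chosen) candidate lists regardless of which $G_S \in \cF$ it is secretly facing. To succeed against every $G_S$, its list must contain an $\epsilon$-approximate equilibrium of $G_S$ for every $S$. By the far-apart property, each candidate (and, under the partial oracle, each candidate partial strategy) can certify only $O(1)$ games, so $q \geq |\cS|/O(1) = 2^{\Omega(n)}$ as long as the pseudo-calibration succeeded at $d = \Omega(n)$, giving the desired dichotomy.

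The main obstacle I expect is the second step: constructing $\tmu$ as a \emph{valid} degree-$\Omega(n)$ pseudo-distribution that satisfies all the $G_S$-dependent NE inequalities at once. A uniform mixture of planted distributions trivially gives feasibility at degree $2$ (essentially a correlated-equilibrium-like object), but boosting this to degree $\Omega(n)$ requires a careful pseudo-calibration together with a concentration argument to bound the spectral norm of the moment matrix and establish PSD-ness through degree $d$. A secondary delicate point is the interplay between $\epsilon = \Theta(1/n^4)$ and the size of the planting: the perturbation must be large enough to pin down a unique $\epsilon$-equilibrium per $G_S$, yet small enough for the pseudo-calibrated moments to remain matched to the null moment profile through degree $d$.
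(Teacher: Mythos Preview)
Your high-level blueprint (build a large family sharing a pseudo-equilibrium, with pairwise far-apart approximate equilibria, then count) is exactly right and matches the paper. The serious gap is in your second step: the pseudo-calibration plan, as stated, cannot produce a pseudo-distribution that satisfies the NE constraints of \emph{every} $G_S$ simultaneously. The NE constraints for a fixed $G_S$ are inequalities in the \emph{fixed} payoff matrices $(R_S,C_S)$, e.g.\ $\xx^\top R_S \yy \ge e_i^\top R_S \yy$. Your planting distribution samples $S'$ uniformly and outputs $(\xx_{S'},\yy_{S'})$, which is an equilibrium of $G_{S'}$, not of $G_S$; so for $S'\neq S$ these inequalities fail pointwise, and no amount of moment-matching to a null will fix that. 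What you would get from the mixture is something like a correlated object that is a pseudo-equilibrium ``on average over $S$,'' not a single $\pE$ that is a pseudo-equilibrium of each $G_S$.

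The paper sidesteps this obstruction by an architectural trick rather than pseudo-calibration. Each $G_S$ is a block game built from two pieces: a \emph{common} $N\times N$ sub-game $(R,C)$ (the same for all $S$) and a game-specific $K\times K$ sub-game $(R_S,C_S)$, glued with suitable off-diagonal blocks. The shared pseudo-equilibrium is supported entirely on the first $N$ strategies, i.e.\ inside the common block $(R,C)$; it is obtained not by pseudo-calibration but by importing, via a low-degree reduction (Fact~\ref{fact:reductions-within-SoS}), a known degree-$\Omega(n)$ SoS lower bound for $k$-independent-set into a pseudo-equilibrium of $(R,C)$ with high payoff. Because the pseudo-equilibrium has zero mass on the last $K$ strategies and the off-diagonal blocks are set so that deviating there is never profitable, the NE constraints of every $G_S$ are satisfied automatically, regardless of $S$. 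The family-specific part $(R_S,C_S)$ (generalized matching pennies on $S$) then forces all true $\Theta(1/n^4)$-approximate equilibria of $G_S$ to concentrate near $\ud{K,S}$ in \emph{both} coordinates, giving the far-apart property you need for the partial verification oracle. In short: the shared $\pE$ comes from a common sub-game plus a black-box SoS lower bound, not from calibrating over the index $S$.
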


This follows from a construction of a family of games with properties captured in the following theorem:
\begin{theorem} [Exponentially large hard family]
For every $n$ large enough, there's a family of $\Gamma = 2^{\Theta(n)}$ games $\{G_i=(A_i,B_i)\}_{i = 1}^{\Gamma}$ with \alert{$n$} pure strategies for both players and all payoffs in $[-1,1]$ such that:

\noindent$(1.)$ \textbf{Completeness:} There exists a degree $\Theta(n)$, shared pseudo-equilibrium for every $G_i$ simultaneously. 

\noindent$(2.)$ \textbf{Soundness:} For any $i\neq j$ if $(\xx,\yy)$ and $(\xx',\yy')$ are \alert{$\Omega(1/n^4)$-$\NE$} of games $G_i$ and $G_j$ respectively then $\xx\neq \xx'$ and $\yy \neq \yy'$. 
\label{athm:exp-lower-bound-technical-intro}
\end{theorem}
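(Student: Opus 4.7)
The plan is to exhibit a family indexed by the codewords of a binary error-correcting code of length $n$ with rate $\Omega(1)$ and minimum distance $\Omega(n)$, so that $\Gamma = 2^{\Theta(n)}$ and any two codewords are pairwise Hamming-far. To each codeword $s \in \{0,1\}^n$ I would associate a support set $S_s = \{i : s_i = 1\}$ of size $\Theta(n)$ and design payoff matrices $(A_s, B_s) \in [-1,1]^{n \times n}$ so that every $\Omega(1/n^4)$-approximate $\NE$ of $G_s$ is essentially supported inside $S_s \times S_s$.

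For soundness, I would use a \emph{dominance margin} gadget: start from a common benign base game on $[n]$ (e.g.\ a scaled coordination or matching-style game whose natural equilibrium is uniform on $[n]$) and add a small perturbation of spectral norm $O(1/n^3)$ that makes every pure strategy outside $S_s$ strictly dominated in $G_s$ by a specific mixture over $S_s$ by a margin of $\Omega(1/n^4)$. A direct argument from \eqref{eq:approx-NEchar} then forces both players' strategies in any $\Omega(1/n^4)$-$\NE$ of $G_s$ to place $1-o(1)$ mass on $S_s$. Since $|S_s \triangle S_t| = \Omega(n)$ for $s \neq t$, the resulting strategy vectors differ pointwise in the $S_s \triangle S_t$ coordinates, giving both $\xx \neq \xx'$ and $\yy \neq \yy'$.

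For completeness, the heart of the argument is producing a single degree-$d$ pseudo-distribution $\tilde{\mu}$ on $\Delta_n \times \Delta_n$ with $d = \Theta(n)$ that satisfies the exact $\NE$ polynomial system \eqref{eq:NEchar} of every $G_s$ simultaneously. My plan is to define $\tilde{\mu}$ via the \emph{pseudocalibration} approach against the planted ensemble $(s, (\xx_s, \yy_s))$, where $s$ is drawn uniformly from the code and $(\xx_s,\yy_s)$ is a true equilibrium of $G_s$, keeping only Fourier characters of degree at most $d$ in $s$. If the code is chosen so that all nonconstant characters of order $\leq d$ vanish on it, the truncated moment tensor becomes independent of $s$, and is therefore a legitimate shared pseudo-solution. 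The $\NE$ inequality $\pE[(\xx^\top R_s \yy - \e_i^\top R_s \yy) \cdot p^2] \geq 0$ for every $p$ of degree $\leq (d-2)/2$ then follows by unpacking this formula, since the planted distribution on $(\xx_s, \yy_s)$ given $s$ satisfies the $\NE$ inequalities exactly.

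The main obstacle will be verifying the positivity condition $\pE[q^2] \geq 0$ of Definition~\ref{def:pe} for every polynomial $q$ of degree $\leq d/2$. This reduces to lower-bounding the minimum eigenvalue of a structured moment matrix of size roughly $n^{O(d)}$. I would attack it by decomposing the matrix into isotypic components under the symmetry group of the base game and then bounding each block either by matrix concentration over a random codeword $s$ or by an explicit representation-theoretic PSD certificate. The calibration between the perturbation magnitude $\Theta(1/n^3)$, the approximation threshold $\Omega(1/n^4)$, and the degree $d = \Theta(n)$ is precisely what fixes the parameters in the statement being proved.
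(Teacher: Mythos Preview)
Your completeness argument has a genuine gap. You want a single pseudo-distribution $\tilde{\mu}$ that satisfies the $\NE$ constraints of \emph{every} $G_s$ simultaneously, and you propose to build it by pseudocalibrating against the planted ensemble $(s,(\xx_s,\yy_s))$ and then observing that the truncated moments become independent of $s$. But the constraint you must verify for a \emph{fixed} game $G_{s^*}$ is $\pE\bigl[(\xx^\top A_{s^*}\yy - \e_i^\top A_{s^*}\yy)\,p^2\bigr]\ge 0$, which involves the payoff matrix $A_{s^*}$ explicitly. In your planted ensemble, when $s\neq s^*$ the pair $(\xx_s,\yy_s)$ is an equilibrium of $G_s$, not of $G_{s^*}$; by your own soundness design it is far from every equilibrium of $G_{s^*}$ and need not satisfy its $\NE$ inequalities at all. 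So the average over $s$, truncated or not, has no reason to satisfy the $\NE$ system of any individual $G_{s^*}$, and ``unpacking the formula'' does not close this. There is a parallel tension on the soundness side: a perturbation of spectral norm $O(1/n^3)$ to a common base game leaves all the $G_s$ within $O(1/n^3)$ of one another, which is too close for their $\Omega(1/n^4)$-approximate equilibria to be pairwise disjoint; small perturbations of a game move approximate equilibria only slightly, they do not create domination margins of the same order.

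The paper resolves both issues with a block construction rather than a perturbation. Each game is an $(N+K)\times(N+K)$ game whose top-left $N\times N$ block is a \emph{single fixed} game $(R,C)$ independent of $s$, and whose bottom-right $K\times K$ block encodes $s$. The shared pseudo-equilibrium is obtained by importing a known degree-$\Theta(n)$ SoS lower bound for independent set through a degree-one reduction (Fact~\ref{fact:reductions-within-SoS}) to get a high-payoff pseudo-equilibrium of $(R,C)$, then padding with zeros on the last $K$ coordinates; since the top-left block and the off-diagonal constants are identical across all games in the family, this object is automatically a pseudo-equilibrium for every $G_s$, with positivity inherited for free and no pseudocalibration or eigenvalue analysis required. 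Soundness comes from choosing the off-diagonal blocks so that every true $\eps$-$\NE$ is forced entirely into the bottom-right block, which is a generalized matching-pennies game on $S_s$ whose approximate equilibria concentrate near $\ud{K,S_s}$. The structural idea you are missing is precisely this separation: the shared pseudo-equilibrium and the true equilibria live on \emph{disjoint} strategy sets, one common to the whole family and one varying with $s$.
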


To see why this implies Theorem~\ref{thm:main-exact}, choose an input game $G$ uniformly at random from the family described by the Theorem~\ref{athm:exp-lower-bound-technical-intro}. Then, any (potentially randomized) algorithm that succeeds with $2/3$ probabilty in computing an $\epsilon$-$\NE$ in $G$ requires either $d = \Omega(n)$ or $q = 2^{\Omega(n)}$.
This is because, using property (2) above, such an algorithm must uniquely and correctly identify the input game with probability at least $2/3$. The SoS relaxation can just output the shared pseudo-equilibrium giving zero information about the input game. Further, each ``reject'' answer from the (partial) verification oracle rules out exactly one possibility. Thus, the $\OV$ algorithm must use $2^{\Omega(n)}$ queries.

In establishing this result, we prove a ``gapped'' version of the classical hardness reduction of Gilboa-Zemel from $k$-clique to finding \emph{well-supported} equilibria with payoff at leaset $\delta>0$. While in a Nash equilibrium, each player's strategy must put a non-zero probability only on the pure strategies with maximum payoffs, in an $\epsilon$-well-supported $\NE$, the support of any player's strategy can contain pure strategies that have an $\epsilon$ additively smaller payoff compared to the maximum (see Section~\ref{sec:techoverview} for a formal definition). Well-supported equilibria are a weaker solution concept than equilibria and thus, this is technically stronger than one that shows a gapped hardness for finding $\NE$. 

\begin{theorem}
There's a polynomial time reduction that takes input a graph $H$ on $n$ vertices and parameters $k\le n, \epsilon = \Omega(1/n^2)$ and produces a two-player game $G$ with $O(n)$ strategies for each player with all payoffs in $[-1,1]$ such that (i) if $H$ has a $k$-clique there exists a $\NE$ for $G$ with payoff $\delta>0$, and $(ii)$ if $H$ has no $k$-Clique then all $\epsilon$-well-supported $\NE$ of $G$ have payoffs at most $(\delta - \eps)$. 

\end{theorem}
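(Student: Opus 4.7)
My plan is to adapt the classical Gilboa--Zemel reduction from $k$-clique based on the Motzkin--Straus quadratic, augmenting with a small ``imitation'' gadget to force near-symmetric equilibria, and then to invoke the Motzkin--Straus bound for soundness. Let $A$ be the adjacency matrix of $H$ (with $A_{ii}=0$), and let $G$ be the symmetric bimatrix game on the strategy set $[n]$ with payoff matrices $R=C=A$, augmented with $O(n)$ auxiliary ``imitation'' strategies per player designed to penalize row/column mismatch on $[n]$ (for concreteness, a small matching-pennies-style gadget on the auxiliary strategies, parametrized so as to be strictly dominated at any symmetric profile). Rescale so all payoffs lie in $[-1,1]$ and set the completeness target to $\delta := 1 - 1/k$.

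\textbf{Completeness.} Given a $k$-clique $C\subseteq V(H)$, extend it to a maximal clique $C'$ of size at least $k$ and let $\xx=\yy=\mathbf{1}_{C'}/|C'|$. Each $i\in C'$ is a best response because $(A\yy)_i = (|C'|-1)/|C'| \geq 1-1/k = \delta$; each $i\notin C'$ scores strictly less by maximality of $C'$; and the gadget is calibrated so the auxiliary strategies are strictly dominated at $(\xx,\yy)$. Hence $(\xx,\yy)$ is an exact $\NE$ with payoff $\geq \delta$.

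\textbf{Soundness.} Let $(\xx,\yy)$ be an $\epsilon$-well-supported $\NE$ of $G$ with payoff $v$. The imitation gadget forces both players to place $O(\epsilon)$ mass on auxiliary strategies and $\|\tilde\xx - \tilde\yy\|_1 = O(\epsilon)$, where $\tilde\xx,\tilde\yy \in \Delta_n$ are the restrictions of $\xx,\yy$ to $[n]$, renormalized. Setting $z := (\tilde\xx+\tilde\yy)/2 \in \Delta_n$, the elementary bound $|u^\top A u - u^\top A w|\leq \|u-w\|_1\cdot\|A\|_{\max}$ gives $z^\top A z = \tilde\xx^\top A \tilde\yy + O(\epsilon) = v + O(\epsilon)$. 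Applying the classical Motzkin--Straus theorem to $z$ now yields $v - O(\epsilon) \leq z^\top A z \leq 1 - 1/\omega(H)$, i.e., $\omega(H) \geq 1/(1-v+O(\epsilon))$.

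Combining the steps: if $H$ has no $k$-clique then $\omega(H)\leq k-1$, so $v \leq 1 - 1/(k-1) + O(\epsilon) = \delta - 1/(k(k-1)) + O(\epsilon)$. Choosing $\epsilon = c/n^2$ for a sufficiently small constant $c$ makes this $\leq \delta - \epsilon$, giving the required gap. The main obstacle is designing the imitation gadget: it must be (i) strictly dominated at the symmetric clique equilibrium witnessing completeness so that the payoff stays exactly $\delta$, yet (ii) ``active enough'' that any $\epsilon$-well-supported NE has $\tilde\xx$ and $\tilde\yy$ agreeing up to $O(\epsilon)$ in total variation. A zero-sum matching-pennies-like construction over the auxiliary coordinates, with appropriately scaled payoffs (on the same order as the Motzkin--Straus gap itself), is the natural candidate; the delicate point is verifying that the quantitative ``$O(\epsilon)$ TV'' conclusion of (ii) tracks the well-supported slack linearly so that the symmetrization in the soundness step loses only $O(\epsilon)$ in the objective $z^\top A z$.
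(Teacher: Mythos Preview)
Your proposal has a genuine gap: the ``imitation gadget'' is the crux of the argument, and you have not constructed it. You explicitly flag this as ``the main obstacle'' and then leave it unresolved, offering only that a matching-pennies-style block is ``the natural candidate.'' But matching pennies forces \emph{uniformity} of a player's distribution, not agreement $\tilde\xx\approx\tilde\yy$ between the two players, so it is not clear it does what you need. More fundamentally, the two requirements you place on the gadget are in direct tension: if the auxiliary strategies are strictly dominated at the symmetric clique profile (needed for completeness), they contribute nothing to best-response sets there; yet for soundness you need them to be \emph{active enough} at every asymmetric $\epsilon$-\WNE to pull $\tilde\xx$ and $\tilde\yy$ together to within $O(\epsilon)$ in $\ell_1$. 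You give no mechanism for this, and without it the soundness collapses: e.g.\ on the complete bipartite graph $K_{n/2,n/2}$ one has $\omega(H)=2$, but in the bare game $(A,A)$ the profile (uniform on one side, uniform on the other) is an exact $\NE$ with payoff $1$, far above $1-1/\omega(H)=1/2$. The symmetrization gadget is therefore not a detail to be filled in later---it is the entire content of the reduction.

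The paper takes a different route that sidesteps symmetrization altogether. It works with independent set (equivalently, clique on the complement) and builds an explicit $(2n{+}1)\times(2n{+}1)$ game with a specific block structure: the upper-left block is $\mathbf{1}-E+\gamma I$, off-diagonal blocks carry large negative and $(k{+}\gamma)$-scaled identity matrices, and there is a single extra ``threat'' row/column with constant payoff $1+\gamma/k$. Completeness is immediate (uniform on the independent set is an exact $\NE$ with payoff $1+\gamma/k$). Soundness is proved directly by a case analysis on supports: the threat strategy guarantees every played row must earn at least $1+\gamma/k-\epsilon$, which forces $\Supp(\xx)\cap[n]=\Supp(\yy)\cap[n]$, bounds its size below $k$, and then uses the $B'=(k{+}\gamma)I$ block to derive a contradiction unless both players sit on the threat strategy alone. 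No Motzkin--Straus and no symmetrization gadget are needed; the work is in the explicit payoff design and the support-chasing argument.
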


Our second main result is a quasi-polynomial lower bound on $\OV$ algorithms for finding constant-approximate equilibria.
\begin{theorem} \label{thm:main-approx}
Suppose there exists a degree $d$, $q$-query algorithm for computing $\epsilon$-approximate equilibria in games for some small enough $\epsilon = \Theta(1)$. Then, either $d = \Omega(\log{(n)})$ or $q = n^{\Omega(\log{(n)}}.$ Further, the same result continues to hold for $\OV$ algorithms that use the stronger partial verification oracle.
\end{theorem}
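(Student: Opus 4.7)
The plan is to follow the same template as the proof of Theorem~\ref{thm:main-exact}, only calibrated to the quasi-polynomial regime. I would aim to establish an analog of Theorem~\ref{athm:exp-lower-bound-technical-intro}: a family $\{G_i\}_{i=1}^{\Gamma}$ of $\Gamma = n^{\Theta(\log n)}$ two-player games in which (i) a single degree-$\Theta(\log n)$ pseudo-equilibrium is simultaneously valid for every $G_i$, and (ii) for all $i \neq j$, every constant-$\epsilon$-$\NE$ of $G_i$ differs in both the $\xx$- and $\yy$-coordinate from every constant-$\epsilon$-$\NE$ of $G_j$. The ``random game'' information-theoretic argument used to deduce Theorem~\ref{thm:main-exact} from Theorem~\ref{athm:exp-lower-bound-technical-intro} then transfers verbatim: the pseudo-equilibrium carries no information about the random index $i$, and each call to the verification (or partial-verification) oracle rules out at most one game, so identifying $i$ with constant probability requires $d = \Omega(\log n)$ or $q = n^{\Omega(\log n)}$.

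For the family itself, the natural target is an underlying combinatorial problem whose SoS complexity is $\Theta(\log n)$ and whose ``planted'' versions number $n^{\Theta(\log n)}$. A canonical choice is $k$-clique with $k = \Theta(\log n)$: the number of potential planted cliques is $\binom{n}{k} = n^{\Theta(\log n)}$, matching the desired family size, and on a suitable random base graph the sum-of-squares hierarchy provably requires degree $\Theta(k) = \Theta(\log n)$ to rule out the existence of a $k$-clique. For each $k$-subset $S \subseteq [n]$, plant the clique on $S$ to obtain a graph $H_S$, and apply the gapped Gilboa--Zemel-style reduction already used in Section~3 to convert $H_S$ into a game $G_S$.

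For the shared pseudo-equilibrium, I would exploit the symmetry of the base graph: the degree-$\Theta(\log n)$ pseudo-distribution witnessing the $k$-clique SoS lower bound on the \emph{unplanted} graph, suitably pushed forward through the reduction, yields a pseudo-equilibrium whose defining constraints are invariant under relabelling of the clique vertices. Consequently, the same object serves as a valid degree-$\Theta(\log n)$ pseudo-equilibrium for every $G_S$ simultaneously. The strong-disjointness property (ii) follows, for very small $\epsilon$, from the same gapped Gilboa--Zemel analysis that underpins the exact-case theorem: each approximate equilibrium of $G_S$ robustly encodes the planted set $S$.

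The main obstacle, and the most delicate piece of the argument, is establishing soundness (ii) at a \emph{constant} $\epsilon$ rather than at $\epsilon = O(1/\poly(n))$ as in Theorem~\ref{thm:main-exact}. The raw Gilboa--Zemel reduction only encodes the clique robustly up to inverse-polynomial perturbations, so a gap-amplification step is required. I would attempt this either via a product/parallel-repetition-style composition that boosts the payoff gap while preserving support structure, or by redesigning the payoff gadgets so that any constant-$\epsilon$-$\NE$ is forced to concentrate its support essentially on the planted clique. The delicate balancing act is to (a) amplify the gap to a constant, (b) keep the number of strategies polynomial in $n$, and (c) preserve a shared degree-$\Theta(\log n)$ pseudo-equilibrium across the amplified family---the last requirement insisting that the amplification not destroy the underlying low-degree SoS lower bound. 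Reconciling these three constraints simultaneously is where I expect the bulk of the technical work to lie.
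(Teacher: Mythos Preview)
Your high-level template (build a family of games with a shared pseudo-equilibrium and pairwise-disjoint approximate-$\NE$ sets, then run the information-theoretic argument) is exactly right, and matches what the paper does. However, your concrete plan for constructing the family departs significantly from the paper's and runs into real obstacles.

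\textbf{Architecture.} The paper does \emph{not} try to make one combinatorial problem do double duty. Instead it builds two separate objects and stitches them together via a block construction (Theorem~\ref{thm:absFamily}): (a) a \emph{single} $(\eps,\delta)$-\nice game $(R,C)$ that is SoS-hard for maximizing social welfare at constant $\eps$, and (b) a family of $(\eps,\tau,c)$-\good games $\{(R_S,C_S)\}_S$ whose $\eps$-\WNE are concentrated near $\ud{S}$ for constant $\eps$. The shared pseudo-equilibrium lives entirely on the $(R,C)$ block, so it is automatically common to every stitched game---no symmetry argument is needed. The soundness comes from the \good block, which is built by extending the Daskalakis--Papadimitriou construction so that \emph{both} players' strategies encode $S$ (their original construction only controls the row player). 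This modularity is precisely what lets the paper get constant $\eps$ without any amplification step.

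\textbf{Problems with your monolithic planted-clique plan.} First, the paper explicitly discusses (Section~\ref{sec:techoverview}) why the Hazan--Krauthgamer planted-clique reduction does not currently yield a \nice game: the known SoS lower bound for planted clique is for the \emph{optimization} program without a fixed-size constraint, whereas the map ``clique $\mapsto$ uniform distribution on the clique'' is low-degree only when the clique size is a hard constraint. The paper sidesteps this by using the Deligkas--Fearnley--Savani reduction from 3SAT via free games, where Grigoriev's theorem supplies the SoS lower bound and the solution map is genuinely low-degree. Second, your argument for a shared pseudo-equilibrium is shaky: the games $G_S$ have different payoff matrices (planting on $S$ changes the graph, hence the $\NE$ constraints), so a pseudo-distribution tailored to the unplanted graph does not obviously satisfy the $\NE$ constraints of each $G_S$; ``invariance under relabelling'' does not bridge this, since the constraint systems themselves differ. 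Third, you correctly flag the constant-$\eps$ soundness of the Gilboa--Zemel reduction as the crux, but your proposed fixes (parallel repetition, gadget redesign) are speculative and would have to simultaneously preserve polynomial size, a constant gap, and the shared low-degree pseudo-equilibrium---the paper avoids this tension entirely by decoupling the SoS-hard piece from the enumeration-hard piece.
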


As above, we obtain this theorem by giving a construction of a hard family of games:
\begin{theorem} [Quasi-polynomial Hardness for $O(1)$-$\NE$]
For every $n$ large enough, there's a family of $\Gamma = n^{\Omega(\log{(n)})}$ two-player games $\{G_i=(R_i,C_i)\}_{i = 1}^{\Gamma}$ with $n$ strategies and all payoffs in $[-1,1]$ s.t.:

\noindent$(1.)$ \textbf{Completeness:} All $G_i$'s share a common degree $\Theta(\log{(n)})$ pseudo-equilibrium.

\noindent $(2.)$ \textbf{Soundness:} There exists an $\epsilon=O(1)$, such that for any $i,j$ and any pair of $\epsilon$-$\NE$ in $G_i$ and $G_j$, say $(\xx,\yy)$ and $(\xx',\yy')$ respectively, $\xx \neq \xx'$ and $\yy\neq \yy'$. 
\label{athm:approx-nash-hardness-technical}
\end{theorem}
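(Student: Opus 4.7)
The plan is to follow the same template as Theorem~\ref{athm:exp-lower-bound-technical-intro} but with the combinatorial parameters rescaled so that the hard object has size $k=\Theta(\log n)$ instead of $\Theta(n)$. This produces $\binom{n}{k}=n^{\Omega(\log n)}$ games rather than $2^{\Omega(n)}$, matching the LMM upper bound. Concretely, I would index the family by $k$-element subsets $S\subseteq[n]$ and, for each such $S$, build a game $G_S=(R_S,C_S)$ using a Gilboa--Zemel-style gadget with a ``clique'' planted on $S$. The payoffs are normalized so that placing mass outside $S$ (or substantially non-uniform mass inside $S$) costs a large \emph{constant} in payoff; this gap amplification is exactly what distinguishes this case from Theorem~\ref{thm:main-exact}, where we only needed a $1/\poly(n)$ gap.

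For \textbf{soundness}, I would show that in $G_S$ every $\epsilon$-$\NE$ $(\mathbf{x},\mathbf{y})$ must place $1-O(\epsilon)$ of its mass on $S$ for some small absolute constant $\epsilon>0$. This is a robust version of the well-supported analysis used in the exact case, with the key change that the $1/n^4$ gap is replaced by a constant gap obtainable by inflating the Gilboa--Zemel payoff differences (and renormalizing to keep payoffs in $[-1,1]$). Distinct subsets $S\neq T$ differ in at least one coordinate, so any $\epsilon$-$\NE$ of $G_S$ and any $\epsilon$-$\NE$ of $G_T$ must assign that coordinate substantially different probabilities for \emph{both} players, yielding $\mathbf{x}\neq \mathbf{x}'$ and $\mathbf{y}\neq \mathbf{y}'$ as required.

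For \textbf{completeness}, I would produce a single degree-$\Theta(\log n)$ pseudo-distribution $\pE$ satisfying the $\NE$ constraints of every $G_S$ simultaneously. The natural candidate is the $S_n$-symmetrization of an honest equilibrium of the averaged game $\bar G=\tfrac{1}{\binom{n}{k}}\sum_S G_S$: concretely, $\pE$ is obtained by averaging over a uniformly random relabeling $\pi\in S_n$ of an equilibrium of $\pi\cdot G_S$ for any fixed $S$. Because any monomial in $(\mathbf{x},\mathbf{y})$ of total degree at most $k$ touches at most $k$ pure-strategy coordinates, and any such $k$-subset of coordinates is treated symmetrically by the averaging, the degree-$k$ pseudo-moments of $\pE$ coincide across all $G_S$, so satisfying the constraints of $\bar G$ is the same as satisfying the constraints of each $G_S$.

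The main obstacle I anticipate is the \textbf{completeness} step: verifying that the symmetrized pseudo-distribution actually satisfies the SoS positivity constraints $\pE[p^2 q]\ge 0$ for every constraint polynomial $q$ arising from $G_S$, not only for the averaged constraint. The idea is that when $\deg(p^2q)\le k$, the polynomial $p^2q$ involves at most $k$ coordinates, and up to $S_n$-symmetry the collection of such $(p,q)$ pairs is identical across all $G_S$; hence positivity under $\pE$ reduces to positivity under the honest equilibrium expectation for $\bar G$. Executing this cleanly requires designing the gadget so that the $\NE$ constraints of $G_S$ decompose, under $S_n$-averaging, into identities that already hold for a true equilibrium of $\bar G$, while simultaneously keeping all payoffs in $[-1,1]$ and the soundness gap a fixed constant. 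Once this is in place, a Yao-style argument — uniformly random game from the family, constant-probability success, and each (possibly partial) verification query ruling out at most one game — immediately yields $d=\Omega(\log n)$ or $q=n^{\Omega(\log n)}$.
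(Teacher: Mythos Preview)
Your completeness argument has a genuine gap. The equilibrium constraint $q_{S,i}(\mathbf{x},\mathbf{y})=\mathbf{x}^\top R_S\mathbf{y}-e_i^\top R_S\mathbf{y}$ is a degree-$2$ polynomial that involves \emph{all} coordinates of $\mathbf{x}$ and $\mathbf{y}$, so the claim that ``$p^2q$ involves at most $k$ coordinates when $\deg(p^2q)\le k$'' is simply false. Consequently the $S_n$-symmetrization idea does not reduce the constraint $\pE[p^2 q_{S,i}]\ge 0$ to a low-coordinate statement. More concretely, your $\pE$ is an honest distribution (a uniform mixture over equilibria of the $G_{\pi(S_0)}$'s), and a sample drawn from it is almost surely an equilibrium of some $G_{S'}$ with $S'\neq S$; by your own soundness this point violates the $\NE$ constraints of $G_S$ by a constant margin. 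Thus $\pE[p^2 q_{S,i}]<0$ for suitable $p$ already at degree $O(1)$. Averaging over permutations produces a pseudo-distribution satisfying the constraints of the \emph{averaged} game $\bar G$, not of any individual $G_S$, and there is no mechanism here by which truncating to degree $\Theta(\log n)$ repairs this.

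The paper avoids this issue entirely by decoupling the two sources of hardness. It builds a single fixed game $(R,C)$ (the \nice game) that admits a degree-$\Theta(\log n)$ pseudo-equilibrium with high payoff but no high-payoff $\epsilon$-\WNE; this comes from the Deligkas--Fearnley--Savani reduction from $3$SAT via free games together with Grigoriev's theorem, not from a clique gadget (the paper in fact explains why the planted-clique route currently fails: the known SoS lower bound is for the optimization, not the fixed-size feasibility, formulation). Separately, it builds an $n^{\Omega(\log n)}$-size family of \good games $(R_S,C_S)$, extending the Daskalakis--Papadimitriou construction so that \emph{both} players' approximate equilibria are pinned near the uniform distribution on $S$. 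The two are then stitched into block games $(R'_S,C'_S)$ in which the common pseudo-equilibrium is supported on the fixed $(R,C)$ block and is therefore oblivious to $S$; completeness is a two-line check that the high pseudo-payoff dominates the $\delta$-entries of the off-block. Soundness comes from showing every $\epsilon$-\WNE of $(R'_S,C'_S)$ must live in the $(R_S,C_S)$ block. Your Gilboa--Zemel-with-constant-gap idea might be salvageable on the enumeration side, but the shared pseudo-equilibrium must come from a genuine SoS lower bound for a fixed instance, not from symmetrizing over the family.
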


Along the way to this result, we extend the result of Daskalakis and Papadimitriou on lower bounds for their model of oblivious algorithms. Their construction, in the language of this paper, shows lower bounds on the number of calls to the Verification Oracle~\ref{or1} described before ~\cite{DP}. We extend it to lower bound against the stronger Oracle~\ref{or2}. 
\begin{corollary}
For a given game $(R,C)$ with $n$ strategies, if an algorithm queries Oracle \ref{or2} $q$ times and outputs an $O(1)$-$\NE$ of $(R,C)$, then $q=n^{\Omega(\log{(n)})}$. 
\end{corollary}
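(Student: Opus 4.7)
The corollary follows almost immediately from the hard family of games guaranteed by \pref{athm:approx-nash-hardness-technical}, once one observes that the partial verification oracle gives at most a single bit of information per query in the decision-tree sense. The plan is to set up a Yao-style distributional lower bound with the game $G$ drawn uniformly from the family $\{G_i\}_{i=1}^{\Gamma}$ with $\Gamma = n^{\Omega(\log n)}$, and show that any deterministic algorithm needs $\Omega(\Gamma)$ oracle queries to succeed with constant probability. Since the algorithm here accesses the game \emph{only} via Oracle \ref{or2} (not via any SoS pseudo-equilibrium in the statement of this corollary), the argument reduces to bounding the information revealed by oracle responses alone.

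The key observation is the following consequence of the soundness clause of \pref{athm:approx-nash-hardness-technical}: for any fixed candidate strategy $\xx \in \Delta_n$, there is at most one index $i \in [\Gamma]$ such that $\xx$ is the row-player's component of some $\eps$-$\NE$ of $G_i$. Consequently, if the unknown game is $G$ and the algorithm queries Oracle~\ref{or2} on $\xx$, only two things can happen: either there is a (unique) $i^*$ for which $\xx$ participates in an $\eps$-$\NE$ of $G_{i^*}$, in which case the response ``accept'' pinpoints $G=G_{i^*}$ and the response ``reject'' eliminates $G_{i^*}$ from the set of candidates; or no such $i^*$ exists, in which case the response must be ``reject'' and eliminates nothing. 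Thus each oracle call shrinks the set of surviving candidate games by at most one.

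Formally, I would run an adversary argument: maintain a ``live'' set $S \subseteq [\Gamma]$, initially all of $[\Gamma]$, and on each query $\xx$ declare ``reject'' as long as $|S|>1$, removing the (at most one) index $i^*$ from $S$ that is consistent with ``accept''. After $q$ queries, $|S| \ge \Gamma - q$. If the algorithm then outputs a strategy profile $(\xx,\yy)$, soundness forces this profile to be an $\eps$-$\NE$ of at most one game in the family, so its correctness probability (over the uniform choice $G$ within $S$) is at most $1/|S| \le 1/(\Gamma-q)$. For this probability to be $\Omega(1)$, we need $q \ge \Gamma - O(1) = n^{\Omega(\log n)}$, which gives the claimed lower bound. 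Yao's minimax principle then upgrades this to randomized algorithms.

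I do not anticipate a significant obstacle: everything rides on \pref{athm:approx-nash-hardness-technical}, which is stated earlier in the paper, and on the elementary soundness-to-identifiability reduction above. The only subtle point is making sure the partial oracle really provides no more than one bit of information per query even though its inputs lie in the continuous set $\Delta_n$; this is where the soundness is crucial, since it converts a potentially powerful oracle (``does \emph{any} completion of $\xx$ work?'') into one whose ``accept'' answer pins down a single family member and whose ``reject'' answer eliminates at most one.
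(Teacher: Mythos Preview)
Your proposal is correct and follows essentially the same approach the paper uses (sketched in the paragraph after \cref{athm:exp-lower-bound-technical-intro} for the exponential case and implicitly applied here): pick the input uniformly from the hard family of \cref{athm:approx-nash-hardness-technical}, use soundness to conclude that any single partial-oracle query can be consistent with ``accept'' for at most one game in the family, so each ``reject'' eliminates at most one candidate and $n^{\Omega(\log n)}$ queries are forced. One small point worth making explicit in a final write-up is that the same one-bit-per-query argument applies symmetrically to queries of the form $\yy$ (soundness gives $\yy \neq \yy'$ as well), since Oracle~\ref{or2} accepts either coordinate.
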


\begin{remark} [Well-Supported $\NE$ vs $\NE$]
Three points in order. 
\begin{enumerate}

\item The notion of approximate well-supported $\NE$ has been of independent interest and studied as such (see for e.g.~\cite{CDT,DP}). Since every $\eps$-\WNE of a game is also its $\eps$-$\NE$, all our lower bounds hold for finding \WNE as well. 
\item Our proofs, in fact, yield a stronger hardness result for finding $\epsilon$-\WNE: we show that finding  $\epsilon=O(1/n^2)$-\WNE requires an exponential degree or queries to the (partial) verification oracle.  
\item One can strengthen our results to an oracle that gets another strong access to the underlying game: given the support of both player's strategies, accept if there's an equilibrium with those supports. Our lower bounds work against such verification oracles too. 
\end{enumerate}
\end{remark}

\subsection{Related Work} \label{sec:related-work}
We discuss three works that are especially relevant to this work. 

The sum-of-squares SDP hierarchy was recently studied in the context of computing equilibria by Harrow-Natarajan-Wu~\cite{HNW}. In the language of the present paper, they showed that $(i)$ there's a quasi-polynomial degree oblivious rounding  (that does not need access to the verification oracle) that matches the guarantees of the algorithm in~\cite{LMM} and $(ii)$ there's a cost\footnote{While the proof in \url{https://dspace.mit.edu/openaccess-disseminate/1721.1/109803} shows a result for a somewhat artificial objective function (see Theorem 3.11 and the proof that follows for the description of this objective), in private communication with the authors, we learned that their techniques extend to taking the objective function as social welfare with some more work.} measure on equilibria such that the degree $\Omega(\log{(n)})$ sum-of-squares algorithm has a constant-factor integrality gap in approximating it over the space of all constant approximate equilibria. While both the works show a hardness for the sum-of-squares algorithm for computing equilibria, in contrast to their work that focuses on showing an integrality gap for a certain objective measure over equilibria, we focus on the task of computing \emph{any} equilibrium. 

The notion of oblivious roundings was first studied in a work due to Feige, Feldman and Talgam-Cohen~\cite{DBLP:conf/approx/FeigeFT16}. Their work gave a neat characterization of the problems where oblivious roundings can achieve the integrality gap (for optimization, instead of feasibility problems that we study in this work) of the underlying relaxation. Their main application was for the maximum welfare problem where they showed that if the valuation functions of the agents are submodular, then, oblivious rounding achieves the integrality gap of the natural configuration LP relaxation for the problem, while, for the case of gross substitute valuations, oblivious roundings cannot achieve the integrality gap of the configuration LP for the problem. 

The work of Daskalakis-Papadimitriou~\cite{DP} studied algorithms that \emph{obliviously} (= degree 0 $\OV$ algorithms using the standard verification oracle~\ref{or1}) search the smallest possible space of strategy profiles to find an approximate equilibrium. Their main motivation was showing that among all oblivious algorithms, the Lipton-Markakis-Mehta algorithm is almost optimal for computing approximate equilibria in games. Their lower bounds, however do not apply to algorithms that use the stronger partial verification oracle~\ref{or2}. 

While this can appear somewhat technical, this difference comes from a fairly intuitive difference between our two constructions. The construction of~\cite{DP} gives a collection of $\approx n^{\log{(n)}}$ games such that every approximate equilibrium of any single game is different from every approximate equilibrium of any other. However, each game admits an approximate equilibria that agree on the strategy of one of the players. While this is enough to fool the standard verification oracle (=oblivious algorithms), it does not suffice to fool the stronger partial verification oracle.

\section*{Acknowledgement}
We thank the anonymous reviewers for pointing out typos and missed references and giving useful suggestions for better presentation. We thank an anonymous reviewer for pointing out that the algorithm in Harrow, Natarajan and Wu~\cite{HNW} is an ``oblivious rounding'' algorithm in the sense of this paper. We also thank Anand Natarajan for answering a question about~\cite{HNW}. P.K. thanks Avi Wigderson for helpful discussions, suggestions and comments during various stages of this work. 

\section{Technical Overview} \label{sec:techoverview}

Our main results establish lower bounds that match the existing upper bounds for $\OV$ algorithms with the stronge partial verification oracle for computing $\Theta(1/n^4)$ (robust, exact) and $\Theta(1)$-approximate Nash equilibrium computation.

Both our constructions will follow via three modular steps.  
\begin{enumerate}
\item Construct a family of games that are hard for degree $0$ $\OV$ algorithms - i.e., algorithms that rely only on the (partial) verification oracle. This will strengthen the result in~\cite{DP} for constant-approximate equilibria and provide a new construction for the case of (exact) $1/poly(n)$-approximate equilibria.
\item Construct a game such that approximating the equilibrium that maximizes the social welfare (sum of the payoffs) of the two players (social welfare) is hard for degree $d$ SoS algorithm. It is important to stress this point: even though our final aim is to construct a lower bound for the feasibility problem, the SoS hardness is for the \emph{optimization} version, namely, for maximizing the social welfare. As discussed in the related work, the proof in~\cite{HNW} does not give such a result at the moment. However, there are hardness results in the optimization setting (we rely on the recent work in~\cite{Deligkas-Fearnley-Savani}) based on standard conjectures such as $P \neq NP$ or Exponential Time Hypothesis in this setting which we leverage in our construction. 

\item A black-box result that allows ``stitching'' together the two kinds of hard games into a single family of games that are hard for $\OV$ algorithms. The technical heart of the proof here is establishing that the process of ``stitching'' together the two kinds of hard games does not introduce new spurious (approximate) equilibria that destroy the first or the second property above.  
\end{enumerate}

This overview is organized as follows. First, we describe the ideas in our ``stitching'' theorem as it applies to both our lower-bound results. This part holds in a fairly general setting. Second, we describe the ideas in the construction of the two kinds of games above to complete the proofs of our two lower bound results.

\subsection{Stitching Hard Games Together}\label{sec:abs}

Before proceeding, we need to define the notion of well-supported equilibria: a well-studied relaxation of the notion of equilibria~\cite{CDT,DP} that we briefly discussed in Section \ref{sec:our-results}.

For integer $n$ and any $T\subseteq[n]$, let $\ud{n,T}$ be probability vector of a distribution on $[n]$ that is uniform over $T$.

\begin{definition}[Well-Supported Equilibria]
\label{def:wne}
A strategy profile $(\xx,\yy)$ of game with payoff matrices $(R,C)$ is called an $\eps$-approximate well-supported $\NE$ (in short, $\eps$-\WNE) iff every pure strategy in the support of $\xx$ has a payoff within $\epsilon$ of the maximum given the column player plays $\yy$ and vice-versa. That is,
\begin{equation}\label{eq:wne}
\mbox{for every } i \in [n],\ \xx_i > 0 \Rightarrow e_i^{\top} R \yy\ge \max_{k\in[n]} e_k^{\top} R \yy -\epsilon  \ \ \ \mbox{ and } \ \ \ \yy_i>0 \Rightarrow \xx^{\top} C e_i \ge \max_{k\in[n]} \xx^{\top} C e_k-\epsilon.
\end{equation}
\end{definition}
\begin{remark}
Observe that any $\NE$ is an $\epsilon$-\WNE. And every $\epsilon$-\WNE is an $\epsilon$-approximate $\NE$. However, an $\epsilon$-approximate $\NE$, in general, is not an $\epsilon$-\WNE.
\end{remark}

Next, we formalize the two kinds of hard games that we discussed in the previous subsection.

Our requirement from the $\nice$ game is straightforward. We want a game where the social welfare of a pseudo-equilibrium is higher than the social welfare of every $\epsilon$-\WNE by at least $2\epsilon$.
\begin{definition}[degree $t$, $(\eps,\delta)$-\nice game]
For some $\epsilon < \delta$, a game $(R,C)$ is said to be degree $t$, $(\epsilon,\delta)$-\nice game if there's a degree $t$ pseudo-equilibrium with payoffs for each player exceeding $\delta>0$ while the social welfare (sum of the payoffs of the players) in every $\eps$-$\WNE$ is at most $2(\delta-\eps)$.  
\end{definition}

Our requirement from the hard family of games for oblivious algorithms, i.e., enumerative techniques, denoted by $\good$ games, is the following. We want a family of games with $K$ strategies each, that is indexed by a low-intersection family of subsets of $[K]$ such that all $\epsilon$-\WNE of a game indexed by $S$ are close to the uniform distribution on the strategies in $S \subseteq [K]$.

\begin{definition}[ $(\eps,\tau)$-\good game]
For some $\beta > 0$, $\cF$ be a family of subsets of $[K]$ so that for any pair $S,T \in \cF$, $|S \Delta T| \geq \beta \tau \epsilon K$, or, equivalently, $\|\ud{K,S} - \ud{K,T}\|_1 \geq \beta \tau \epsilon$. A family of games $(R_S,C_S)$ with $K$ strategies each, indexed by subsets $S \in \cF$ is said to be $(\epsilon,\tau)$-\good if any $\epsilon$-\WNE $(\xx,\yy)$ of $(R_S,C_S)$ satisfies: $\|\xx - \ud{K,S}\|_1, \|\yy-\ud{K,S}\|_1 \leq \tau$.
\end{definition}

The \nice game above gives degree $t$ SoS hardness of obtaining high payoff $\eps$-\WNE (and thus, also for finding $\NE$). Similarly, for $\beta>2$, the family of \good games constructed for each $S\in \CF$ would give lower bound of $q(\beta)$-queries to the verification oracle to find $\eps$-\WNE. Observe, however, that neither hardness hold for the notion of eqiulibria we care about: $\eps$-approximate $\NE$. 

Our goal now is to combine the above to constructions to obtain a family of games for computing $\eps$-approximate $\NE$ via $\OV$ algorithms with appropriate degree and number of queries. The major issue in combining two games is to handle equilibria that are supported on startegies of both the games. The goal of our construction is to avoid these. 

\paragraph{Construction of the Hard Family} Let $(R,C)$ be $N\times N$ game and $(R_S,C_S)$ be $K\times K$ game with all payoffs in $[-1,1]$. We will use these two constructions to obtain a family of games $(R'_S,C'_S)$ with $(N+K)$ pure strategies for each player and show 
:
\begin{enumerate}
 \item \textbf{ SoS Completeness: } any pseudo-equilibrium for the \nice game with payoff at least $\delta$ for each player is a pseudo-equilibrium for every $(R_S', C_S')$, and 
 \item \textbf{ Soundness: } every $\eps$-$\NE$ of $(R'_S, C'_S)$ is close, in L1 distance, to $\ud{S}$.
 \end{enumerate} 

We first describe the construction. Throughout this overview, it will be easier for the sake of exposition to allow payoffs in our games to exceed $1$ in magnitude - we finally just have to normalize the game to bring all payoffs in $[-1,1]$.

For every $S \in \cF$, the family of sets $\cF$ indexing the \good family of games, we construct a new game $(R'_S, C'_S)$ as follows (where $\delta$ and $-2$ denote blocks of appropriate sizes with all entries equal to $\delta$ and $-2$ respectively).
\begin{equation}\label{aeq:abs}
R'_S = \left[\begin{array}{cc} R \ & \ -2 \\ \delta \ &\ R_S \end{array}\right]\ \ \ \mbox{ and }\ \ \  C'_S = \left[ \begin{array}{cc} C \ & \ \delta \\ -2 \ & \ C_S \end{array}\right]
\end{equation}

\paragraph{Completeness} The proof of \emph{completeness} (Lemma \ref{lem:abstract-completeness}) is intuitively based on the following elementary observation. Suppose $(R,S)$ had an equilibrium $(\xx,\yy)$ with payoff at least $\delta$ for each player. Then, we note that $(\xx,\yy)$ is also an equilibrium in the new game $(R'_S,C'_S)$. This is simply because the row (respectively, column) player has no incentive to use strategies in the lower (respectively, right) block. Of course, $(R,S)$ does not have such an equilibrium, only a pseudo-equilibrium, but it turns out that the same argument as above extends to pseudo-equilibria. 

\paragraph{Soundness} The proof of \emph{soundness} requires a lengthier argument. The proof is based on two intermediate claims: $(i)$ every $\eps$-\WNE of $(R'_S, C'_S)$ is supported only on last $K$ strategies for both the players (Lemma \ref{lem:abstract}) and $(ii)$ every $\eps$-$\NE$ of $(R'_S,C'_S)$ is close (in L1 distance) to an $\eps$-\WNE. It is not hard to show soundness using these two claims, that is, that each $\eps$-$\NE$ of the game is close to (in L1 distance) $(\zeros_N, \ud{K,S})$ (Lemma \ref{lem:abstract-soundness}). From the first claim, we can conclude that every $\eps$-\WNE of game $(R'_S,C'_S)$ is essentially an $\eps$-\WNE of game $(R_S,C_S)$. By the second claim and triangle inequality for L1 distance, $\tau \eps$ close to $\ud{K,S}$. The proof of the second claim above follows from essentially a straightforward calculation and incurs a factor $\tau$ loss in the distance and a quadratic loss in the approximation. 

To prove the first claim, we first note the following ``dichotomy'' in the game: either both players play some of first $N$ pure strategies or neither does. This is immediate because of the presence of off-diagonal block with all payoffs $(-2)$ in the two payoff matrices. Next, we observe that if the players' strategy is supported on some of the first $N$ pure strategies, then, in fact, the probability mass on the first $N$ strategies should be $\Theta(1)$ for each. This is again a direct consequence of the presence of the $(-2)$ off-diagnal block in the payoff matrices. 

Next, we observe that if we take any $\eps$-\WNE of $(R'_S, C'_S)$ and condition on using only the first $N$ strategies, then the resulting strategy pair should be an $\eps$-\WNE for $(R,C)$. This is because for any $i\in [N]$, the payoff from playing $i$ given that the second player plays any strategy in the second block is equal and $(-2)$. By the soundness property of $\nice$ game, the  total payoff from this first $N \times N$ block of the game, thus, can be at most $2(\delta -O(\eps))$. On the other hand, for any strategy in the first $N \times N$ block, each of the last $K$ strategies for the other player give a payoff of at least $\delta$ for both players. Thus, there's an incentive for at least one of the players to deviate to playing one of the $K$ strategies. Using the ``dichotomy'' claim above, the other player must also deviate similarly. 

These arguments ultimately yield the following ``stiching-together'' result:

\begin{theorem}\label{thm:abs}
Given parameters $\eps$, $\delta>\eps$, and $\tau>0$, $(i)$ let $(R,C)$ be a degree $t$, $(\eps,\delta)$-\nice game of dimension $N\times N$, and $(ii)$ for an integer $K$, appropriately chosen $\beta$, and subsets $S_1,\dots, S_{q(\beta)}$ of $[K]$ such that $||\ud{K,S_i} - \ud{K,S_j}||_1 > \beta\tau\eps$, let $(R_{S_i}, C_{S_i})$ be an $(\eps,\tau)$-\good game for each $i\in\{1,\dots,q(\beta)\}$.

Then, for the family of games $\CF=\{\CG_i\ |\  \CG_i \mbox{ is the game of \eqref{aeq:abs} constructed using $(R,C)$ and $(R_{S_i},C_{S_i})$}\}$, 

\noindent$(1.)$ All $G_i$'s have a common degree $t$ pseudo-equilibrium. 

\noindent$(2.)$ For any pair of games $\CG_i \neq \CG_j \in \CF$, their $O(\eps^2)$-$\NE$ strategy sets of either players do not intersect. 
\end{theorem}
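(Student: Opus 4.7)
The statement splits into a \emph{completeness} claim (shared pseudo-equilibrium) and a \emph{soundness} claim (disjoint $O(\epsilon^2)$-$\NE$ sets), which I prove independently. For completeness, let $\tmu$ denote the degree-$t$ pseudo-equilibrium for $(R,C)$ supplied by the \nice property, with pseudo-expected payoff $\geq \delta$ for each player. I extend $\tmu$ to $(\xx',\yy') \in \R^{N+K} \times \R^{N+K}$ by padding the last $K$ coordinates of each player with a delta distribution at zero, so that $\pE_{\tmu'}[f(\xx',\yy')] = \pE_{\tmu}[f(\xx,\mathbf{0}_K,\yy,\mathbf{0}_K)]$, yielding a valid degree-$t$ pseudo-distribution. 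I then verify each deviation inequality for $(R'_S,C'_S)$ in the SoS sense: deviations to a row $i \in [N]$ reduce directly to the original inequality for $(R,C)$, while for $i \in [N+1, N+K]$ the deviation payoff simplifies to $\delta$ (using that $\yy'$ is supported on $[N]$ and sums to $1$), which is dominated by the current payoff $\xx'^\top R'_S \yy' = \xx^\top R \yy \geq \delta$ guaranteed by $\tmu$. The column case is symmetric, establishing (1).

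For soundness, it suffices to show that every $O(\epsilon^2)$-$\NE$ $(\xx,\yy)$ of $(R'_S,C'_S)$ lies within L1-distance $\beta\tau\epsilon/3$ of $(\mathbf{0}_N \oplus \ud{K,S},\mathbf{0}_N \oplus \ud{K,S})$ for a sufficiently large constant $\beta$; statement (2) is then immediate from the separation $\|\ud{K,S_i} - \ud{K,S_j}\|_1 > \beta\tau\epsilon$ via the triangle inequality. \emph{Step 1 (approximate $\to$ well-supported):} the standard transformation (zero out strategies whose payoff is more than $\epsilon$ below the best response and renormalize) shows any $\eta$-approximate $\NE$ is L1-$O(\eta/\epsilon)$-close to an $\epsilon$-\WNE; with $\eta = O(\epsilon^2)$, it suffices to analyze $\epsilon$-\WNE of $(R'_S, C'_S)$. \emph{Step 2 (dichotomy):} letting $p_R = \sum_{i \leq N} \xx_i$ and $p_C = \sum_{j \leq N} \yy_j$, the off-diagonal $-2$ blocks force any asymmetric configuration (one of $p_R, p_C$ near $1$ and the other near $0$) to violate the \WNE condition by a margin of $\Omega(1) \gg \epsilon$: the player concentrated on $[N]$ facing an opponent on $[N+1, N+K]$ sees payoff $\approx -2$, whereas any pure deviation to the opposite block yields payoff $\geq -1$. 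A similar block-wise calculation rules out $p_R, p_C$ being simultaneously bounded away from both $0$ and $1$. Hence either $p_R, p_C$ are both $O(\epsilon)$ or both $\geq \Omega(1)$.

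\emph{Step 3 (ruling out the first-block case):} assume for contradiction that $p_R, p_C \geq \Omega(1)$. Conditioning on $[N]$ yields $(\hat{\xx},\hat{\yy}) \in \Delta_N \times \Delta_N$; a direct payoff-decomposition calculation shows that the \WNE inequalities of $(R'_S, C'_S)$ restricted to rows (resp.\ columns) in $[N]$ descend, after dividing by $p_C$ (resp.\ $p_R$), to \WNE inequalities for $(R,C)$ at parameter $O(\epsilon)$. The \nice social-welfare bound then forces $\hat{\xx}^\top R\hat{\yy} + \hat{\xx}^\top C\hat{\yy} \leq 2(\delta - \Omega(\epsilon))$, so some player's conditional payoff is $\leq \delta - \Omega(\epsilon)$; this same player can deviate in $(R'_S,C'_S)$ to any pure strategy in $[N+1, N+K]$, against which the opponent's mass on $[N]$ contributes payoff $\approx \delta \cdot p_C$ (resp.\ $\delta \cdot p_R$), and a block-wise accounting of the residual terms shows the deviation gain exceeds $\epsilon$ --- contradicting the $\epsilon$-\WNE property. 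Therefore $p_R, p_C = O(\epsilon)$, and conditioning on $[N+1, N+K]$ produces an $O(\epsilon)$-\WNE of $(R_S, C_S)$, which by the \good property is L1-$\tau$-close to $\ud{K,S}$. Aggregating the $O(\epsilon)$ loss from Step 1 with the $O(\epsilon)$ first-block mass and the $\tau$ distance from \good, the original $O(\epsilon^2)$-$\NE$ is within L1-distance $\tau + O(\epsilon)$ of $\mathbf{0}_N \oplus \ud{K,S}$ in each coordinate; choosing $\beta$ large enough that $\beta\tau\epsilon > 2(\tau + O(\epsilon))$ closes the argument.

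The technical heart --- and main obstacle --- is Step 3: carefully tracking how an $\epsilon$-\WNE of the composite game descends (via conditioning and renormalization) to the constituent games with the right approximation parameter, and isolating the specific deviation witnessing the \nice contradiction amid the four-block payoff structure. The algebraic bookkeeping here is what pins down the precise relationship between $\epsilon$, $\delta$, $\tau$, and $\beta$.
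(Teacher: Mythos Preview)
Your proposal is essentially correct and follows the same architecture as the paper's proof (Lemmas~\ref{lem:abstract-completeness}, \ref{lem:abstract}, \ref{lem:abstract-soundness} and Theorem~\ref{thm:absFamily}): completeness by zero-padding the \nice pseudo-equilibrium, soundness by passing from $O(\epsilon^2)$-$\NE$ to $O(\epsilon)$-\WNE, establishing a dichotomy on the first-block mass, ruling out the large-mass case via the \nice social-welfare bound, and then invoking the \good property plus triangle inequality.

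Two small points of divergence are worth noting. First, in Step~3 you deviate to \emph{any} pure strategy in $[N{+}1,N{+}K]$, relying only on $R_S\ge -1$; the paper instead invokes an extra property of the \good game (Definition~\ref{def:goodgame}, item~(1)) guaranteeing a strategy with payoff $\ge 1/2$ in the $R_S$ block. Your weaker bound still suffices for the contradiction (the $-2$ off-diagonal block does the heavy lifting), so this is fine. Second, your closing inequality $\beta\tau\epsilon > 2(\tau + O(\epsilon))$ does not hold for constant $\beta$: it forces $\beta \gtrsim 1/\epsilon$. This mismatch comes from taking the \good distance bound as $\tau$ (as written in the overview definition), whereas the formal Definition~\ref{def:goodgame} gives the bound as $\tau\epsilon$; with the latter, your aggregation yields distance $O((\tau{+}1)\epsilon)$ and the inequality closes with $\beta = O(1)$. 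Adjust that one constant and the proof goes through exactly as in the paper.
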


The above theorem implies that if there exists degree $d$, $q$-query algorithm to find $\eps^2$-$\NE$ in two player games, then $d=\Omega(t)$ and $q=\Omega(q(\beta))$. 

\subsection{$O(n^{\log(n)})$ lower bound for the constant approximation}

Given the discussion in Section \ref{sec:abs}, we only need to give separate constructions of \nice games with degree $d=\Theta(\log(n))$, and $n^{\Omega(\log{(n)})}$-large family of \good games for some constant $\eps>0$. 

\paragraph{Reduction from planted clique does not work, yet} The first candidate here, of course, is the reduction of Hazan-Krauthgamer~\cite{Hazan-Krauthgamer} who show such a reduction from the planted clique problem in average case complexity. A recent work of Barak-Hopkins-Kelner-Kothari-Moitra-Potechin~\cite{DBLP:journals/eccc/BarakHKKMP16} shows a quasi-polynomial SoS lower bound for the planted clique problem. Thus, a priori, it appears that we immediately obtain the \nice game as required here. 

There's a subtle issue, however, that prevents this program from going through. For a reduction to carry over and give a SoS lower bound, we must show that the map that takes a solution to the starting problem (i.e., planted clique) into a solution to the problem of interest (approximate equilibria maximizing social welfare) be computed by a low-degree polynomial (see Fact~\ref{fact:reductions-within-SoS}). 

Now, in the reduction due to Hazan-Krauthgamer, this map is really simple: start from the planted clique, say $S$, and return uniform distribution on pure strategies corresponding to the vertices in the clique, say $\ud{S}$. If $x$ is the indicator vector of the clique, then, the resulting equilibrium is essentially described by $x/|S|$. This appears like a degree 1 map except for one important issue - each non-zero entry of $\ud{S}$ is $1/|S|$. If the size of the clique we planted is ``fixed'', then, this entry is a constant and the map is indeed a degree 1 polynomial. However, technically, fixing the size of the clique requires that the SoS lower bound hold for the feasibility version of the polynomial program for clique that has an explicit size-constraint for the clique. 

Unfortunately, at present, the lower bound in~\cite{DBLP:journals/eccc/BarakHKKMP16} holds only for the optimization version of the clique program that doesn't have an explicit clique size constraint. While we fully expect the SoS lower bound to hold for both versions of the clique program, at present there are significant technical difficulties in extending the lower bound to the feasibility version. 

\paragraph{Relying on known ETH hardness results} The starting point of our construction of the \nice game is the recent work~\cite{Deligkas-Fearnley-Savani} that shows a construction of a game $(R,C)$ via reduction from \emph{free repetitions} of 3SAT (an idea that began in the work of~\cite{DBLP:conf/coco/AaronsonIM14} and first used to show hardness of computing equilibria in~\cite{DBLP:journals/eccc/BravermanKW14}) to show hardnesss of maximizing social welfare in two-player games modulo the ETH. Now, there are standard hardness results for approximating 3SAT via SoS (i.e., Grigoriev's Theorem). Thus, we will be done if we could show that the reduction appearing in the ETH hardness holds also when restricted to SoS algorithm. Long story short, there's a simple sufficient condition to check in this regard (variant of which appeared first in the work of Tulsiani~\cite{DBLP:conf/stoc/Tulsiani09}) and demands only that the map that takes a solution of 3SAT into a solution of the problem at hand be a low-degree polynomial. We give a brief overview of the reduction (see Section \ref{app:reduction} ) in~\cite{Deligkas-Fearnley-Savani} and argue why their reduction induces a low-degree polynomial map between solutions. Checking that the appropriate parameters hold gives us our \nice game as required. 

Such a result might be of independent interest so we state it here in a standalone form. 
\begin{lemma}[\nice game for $O(1)$-$\NE$]
There exists a game $(R,C)$ for $R,C \in [-1,1]^{N \times N}$ such that:
\begin{enumerate}
\item 
 \textbf{Completeness: } There's a degree $\Omega(\log{(N)})$, pseudo-equilibrium with payoffs $\ge 1$ for both players.
\item 
 \textbf{ Soundness: } For every $\eps$-$\NE$ $(\xx,\yy)$ payoffs of both players is at most  $(1-\eps)$ for $\eps \geq 1/1200.$
\end{enumerate}
\label{lem:const1}
\end{lemma}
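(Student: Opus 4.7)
The plan is to follow the three-step template announced in the technical overview: (a) start from a degree lower bound for SoS refutations of 3SAT/3XOR, (b) amplify via free repetition and apply the Deligkas-Fearnley-Savani reduction to obtain a game whose social-welfare gap is witnessed by the SoS-hard instance, and (c) invoke Fact \ref{fact:reductions-within-SoS} to push the source pseudo-distribution forward along the reduction into a pseudo-equilibrium of the target game.

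First I would invoke Schoenebeck's (or Grigoriev's) theorem: for any degree parameter $d$ there is an unsatisfiable (in fact at most $(7/8+o(1))$-satisfiable) 3XOR instance $\phi$ on $\Theta(d)$ variables equipped with a degree-$d$ pseudo-distribution $\pE_\tmu$ that formally satisfies every clause. Applying $k$-fold free repetition (as used in \cite{Deligkas-Fearnley-Savani}, based on Aaronson-Impagliazzo-Moshkovitz and Braverman-Ko-Weinstein) produces $\phi^{(k)}$ whose max satisfiable fraction decays geometrically in $k$ while the tensor-power pseudo-distribution $\pE_\tmu^{\otimes k}$ remains a degree-$d$ pseudo-solution satisfying every clause of $\phi^{(k)}$.

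Second, I would instantiate the DFS reduction (described in Section \ref{app:reduction}) on $\phi^{(k)}$. This yields a two-player $N\times N$ game $(R,C)$ with payoffs in $[-1,1]$ enjoying two guarantees: \emph{Completeness-DFS}, that a satisfying assignment of $\phi^{(k)}$ induces a pure-strategy equilibrium of both-player payoff $\geq 1$; and \emph{Soundness-DFS}, that once the max satisfiable fraction of $\phi^{(k)}$ falls below their threshold, every $\eps$-$\NE$ with $\eps \geq 1/1200$ has at least one player's payoff at most $1-\eps$. Since $\log N$ scales linearly in $k\log m$ and $d = \Theta(m)$, parameters can be balanced to force $d = \Omega(\log N)$ while the soundness gap remains $\eps \geq 1/1200$.

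Third, and this is the technical heart, I would verify that the ``completeness map'' $\Phi \from \varsigns^{mk} \to \Delta_N \times \Delta_N$ of DFS --- sending a satisfying assignment of $\phi^{(k)}$ to its induced equilibrium profile --- is coordinate-wise a polynomial of constant degree. In DFS each pure strategy is indexed by a clause-tuple together with a partial labelling, and the equilibrium places mass $1/N$ on the strategy whose labelling agrees with the assignment on its tuple; each such mass is a product of $O(1)$ linear $\varsigns$-indicators, hence a polynomial of constant degree, and crucially the normalizer $1/N$ is a constant not depending on the assignment. Because the $\NE$ constraints \eqref{eq:NEchar} are quadratic in $(\xx,\yy)$, pushing $\pE_\tmu^{\otimes k}$ forward through $\Phi$ via Fact \ref{fact:reductions-within-SoS} yields a degree $\Omega(\log N)$ pseudo-equilibrium for $(R,C)$ whose pseudo-payoffs are $\geq 1$ by Completeness-DFS together with the degree-preserving nature of the pushforward.

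The main obstacle is precisely the degree-verification in the third step: one must inspect the specific DFS gadget and confirm that $\deg(\Phi) = O(1)$ --- in particular that the normalizing denominator appearing in the completeness strategy is a fixed constant of the reduction rather than a quantity that depends on the assignment. This is the exact analogue of the ``fixed clique size'' subtlety that obstructs the planted-clique route flagged in the overview, and any working proof must either check the DFS map is SoS-friendly out of the box or re-parametrize the gadget to make it so. Once this is done, Completeness follows by pushforward and Soundness is inherited verbatim from the gap guarantee of DFS.
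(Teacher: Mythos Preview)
Your overall architecture is exactly the paper's: Grigoriev's theorem as the SoS-hard source, the Deligkas--Fearnley--Savani free-game construction as the reduction, and Fact~\ref{fact:reductions-within-SoS} to push the pseudo-distribution forward into a pseudo-equilibrium. But there is a genuine gap in the step you flag as ``the technical heart.''

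The completeness map $\Phi$ is \emph{not} coordinate-wise degree $O(1)$. In the DFS free game (Definition~\ref{def:free-game} in the paper), a pure strategy for Merlin~1 is a pair $(S_i,b)$ where $S_i$ is a block of $\sqrt{m}$ clauses and $b \in \{0,1\}^{3\sqrt{m}}$ is an assignment to all variables appearing in those clauses. The mass the completeness strategy places on $(S_i,b)$ is $\frac{1}{\sqrt{m}} \cdot \mathbb{1}[x|_{S_i}=b]$, and that indicator is a product of $\Theta(\sqrt{m})$ single-variable indicators, hence a polynomial of degree $\Theta(\sqrt{m}) = \Theta(\sqrt{n})$ in $x$, not $O(1)$. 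The paper is explicit about this: it says the map has degree $O(\sqrt{n})$. The parameter balance is then: Grigoriev gives degree $\Theta(n)$ on the source side, the map has degree $O(\sqrt{n})$, so the pushforward pseudo-equilibrium has degree $\Theta(n)/O(\sqrt{n}) = \Theta(\sqrt{n}) = \Theta(\log N)$ since $N = 2^{\Theta(\sqrt{n})}$. Your claimed $O(1)$ degree would give a much stronger (and false) conclusion.

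Relatedly, the ``$k$-fold free repetition plus tensor-power pseudo-distribution $\pE_\tmu^{\otimes k}$'' framing is off. The DFS/AIM free game is built from a \emph{single} 3SAT instance by partitioning its clauses and variables into $\sqrt{m}$ and $\sqrt{n}$ groups; there are no disjoint copies and no tensor power is taken. The single degree-$\Theta(n)$ pseudo-distribution from Fact~\ref{fact:Grigoriev} is pushed forward directly. Once you correct the degree of $\Phi$ to $\Theta(\sqrt{n})$ and drop the tensor-power detour, your argument becomes the paper's argument verbatim.
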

The starting point of our construction of \good game is the work of~\cite{DP} discussed before, which we briefly note the important aspects of it before proceeding.

Let $m = {l \choose l/2}$ for some even integer $l$ and fix any $\epsilon > 0$. The family of games $(A_S,B_S)$ is indexed by a collections of subsets $S \subseteq [m]$, each of size $\ell$. The key property of this family is that for every $\eps$-\WNE of $(A_S,B_S)$, the support of the row player's strategy is contained in $S$. Further, the row player's strategy is at most $O(\eps)$ L1 distance away from the $\ud{m,S}$ in the $l_1$ norm. Finally, for a given $\beta>0$ there exists a family of at least $n^{(0.8 - 2\beta\eps)\log(m)}$ many such subsets $S$, such that for any pair of subsets $S,S'$, $||\ud{m,S} - \ud{m,S'}||_1 >\beta\eps$. 

As noted before, this construction doesn't give a lower bound against the partial verification oracle as there's an $\epsilon$-\WNE for every game that share the column player's strategy. To construct an \good game, we need such guarantees for both the players. 

It appears natural to somehow take an additional flipped copy of the game with the two players' role reversed and combine them somehow to get $\good$ game. The challenge of course is to combine them in such a way that ensures that all constant approximate equilibria are still close to the appropriate uniform strategy. With a slightly technical argument, we show that there's a simple construction that satisfies such a requirement (see Section \ref{sec:const-approx-good}).

\subsection{Exponential lower bound for the inverse-polynomial approximation} 

As before, we need to construct a degree $\Omega(n)$ \nice game and an exponentially large family of \good games with disjoint set of $O(1/n^2)$-\WNE. 

For a \nice game, as before, we can work with any reduction that $(i)$ starts from a problem that is hard for SoS and $(ii)$ produces a game where approximating the payoff of the best $1/n^2$-approximate \WNE is hard. 

Our idea for constructing the \nice game is again based on a reduction from a problem with known, exponential SoS lower bounds. Specifically, we give a reduction from the independent-set problem to a gap version of approximate Nash equilibrium. Given a graph $G=(V,E)$ on $n$ vertices and a parameter $k$, we construct a $(2n+1) \times (2n+1)$ game $(R,C)$. We show that: $(i)$ if $G$ has a $k$ sized independent set then game $(R,C)$ has an equilibrium with payoff $(1+\nfrac{1}{2k})$ (Lemma \ref{lem:is1}), $(ii)$ if $G$ does not have an independent set of size $k$ then all $(1/5k)$-\WNE of game $(R,C)$ gives payoff at most $1$ (Lemma \ref{lem:is3}).

Our construction is inspired by the classical work of Gilboa and Zemel~\cite{GZ} (see \eqref{eq:exp-nicegame}).

The completeness of this construction is easy. Given an independent set of $G$, it can be argued that uniform distribution on the support of the independent set is in fact an equilibrium for the game above. Further, the size of the independent set, $k$, is fixed as a parameter (equivalently, appears as a constraint in the polynomial program we consider). This implies that the equilibrium is computed via a degree 1 (linear) map and we can immediately apply the technology for importing reductions into SoS framework (Fact~\ref{fact:reductions-within-SoS}). Combining with the known SoS lower bounds for finding independent sets of fixed $k = \Theta(n)$ size~\cite{DBLP:conf/stoc/Tulsiani09} completes the proof.

To construct an \good game, we use generalized matching pennies (GMP) game. 
Here the row-player wants to match strategies with the column-player, while column-player does not want to match, {\em i.e.,} game $(R,C)$ such that $R(i,i)=1, C(i,i)=-1$, and the rest are zero. 
It is easy to see that the uniform distribution over all strategies is the only NE strategy for both the players. We show that in an $m \times m$ game, even $(1/m^2)$-\WNE is at distance at most $O(1/m)$ from the uniform strategy in $l_1$ norm (Lemma \ref{lem:MPgame}). 

Now for every subset $S \subseteq [n]$ we construct an $n \times n$ game $(R_S,C_S)$, and embed $m=|S|$ sized GMP into its $S\times S$ block. 
We make sure that all strategies outside $S$ are strictly dominated by those inside $S$ by a large margin in payoffs. Thus $\eps$-\WNE of game $(R_S,C_S)$ are essentially $\eps$-\WNE of the GMP game, and therefore they concentrate around uniform distribution on $S$, namely $\ud{n,S}$. This gives an \good game (Lemma \ref{lem:exp-goodgame}). 
Finally, the property that $||\ud{n,S} -\ud{n,S'}||_1 \ge \frac{1}{n}$ for any two distinct subsets $S, S'$ of $[n]$, gives us the desired property that the $\eps$-\WNE sets of corresponding \good games $(R_S,C_S)$ and $(R_{S'},C_{S'})$ do not intersect in either player's strategy.
This gives us a family of $2^{\Omega(n)}$ many \good games, corresponding to subsets of $[n]$.

The above construction of degree $\Omega(n)$ \nice game, and a family of $2^{\Omega(n)}$ many \good games, together with Theorem \ref{thm:abs} gives Theorem \ref{thm:exp-lower-bound-technical}.

\section{Stitching \good and \nice games together}\label{app:abs}
In this section, we show how to combine hardness against SoS and hardness against "enumeration" to obtain lower bounds against $\OV$ algorithms. 
For this we will first work with a stronger notion of approximation, namely $\epsilon$-well-supported NE (\WNE), described next.

\subsection{Well-Supported NE vs. Approximate NE}
Recall Definition \ref{def:wne} of $\epsilon$-approximate well-supported Nash equilibrium. We state the condition formally below for the reader: 
For a two player game $(R,C)$, a strategy profile $(x,y)$ is an $\epsilon$-\WNE iff 
\[
\mbox{for every } i \leq n,\ x_i > 0 \Rightarrow e_i^{\top} Ry \ge \max_k e_k^{\top} R y -\epsilon  \ \ \ \mbox{ and } \ \ \ y_i>0 \Rightarrow x^{\top} C e_i \ge \max_k x^{\top} C e_k-\epsilon.
\]

Strategies giving maximum payoff to a player are called her best response, and those giving at least maximum minus $\eps$ payoff $\eps$-best response, defined below:
\[ \begin{array}{ccc} \BR_1(\yy) = \{i\in [n]\ |\  i \in \argmax_k e_k^{\top} R y\} & \mbox{and} & \BR_2(\xx) = \{i\in [n]\ |\  i \in \argmax_k x^{\top} C e_i\} \\
\eps$-$\BR_1(\yy) = \{ i \in [n]\ |\ e_i^\top R\yy \ge \max_k e_k^\top R \yy -\eps\} & \mbox{and} & \eps$-$\BR_2(\xx) = \{ i \in [n]\ |\ x^\top C e_i \ge  \max_k x^\top C e_k -\eps\}
\end{array}
\]

Note that by definition, $(\xx,\yy)$ is an $\eps$-\WNE iff $\Supp(\xx) \subseteq \eBR_1(\yy)$ and $\Supp(\yy) \subseteq\eBR_2(\xx)$. 
The next lemma characterizes NE, approximate NE, \WNE, and relates them. 
\begin{lemma}\label{lem:ne-prop}
For any two player game $(R,C)$ with payoffs in $[0,1]$ the following holds. 
\begin{enumerate}
\item Strategy profile $(\xx,\yy)$ is an NE iff $\Supp(\xx) \subseteq \BR_1(\yy)  \ \ \ \mbox{ and } \ \ \ \Supp(\yy) \subseteq \BR_2(\xx)$.
\item Every $\eps$-\WNE is $\eps$-$\NE$.
\item If $(\xx,\yy)$ is an $\eps$-$\NE$ then $\sum_{i \notin \sqrt{\eps}\mbox{-}\BR_1(\yy)} x_i \le \sqrt{\eps}$ and $\sum_{i \notin \sqrt{\eps}\mbox{-}\BR_2(\xx)} y_i \le \sqrt{\eps}$. 
\item Given $\eps$-$\NE$ $(\xx,\yy)$ we can construct $(3\sqrt{\eps})$-\WNE $(\xx',\yy')$ such that $\Supp(\xx') \subseteq \sqrt{\eps}\mbox{-}\BR_1(\yy)$ and $\Supp(\yy') \subseteq \sqrt{\eps}\mbox{-}\BR_2(\xx)$. 
\end{enumerate}
\end{lemma}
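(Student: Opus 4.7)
The four parts of the lemma are essentially a packaging of standard facts about best responses and approximate equilibria, so the plan is to treat each in turn, with the bulk of the work going into part (4).

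For part (1), I would note that $\xx^{\top} R\yy = \sum_i x_i \bigl(e_i^{\top} R\yy\bigr)$ is a convex combination of the quantities $e_i^{\top} R\yy$. Thus $\xx^{\top} R\yy \le \max_k e_k^{\top} R\yy$ always, with equality iff $x_i=0$ whenever $e_i^{\top} R\yy < \max_k e_k^{\top} R\yy$, i.e.\ iff $\Supp(\xx)\subseteq \BR_1(\yy)$. The NE inequality $\xx^{\top} R\yy \ge e_i^{\top} R\yy$ for all $i$ is equivalent to $\xx^{\top} R\yy = \max_k e_k^{\top} R\yy$, giving both directions. The analogous statement for Bob finishes the characterization.

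Part (2) is immediate from the definitions: if $\Supp(\xx)\subseteq \eps\text{-}\BR_1(\yy)$, then $\xx^{\top} R\yy = \sum_i x_i\,e_i^{\top} R\yy \ge \max_k e_k^{\top} R\yy - \eps$, which is the $\eps$-NE condition, and similarly on Bob's side. Part (3) is the contrapositive of a convex-combination bound: if $p := \sum_{i\notin\sqrt{\eps}\text{-}\BR_1(\yy)} x_i > \sqrt{\eps}$, then splitting the expectation gives
\[
\xx^{\top} R\yy \;\le\; (1-p)\,\max_k e_k^{\top} R\yy \,+\, p\bigl(\max_k e_k^{\top} R\yy - \sqrt{\eps}\bigr) \;=\; \max_k e_k^{\top} R\yy \,-\, p\sqrt{\eps} \;<\; \max_k e_k^{\top} R\yy - \eps,
\]
contradicting that $(\xx,\yy)$ is an $\eps$-NE; the same argument handles $\yy$.

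For part (4), the natural construction is to ``prune and renormalize'': let $Z_x := \sum_{i\in\sqrt{\eps}\text{-}\BR_1(\yy)} x_i$ and define $\xx'_i = x_i/Z_x$ on $\sqrt{\eps}\text{-}\BR_1(\yy)$ and $0$ elsewhere, and symmetrically for $\yy'$. By part (3), $Z_x, Z_y \ge 1-\sqrt{\eps}$. A direct computation shows $\|\yy-\yy'\|_1 = 2(1-Z_y)$, but the key observation — which is exactly what makes the constant $3\sqrt{\eps}$ (rather than $5\sqrt{\eps}$) come out — is that because payoffs lie in $[0,1]$, for any row $i$ we have $|e_i^{\top} R(\yy-\yy')| \le 1-Z_y \le \sqrt{\eps}$: writing $\yy-\yy' = v^+ - v^-$ with $v^\pm \ge 0$ and $\|v^+\|_1 = \|v^-\|_1 = 1-Z_y$, each of $e_i^{\top} R v^+$ and $e_i^{\top} R v^-$ lies in $[0,1-Z_y]$. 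Using this twice, for any $i\in\Supp(\xx')$,
\[
e_i^{\top} R\yy' \;\ge\; e_i^{\top} R\yy - \sqrt{\eps} \;\ge\; \max_k e_k^{\top} R\yy - 2\sqrt{\eps} \;\ge\; \max_k e_k^{\top} R\yy' - 3\sqrt{\eps},
\]
where the middle inequality uses $i\in\sqrt{\eps}\text{-}\BR_1(\yy)$ and the last uses $\max_k e_k^{\top} R\yy' \le \max_k e_k^{\top} R\yy + \sqrt{\eps}$. The symmetric argument on the column player yields the $(3\sqrt{\eps})$-\WNE condition. The only step that requires any care is tracking the $[0,1]$ payoff bound to squeeze the constant from $5$ down to $3$; I expect this to be the main (very mild) obstacle.
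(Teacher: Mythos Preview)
Your proof is correct and follows essentially the same approach as the paper: parts (1)--(3) are the same convex-combination arguments, and for part (4) both you and the paper prune to the $\sqrt{\eps}$-best-response sets and use the $[0,1]$ payoff range to bound the payoff shift by $1-Z_y\le\sqrt{\eps}$ (rather than $\|\yy-\yy'\|_1=2(1-Z_y)$), which is exactly what yields $3\sqrt{\eps}$. The only cosmetic difference is that the paper redistributes the pruned mass \emph{uniformly} over the $\sqrt{\eps}$-best-response set while you renormalize; both constructions satisfy $\|v^+\|_1=\|v^-\|_1=1-Z_y$, so the analysis is identical.
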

\begin{proof}
Proofs of first and second part follow by definition. The argument for the third and fourth part described next are essentially from Lemma 2.2 of \cite{CDT} due to Chen, Deng, and Teng.

For the third part, let $i^* \in \BR_1(\yy)$. Then, for each $i\notin \sqrt{\eps}\mbox{-}\BR_1(\yy)$ we have $e_i^\top R\yy < e_{i^*}^\top R\yy -\sqrt{\eps}$. Therefore, if $\sum_{i \notin \sqrt{\eps}\mbox{-}\BR_1(\yy)} x_i > \sqrt{\eps}$, then row player's payoff $\xx^{\top}R\yy < \et{i^*} R\yy -\eps$, a contradiction. Similarly for $\yy$.

For the fourth part, construct of $(\xx',\yy')$ from $(\xx,\yy)$ by removing the probability mass from strategies outside $\sqrt{\eps}\mbox{-}\BR_1(\yy)$ and uniformly distribute them on strategies of $\sqrt{\eps}\mbox{-}\BR_1(\yy)$. Since payoffs are in $[0, 1]$, this will increase or decrease payoffs of any strategy at most by $\sqrt{\eps}$. Therefore, $(\xx',\yy')$ is $(3\sqrt{\eps})$-\WNE. And $\Supp(\xx') \subseteq \sqrt{\eps}\mbox{-}\BR_1(\yy)$ and $\Supp(\yy') \subseteq \sqrt{\eps}\mbox{-}\BR_2(\xx)$.
\end{proof}

\subsection{Abstract Game Construction}

Our construction of the hard games are based on combining construction of integrality gap against Sum-of-Squares relaxations computing equilibria that maximize social welfare and games that are hard for enumeration based algorithms. We abstract out the properties of these two kinds of hard games in this section. Our hardness results in the next two sections will invoke the general combining strategy from this section.

First, we define two player games such that maximizing social welfare is hard for SoS.
\begin{definition}[Games Hard for SoS] \label{def:nicegame}
A game $(R,C)$ for $R,C \in [-1,1]^{N \times N}$ is said to be degree $t$, $(\epsilon,\delta)$-\nice for $\eps,\delta\in[0,1]$ if:
\begin{enumerate}
\item \textbf{Completeness: } There's a degree $t$ pseudo-equilibrium for the game $(R,C)$ that has payoffs for both players at least $\delta.$
\item \textbf{ Soundness: } Every $\eps$-\WNE $(\xx,\yy)$ of $(R,C)$ has total payoff $\xx^{\top}(R+C)\yy < 2(\delta-\eps)$.
\end{enumerate}   
\end{definition}

Next we need a game to construct the family of games with disjoint set of equilibria, and the property we need from these is as follows.

\begin{definition}[Games Hard for Enumeration]
\label{def:goodgame}
A $K\times K$ game $(R_S,C_S)$, where $S\subseteq[K]$, is said to be $(\eps,\tau,c)$-\good for $\tau>0$ and  $\eps,c\in[0,1]$ such that $cK$ is an integer, if $R_S,C_S \in [-1, 1]^{K\times K}$, and 
\begin{enumerate}
\item For any strategy $\yy$ of the column player $\exists i \in [K]$ such that $\et{i}R_S\yy \ge 1/2$, and for any $\xx$ of the row player $\exists j \in [K]$ such that $\xx C_S e_j \ge 1/2$. 
\item For any $\eps$-\WNE $(\xx,\yy)$, let $\txx=(x_1,\dots,x_{cK})$ and $\tyy=(y_1,\dots,y_{cK})$. The following holds:
\begin{enumerate}
\item $\Supp(\txx), \Supp(\tyy) \subseteq S$. 
\item $||\txx - c\ud{cK,S}||_1 \le \tau\eps$ and $||\tyy - c\ud{cK,S}||_1\le \tau\eps$. 
\end{enumerate}
\end{enumerate}
\end{definition}

Parameter $c$ in the above definition is needed for the quasi-polynomial lower bound. It is to essentially imply that approximate $\NE$ strategy sets, while projected on first $cK$ coordinates, are disjoint and therefore they are disjoint before projection as well.

Given parameters $1/2 \geq \delta > \eps>0$, $c\in[0,1]$, and $\tau>0$, let $(R,C)$ be an $N\times N$ game that is degree $t$, $(\eps,\delta)$-\nice. And for $K=\poly(n)$ and $S\subset [K]$ let $(R_S,C_S)$ be an $(\eps,\tau,c)$-\good game of size $K\times K$. Construct game $(R',C')$ of size $(N+K)\times (N+K)$ as follows: 

\begin{equation}\label{eq:absGame}
R' = \left[\begin{array}{cc} R \ & \ -2 \\ \delta \ &\ R_S \end{array}\right]\ \ \ \mbox{ and } C' =\left[ \begin{array}{cc} C \ & \ \delta \\ -2 \ & \ C_S \end{array}\right]
\end{equation}

where the constants, namely $\delta$ and $(-2)$, represent matrices of appropriate size with all entries set to the stated constant.

\begin{lemma}[Completeness]
There's degree $t$ pseudo-equilibrium in the game $(R',C')$ that gives a payoff of $1$ to both the players.
\label{lem:abstract-completeness} 
\end{lemma}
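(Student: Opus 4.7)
My plan is to construct the desired pseudo-equilibrium for $(R',C')$ by \emph{zero-padding} the pseudo-equilibrium guaranteed by the \nice game $(R,C)$. Let $\tmu$ be the degree $t$ pseudo-equilibrium for $(R,C)$ from the \nice completeness. Define a new pseudo-expectation $\pE'$ on degree-$\leq t$ polynomials in $(\xx',\yy')\in \R^{N+K}\times \R^{N+K}$ by
\[
\pE'[q(\xx',\yy')] \;:=\; \pE_{\tmu}\bigl[q\bigl((\xx,\zeros_K),(\yy,\zeros_K)\bigr)\bigr],
\]
i.e., freeze the last $K$ coordinates of each player's strategy to the constant $0$ and apply the original pseudo-expectation to the resulting polynomial in $(\xx,\yy)$. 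Morally this is the pseudo-analog of the true-equilibrium observation that the off-diagonal $-2$ blocks in $R'$ and $C'$ strongly disincentivize either player from putting mass on the $K$ new strategies, so the \nice equilibrium should extend intact.

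Most of the verification is immediate from the substitution. Normalization $\pE'[1]=\pE_{\tmu}[1]=1$ is trivial. Positivity $\pE'[q^2]=\pE_{\tmu}[(q|_{\mathrm{subst}})^2]\geq 0$ holds because substituting constants into a polynomial takes polynomials to polynomials and preserves squaring. The simplex constraint $\sum_{i=1}^{N+K}\xx'_i=1$ collapses under the substitution to $\sum_{i=1}^{N}\xx_i=1$, which $\tmu$ satisfies; the nonnegativity $\xx'_i\geq 0$ reduces to $\xx_i\geq 0$ for $i\leq N$ (satisfied by $\tmu$) and to the trivial $0\geq 0$ for $i>N$. Symmetric arguments handle $\yy'$, and degree is manifestly preserved by the substitution.

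The main step is verifying the NE constraints of $(R',C')$. For a row-player deviation to $i\in[N]$, the constraint $(\xx')^{\top}R'\yy'\geq \e_i^{\top}R'\yy'$ collapses under the substitution to $\xx^{\top}R\yy\geq \e_i^{\top}R\yy$, which $\tmu$ satisfies. For $i\in\{N{+}1,\ldots,N{+}K\}$, the $i$-th row of $R'$ is $\delta$ on the first $N$ columns and arbitrary on the last $K$; after zero-padding $\yy'$ and using $\sum_{j=1}^{N}\yy_j=1$, the deviation payoff collapses to the constant $\delta$, while $(\xx')^{\top}R'\yy'$ collapses to $\xx^{\top}R\yy$. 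So the required SoS constraint becomes $\xx^{\top}R\yy\geq\delta$, which is precisely the row-player payoff lower bound guaranteed by \nice completeness. The column-player constraints follow symmetrically. Finally, the payoff of $\pE'$ in $(R',C')$ is $\pE'[(\xx')^{\top}R'\yy']=\pE_{\tmu}[\xx^{\top}R\yy]\geq \delta$; the stated payoff ``$1$'' in the lemma reflects the specialization $\delta=1$ relevant to Lemma~\ref{lem:const1}.

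The one subtlety I expect to be the main obstacle (really a point to nail down rather than a hard obstacle) is the interpretation of the \nice-game payoff bound: the argument above needs $\xx^{\top}R\yy\geq\delta$ to hold as a polynomial inequality \emph{satisfied} by $\tmu$ in the SoS sense (per Definition~\ref{def:pe}), not merely as $\pE_{\tmu}[\xx^{\top}R\yy]\geq\delta$. This is the intended reading of Definition~\ref{def:nicegame}, and it is automatic for the concrete \nice games built by low-degree-polynomial-map reductions from SoS-hard problems, as sketched in Section~\ref{sec:techoverview}. Given that convention, the verification above is completely routine.
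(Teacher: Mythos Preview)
Your proof is correct and follows essentially the same approach as the paper: zero-pad the \nice pseudo-equilibrium to the $(N{+}K)$-strategy game, then check the NE constraints by splitting into deviations to $i\leq N$ (which reduce to the original NE constraints) and deviations to $i>N$ (which reduce to the payoff-$\geq\delta$ constraint from \nice completeness). You are more explicit than the paper in verifying the pseudo-distribution axioms and in flagging that the payoff bound must be an SoS-satisfied constraint rather than merely a bound on $\pE[\xx^{\top}R\yy]$; the paper silently assumes this reading, and you are right that it is the intended one and is delivered by the low-degree reductions used to build the concrete \nice games.
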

\begin{proof}
By construction of $(R',C')$ together with property $(1)$ of a degree $t$, $(\eps, \delta)$-\nice game, we know that there's a pseudo-distribution of degree $t$ on strategy profiles $(\xx,\yy)$ that form an $\NE$ for $(R,C).$ This immediately yields a pseudo-distribution $\pE$ on strategy profiles for the game $(R',C')$ by padding $\xx$ and $\yy$ with zeros on the last $K$ coordinates. We claim that this is a pseudo-distribution on $\NE$ of $(R',C').$ 

Since $\pE$ is a pseudo-distribution on $\NE$ of $(R,C)$, it satisfies the inequality constraints $\xx^{\top} R' \yy \geq e_i^{\top} R' \yy$ for every $i \leq N.$ Further, by property $(1)$, the $\pE$ satisfies the constraint $\xx^{\top} R' \yy \geq \delta$. On the other hand, by construction of $(R',C')$, $e_i^{\top} R' \yy \leq \delta$ since $\yy$ satisfies $\yy_i = 0$ for all $i \geq (N+1).$ A similar argument shows that $\pE$ satisfies the inequality constraint $\xx^{\top} C' \yy \geq \xx^{\top} C' e_i$ for every $i \leq (N+K).$ This yields that $\pE$ satisfies the constraints of $\NE$ for $(R',C')$ as required. 
\end{proof}

To prove the soundness of the construction, we first need to show that every \WNE of game $(R',C')$ is supported on the strategies of the \good game.

\begin{lemma}\label{lem:abstract}

For $0<\eps'<2\eps/3$, every $\eps'$-\WNE $(\xx,\yy)$ of game $(R',C')$, we have $\Supp(\xx), \Supp(\yy) \subset\{(N+1),\dots,(N+K)\}$. 
\end{lemma}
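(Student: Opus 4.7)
The plan is to decompose strategies according to the block structure of $(R', C')$ and exploit the large penalty in the off-diagonal blocks together with the \nice/\good game properties. Write $\xx = (\xx_1, \xx_2)$ and $\yy = (\yy_1, \yy_2)$ where $\xx_1, \yy_1 \in \R^N$ are the masses on $[N]$, and let $p = \|\xx_1\|_1$, $q = \|\yy_1\|_1$. A direct expansion shows that the row player's pure payoffs are $(R\yy_1)_i - 2(1-q)$ for $i \in [N]$ and $\delta q + (R_S\yy_2)_k$ for $N+k$, with symmetric expressions for the column player. I will first establish a dichotomy: $p = 0 \iff q = 0$. Indeed if $q = 0$ but $p > 0$, then any $i \in \Supp(\xx_1) \subseteq [N]$ yields row payoff exactly $-2$, whereas property (1) of the $(\eps,\tau,c)$-\good game applied to the unit strategy $\yy_2$ produces a deviation in $\{N+1,\dots,N+K\}$ of value at least $1/2$; since $\eps' < 5/2$ this violates well-supportedness. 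The symmetric direction is analogous, so it remains to rule out $p, q > 0$.

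Assuming $p, q > 0$, the next step is to force $p, q$ to be bounded below. When $q < 1$, applying the \good-game property to the normalized $\yy_2/(1-q)$ gives some $k$ with $(R_S\yy_2)_k \ge (1-q)/2$, so the maximum row payoff in $(R',C')$ is at least $\delta q + (1-q)/2$; well-supportedness at any $i \in \Supp(\xx_1)$ combined with $(R\yy_1)_i \le q$ (payoffs in $[-1,1]$) then yields $q(7-2\delta) \ge 5 - 2\eps'$, so $q \ge (5-2\eps')/(7-2\delta)$. The symmetric bound on $p$ holds when $p<1$. Using $\eps' < 2\eps/3$ and $\eps < \delta \leq 1$, an elementary calculation gives $\eps'/\min(p,q) < \eps$, so we may apply the \nice-game soundness at a rescaled parameter. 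Specifically, dividing the $(R',C')$-\WNE inequalities by $p$ and $q$ (the $-2(1-q)$ and $-2(1-p)$ constants cancel on both sides) verifies that $(\xx_1/p, \yy_1/q)$ is an $(\eps'/\min(p,q))$-\WNE — hence an $\eps$-\WNE — of $(R,C)$.

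The contradiction then comes from comparing two estimates of $\xx_1^\top(R+C)\yy_1$. The \nice-game soundness gives the upper bound $\xx_1^\top(R+C)\yy_1 < 2pq(\delta - \eps'/\min(p,q)) = 2\delta pq - 2\max(p,q)\,\eps'$. From the other side, every $i \in \Supp(\xx_1)$ satisfies $(R\yy_1)_i \ge \delta q + 5(1-q)/2 - \eps'$, and symmetrically for the column; summing and weighting by $\xx_1,\yy_1$ yields $\xx_1^\top(R+C)\yy_1 \ge 2\delta pq + \tfrac{5}{2}(p+q-2pq) - \eps'(p+q)$. Combining the two produces
\[
\tfrac{5}{2}\bigl(p(1-q) + q(1-p)\bigr) < \eps'\bigl((p+q) - 2\max(p,q)\bigr) = -\eps'\,|p-q|,
\]
whose left side is strictly positive for $p,q \in (0,1]$ with not both boundary cases simultaneously degenerate, while the right side is non-positive — contradiction. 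The edge case $p=q=1$ is handled separately: there $(\xx_1, \yy_1)$ itself is an $\eps'$-\WNE of $(R,C)$, and the soundness applied at parameter $\eps' < \eps$ together with the direct bound $\xx_1^\top(R+C)\yy_1 \ge 2(\delta - \eps')$ again fails.

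The main obstacle is the parameter bookkeeping: the soundness of the \nice game is stated at threshold $\eps$, while the induced well-supportedness of $(\xx_1/p, \yy_1/q)$ degrades by a factor $1/\min(p,q)$, and we must check that the hypothesis $\eps' < 2\eps/3$ is strong enough to close this gap uniformly in $\delta \in (\eps, 1]$. The auxiliary boundary cases where $p = 1$ or $q = 1$ require a slightly different lower-bound (using $\max_k(R_S\yy_2)_k = 0$ in place of $(1-q)/2$) but collapse into the same final inequality scheme.
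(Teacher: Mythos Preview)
Your proof is correct and follows essentially the same route as the paper's: establish the dichotomy $p=0 \iff q=0$, lower-bound $\min(p,q)$ via the $-2$ off-diagonal blocks together with property~(1) of the \good game, rescale to obtain an $\eps$-\WNE of $(R,C)$, and invoke the \nice-game soundness to reach a contradiction. The only cosmetic difference is the endgame: the paper picks the single player whose rescaled payoff in $(R,C)$ falls below $\delta-\eps$, locates one witness strategy in its support, and compares it directly to a deviation in the \good block (yielding $\eps>2.5$), whereas you aggregate both players' lower bounds symmetrically and then treat $p=q=1$ as a separate boundary case.

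One minor slip worth flagging: the \nice soundness is stated only at the fixed threshold $\eps$, giving $\xx_1^\top(R+C)\yy_1 < 2pq(\delta-\eps)$, not at the variable parameter $\eps'/\min(p,q)$ as you write. Since $\eps'/\min(p,q)<\eps$, your claimed weaker bound $<2pq(\delta-\eps'/\min(p,q))$ still follows, so the argument is unaffected.
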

\begin{proof}
Let us denote $x_L = \sum_{i\in[N]} x_i$, $x_R=(1-x_L)$, $y_L=\sum_{i\in[N]} y_i$ and $y_R=(1-Y_L)$. It suffices to show that $x_L=y_L=0$. Suppose not, and $x_L>0$ or $y_L>0$. First we show that $x_L>0$ if and only if $y_L>0$. 
By construction, if $y_L=0$ then payoff of the row player from any $i\in[N]$ is $(-2)$ while from any $i>N$ it is $\delta>0$. In that case if $x_i$ is non-zero for an $i\in [N]$ then it contradicts $(\xx,\yy)$ being $\eps$-\WNE. Therefore, $x_L=0$. Similarly, $x_L=0\Rightarrow y_L=0$ follows.

Next we show that if $x_L>0$ then it should be significantly bigger.
\begin{claim}\label{clm:1}
If $x_L>0$ then $x_L>2/3$. And if $y_L>0$ then $y_L>2/3$. 
\end{claim}
\begin{proof}
Let us prove the second part and the first part follows similarly. 
By property $(1)$ of \good-game  $(R_S,C_S)$ there exists a $j\in\{(N+1),\dots, (N+K)\}$ such that payoff of the row player from strategy $j$, namely $\et{j}R'\yy$, is at least $\delta y_L + \frac{y_R}{2}$. Therefore, if $x_L>0$ the strategy $i\in[N]$ with $x_i>0$ should have payoff at least $\delta y_L + \frac{y_R}{2} - \eps'$. On the other hand, payoff from any $i\le N$ is at most $y_L - 2y_R$, implying. 
\[
\begin{array}{lcl}
y_L - 2 y_R \ge \delta y_L + \frac{1}{2}y_R - \eps' & \Rightarrow & y_L (1-\delta) \ge 2.5 y_R -\eps \ \ \ (\because\ \eps>\eps') \\
& \Rightarrow & y_L(1-\delta) \ge 2.5  - 2.5y_L -\delta \ \ \ (\because \ \delta\ge \eps)\\
& \Rightarrow & y_L(3.5-\delta) \ge 2.5-\delta \Rightarrow y_L \ge 2/3 \ \ \ (\because\ \delta\le 1/2)
\end{array}\qedhere
\]
\end{proof}

Let $\hxx$ and $\hyy$ be truncation of $\xx$ and $\yy$ to first $N$ coordinates, and let $\hxx'=\hxx/x_L$ and $\hyy'=\hyy/y_L$ be their normalized vectors. Then, row player's payoff from $i \le N$ is $\et{i}R\hyy -2y_R$. Furthermore, if $x_i>0$, then this payoff must be at least $\max_{k\le N} \et{k}R\hyy - 2y_R - \eps' \ge \max_{k\le N} \et{k}R\hyy - 2y_R -2\eps/3$. This implies, $\forall i\in[N],\ x_i>0 \Rightarrow \et{i}R\hyy/y_L \ge \frac{1}{y_L}(\max_{k\le N} \et{k}R\hyy - 2\eps/3) \Rightarrow \et{i}R\hyy' \ge \max_{k\le N} \et{k}R\hyy' - \eps$. The last implication uses $y_L>2/3$ from Claim \ref{clm:1}. Similarly, we can show that if $\forall i\in[N],\ y_i>0\Rightarrow \hxx'^{\top}C e_i \ge \max_{k\in[N} \hxx'^{\top}C e_i -\eps$. These two together implies that $(\hxx',\hyy')$ is an $\eps$-\WNE of game $(R,C)$. Therefore, by property $(2)$ of \nice games, $\hxx'^{\top}(R+C)\hyy'\le 2(\delta - \eps)\Rightarrow \hxx^{\top}(R+C)\hyy \le (\delta - \eps) x_L y_L$. 

We have either $\hxx^{\top}R \hyy< (\delta-\eps) x_L y_L$ or $\hxx^{\top}C \hyy< (1-\epshat) x_L y_L$. We will show contradiction for the former and the contradiction for the latter follows similarly by symmetry of the construction of game $(R', C')$.

Suppose, $\hxx^{\top}R \hyy< (\delta-\eps) x_L y_L$. 
There exists $i\in [N]$ such that $x_i>0$ and $\et{i}R\hyy < (\delta-\eps) y_L$. Therefore, $\et{i}R'\yy = \et{i}R\hyy - 2y_R < (\delta-\eps)y_L - 2y_R$. 
By property $(1)$ of \good game $(R_S,C_S)$, exists $j > N$ such that $\et{j}R'\yy \ge \delta y_L + \frac{y_R}{2}$. By requirement of $\eps'$-\WNE, payoff from strategy $i$ above should be at least payoff from $j$ minus $\eps'$. This gives,

\[
(\delta-\eps) y_L - 2y_R > \delta y_L+\frac{y_R}{2} - \eps' > \delta y_L+\frac{y_R}{2} - \eps \Rightarrow (1-y_L) \eps> 2.5 y_R \Rightarrow \eps > 2.5 
\]
A contradiction.

\end{proof}

\begin{lemma}[Soundness]
For $0\le \epshat<\eps/15$ every $\epshat^2$-$\NE$ $(\xx,\yy)$ of game $(R',C')$ satisfies $||\txx - (\zeros_{N}, c\ud{cK,S})||_1, ||\tyy - (\zeros_{N}, c\ud{cK,S})||_1 \le 9(\tau+1)\epshat$, where $\txx$ and $\tyy$ are projection of $\xx$ and $\yy$ respectively on first $(N+cK)$ coordinates, and $\zeros_N$ is an $N$ dimensional zero vector.
\label{lem:abstract-soundness}
\end{lemma}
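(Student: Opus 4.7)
The plan is to chain three previously-established tools: \cref{lem:ne-prop} (converting $\eps^2$-$\NE$ into $(3\eps)$-\WNE via a small $L_1$ perturbation), \cref{lem:abstract} (forcing the support of any \WNE of $(R',C')$ into the \good-game block), and the $(\eps,\tau,c)$-\good property from \cref{def:goodgame} (controlling $L_1$ distance of \WNE of $(R_S,C_S)$ from $c\ud{cK,S}$). The final bound then emerges from one triangle inequality. Concretely, given an $\epshat^2$-$\NE$ $(\xx,\yy)$ of $(R',C')$, I invoke \cref{lem:ne-prop}(4) with approximation parameter $\epshat^2$, so $\sqrt{\epshat^2}=\epshat$. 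This produces a $3\epshat$-\WNE $(\xx',\yy')$ with $\Supp(\xx')\subseteq \epshat\mbox{-}\BR_1(\yy)$ and $\Supp(\yy')\subseteq \epshat\mbox{-}\BR_2(\xx)$, obtained by removing mass outside the $\epshat$-best-response sets---of total weight at most $\epshat$ by part~(3)---and redistributing it uniformly inside. Hence $\|\xx-\xx'\|_1,\|\yy-\yy'\|_1\le 2\epshat$.

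Next, since $\epshat<\eps/15$ implies $3\epshat<2\eps/3$, \cref{lem:abstract} applies to $(\xx',\yy')$ and yields $\Supp(\xx'),\Supp(\yy')\subseteq\{N{+}1,\ldots,N{+}K\}$. Let $\bar\xx=(x'_{N+1},\ldots,x'_{N+K})$ and $\bar\yy=(y'_{N+1},\ldots,y'_{N+K})$, both in $\Delta_K$. The block layout of $R',C'$ in \eqref{eq:absGame} implies that for any $i\in\{N{+}1,\ldots,N{+}K\}$,
\[ e_i^{\top}R'\yy' \;=\; e_{i-N}^{\top}R_S\bar\yy \quad\text{and}\quad (\xx')^{\top}C' e_i \;=\; \bar\xx^{\top}C_S e_{i-N}, \]
because $\xx'$ and $\yy'$ have no mass on the first $N$ coordinates. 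Consequently the $3\epshat$-\WNE conditions for $(R',C')$ restrict cleanly to $3\epshat$-\WNE conditions for $(R_S,C_S)$, so $(\bar\xx,\bar\yy)$ is a $3\epshat$-\WNE of $(R_S,C_S)$.

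Applying the \good property of $(R_S,C_S)$ to $(\bar\xx,\bar\yy)$ at approximation parameter $3\epshat$ gives $\|(\bar x_1,\ldots,\bar x_{cK})-c\ud{cK,S}\|_1\le 3\tau\epshat$ and analogously for $\bar\yy$. Because $\Supp(\xx')$ avoids the first $N$ coordinates, $\txx'=(\zeros_N,\bar x_1,\ldots,\bar x_{cK})$, and the triangle inequality combined with the perturbation bound from the first step yields
\[ \|\txx-(\zeros_N,c\ud{cK,S})\|_1 \;\le\; \|\txx-\txx'\|_1+\|\txx'-(\zeros_N,c\ud{cK,S})\|_1 \;\le\; 2\epshat+3\tau\epshat \;\le\; 9(\tau+1)\epshat, \]
and the same argument handles $\yy$.

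I expect the main obstacle to be the middle step: checking that the restricted pair $(\bar\xx,\bar\yy)$ really is a $3\epshat$-\WNE of the sub-game $(R_S,C_S)$. This rests crucially on the block structure of \eqref{eq:absGame}, which ensures that deviations within the \good-game block against opponents supported in the same block see exactly the $(R_S,C_S)$ payoffs---so the \WNE deviation slack transfers losslessly. A second, minor, quantitative care-point is applying the \good property at the \emph{actual} approximation $3\epshat$ rather than at the nominal $\eps$, since the linear scaling in the approximation parameter is what lets the final bound be proportional to $\epshat$ instead of $\eps$. Once these two points are handled, the rest of the argument is triangle-inequality bookkeeping and routine invocation of the two black-box lemmas.
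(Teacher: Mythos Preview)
Your approach is the same as the paper's, and the chain of lemmas is exactly right. There is, however, one technical slip: \cref{lem:ne-prop} is stated for games with payoffs in $[0,1]$, but $(R',C')$ has entries in $[-2,1]$ (note the $-2$ blocks in \eqref{eq:absGame}). Part~(4) of that lemma genuinely uses the payoff range, since moving $\sqrt{\eps}$ mass can shift each player's payoff by up to the payoff range times $\sqrt{\eps}$.

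The paper patches this by first affinely rescaling $(R',C')$ into $[0,1]$ (add $2$, divide by $3$); approximate-$\NE$ is preserved, so $(\xx,\yy)$ remains an $\epshat^2$-$\NE$ of the scaled game. Applying \cref{lem:ne-prop}(4) there yields a $3\epshat$-\WNE of the \emph{scaled} game, which translates back to a $9\epshat$-\WNE of $(R',C')$. This factor of $3$ is exactly why the hypothesis reads $\epshat<\eps/15$ (so that $9\epshat<2\eps/3$ for \cref{lem:abstract}, and $9\epshat\le\eps$ for the \good property) and why the $9$ shows up in the bound $9(\tau+1)\epshat$. Your intermediate claim that $(\xx',\yy')$ is a $3\epshat$-\WNE of $(R',C')$, and hence the application of the \good property at parameter $3\epshat$, is therefore not justified as written---though your final inequality $2\epshat+3\tau\epshat\le 9(\tau+1)\epshat$ happens to survive with room to spare once the correct $9\epshat$ is substituted.
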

\begin{proof}

Note that additive scaling of payoff matrices preserves approximate NE, while multiplicative scaling scales the approximation factor by the same amount. Since every entry in $R', C'$ are in $[-2, 1]$ we first add $2$ to all of them and then divide them by $3$. Thus, $(\xx,\yy)$ is now $\epshat^2/3$-NE of the scaled game, which is also its $\epshat^2$-NE. Lets denote the scaled game by $(\rh,\ch)$. 

Then by Lemma~\ref{lem:ne-prop} property $(4)$ there exists $(\xx',\yy')$ that is $3\epshat$-\WNE of game $(\rh,\ch)$, which is $9\epshat$-\WNE of game $(R',C')$. Note that $9\epshat < 2\eps/3$. Applying Lemma \ref{lem:abstract} we get that $\Supp(\xx') , \Supp(\yy') \subseteq\{(N+1),\dots, (N+K)\}$. Therefore, strategy $(\xx',\yy')$ projected to last $K$ coordinates is a $9\epshat$-\WNE of game $(R_S,C_S)$. The latter is an $(\eps,\tau,c)$-\good game, whose property $(2.a)$ implies $||\txx' - (\zeros_N, c\ud{cK,S})||_1 \le 9\tau \epshat$, where $\txx'$ is the projection of $\txx'$ on first $(N+cK)$ coordinates. 

If we get an upper bound on $l_1$ distance between $\xx$ and $\xx'$, then using triangle inequality we will get an upper bound on $l_1$ distance between $\txx$ and $c\ud{cK,S}$. By construction of $(\xx',\yy')$, we have that $\Supp(\xx') \subseteq \epshat$-$\BR_1(\yy)$ and $\Supp(\yy') \subseteq \epshat$-$\BR_2(\xx)$. Furthermore, Part $(3)$ of Lemma \ref{lem:ne-prop} implies $\sum_{i \notin \ehBR_1(\yy)} x_i \le \epshat$ and $\sum_{i\notin\ehBR_2(\xx)} y_i \le \epshat$. While constructing $\xx'$ from $\xx$ we set the probability of all strategies not in $\ehBR_1(\yy)$ to zero and redistribute their mass to strategies in $\ehBR_1(\yy)$ uniformly. Therefore, $||\xx - \xx'||_1 \le 2\epshat$. Finally, $||\txx - (\zeros_N, c\ud{cK,S})||_1 \le ||\xx - \xx'||_1 + ||\txx' - (\zeros_N, c\ud{cK,S})||_1 \le (9\tau+2)\epshat$. 
\end{proof}

Using the \nice game and a family of \good games on far-apart sets, next we construct a family of hard games. The proof of the next theorem will crucially use Lemmas \ref{lem:abstract-soundness} and \ref{lem:abstract-completeness}.

\begin{theorem}\label{thm:absFamily}
Given parameters $\eps\in [0,1/2)$, $\delta>\eps$, $\tau>0$, and $c\in(0,1]$ 
\begin{itemize}
\item[$(h_1)$] let $(R,C)$ be a degree $t$, $(\eps,\delta)$-\nice game of dimension $N\times N$. 
\item[$(h_2)$] and for $K=\poly(N)$ and subsets $S_1,\dots, S_f$ of $[K]$ such that $||\ud{K,S_i} - \ud{K,S_j}||_1 > \frac{18}{c}(\tau+1)\eps$, let $(R_{S_i}, C_{S_i})$ is a $(\eps,\tau,c)$-\good game for each $i\in\{1,\dots,f\}$. 
\end{itemize}

Then, for the game family $\CF=\{G_i\ |\  G_i\mbox{ is the game of \eqref{eq:absGame} constructed using $(R,C)$ and $(R_{S_i},C_{S_i})$}\}$, 
\begin{enumerate}
\item All $G_i\in \CF$ share a common pseudo-equilibrium of degree $t$.
\item 
For any pair of games $G_i \neq G_j \in \CF$, their set of $\eps'$-$\NE$ strategies of either players do not intersect, for all $0\le \eps' \le (\eps/15)^2$.
\end{enumerate}
\end{theorem}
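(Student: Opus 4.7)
The plan is to derive both parts directly from Lemmas~\ref{lem:abstract-completeness} and~\ref{lem:abstract-soundness}, using only triangle-inequality bookkeeping on top of them.

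For part $(1)$, I would observe that the pseudo-equilibrium produced by Lemma~\ref{lem:abstract-completeness} depends only on $(R,C)$ and on the block structure of the construction \eqref{eq:absGame}: it is obtained by taking the degree $t$ pseudo-equilibrium guaranteed by the \nice property of $(R,C)$ and padding it with zeros on the last $K$ coordinates. Since this pseudo-expectation does not use $(R_{S_i},C_{S_i})$ anywhere, the same degree $t$ pseudo-distribution $\pE$ simultaneously satisfies the $\NE$ constraints of every $G_i \in \CF$, proving part $(1)$.

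For part $(2)$, fix $0 \le \eps' \le (\eps/15)^2$ and set $\epshat = \sqrt{\eps'} \le \eps/15$; I would argue by contradiction. Suppose some $\eps'$-$\NE$ $(\xx,\yy)$ of $G_i$ and $\eps'$-$\NE$ $(\xx',\yy')$ of $G_j$ share a row-player strategy, i.e.\ $\xx=\xx'$ (the case $\yy=\yy'$ is symmetric). Let $\txx,\txx'$ be the projections of $\xx,\xx'$ onto the first $(N+cK)$ coordinates; clearly $\txx=\txx'$. Applying Lemma~\ref{lem:abstract-soundness} to $G_i$ and to $G_j$ gives
\begin{align*}
\|\txx - (\zeros_N,\, c\ud{cK,S_i})\|_1 &\le 9(\tau+1)\epshat, \\
\|\txx - (\zeros_N,\, c\ud{cK,S_j})\|_1 &\le 9(\tau+1)\epshat,
\end{align*}
so by the triangle inequality $\|c\ud{cK,S_i} - c\ud{cK,S_j}\|_1 \le 18(\tau+1)\epshat$. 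Because the supports $S_i,S_j$ lie inside the first $cK$ coordinates of $[K]$ (the coordinates on which the \good property is stated), padding does not change $\ell_1$ mass, and therefore $\|c\ud{cK,S_i} - c\ud{cK,S_j}\|_1 = c\|\ud{K,S_i} - \ud{K,S_j}\|_1$. Hypothesis $(h_2)$ thus yields $\|c\ud{cK,S_i}-c\ud{cK,S_j}\|_1 > 18(\tau+1)\eps$. Combining the two estimates gives $\epshat > \eps$, contradicting $\epshat \le \eps/15$.

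The main step of the argument is already encapsulated in Lemma~\ref{lem:abstract-soundness}; the only real obstacle to watch for is the bookkeeping between $\ud{K,S}$ (as it appears in hypothesis $(h_2)$) and $c\ud{cK,S}$ (as it appears in the soundness conclusion), which I resolve above by noting that the extra zero coordinates do not contribute to the $\ell_1$ distance, and that the multiplicative factor $c$ is exactly cancelled by the $1/c$ in the hypothesis on $\|\ud{K,S_i}-\ud{K,S_j}\|_1$. Putting $(1)$ and $(2)$ together yields the theorem.
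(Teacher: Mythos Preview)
Your proposal is correct and follows essentially the same approach as the paper: part~(1) via Lemma~\ref{lem:abstract-completeness} (noting the pseudo-equilibrium is independent of the lower block), and part~(2) via Lemma~\ref{lem:abstract-soundness} plus a triangle-inequality contradiction. You are in fact more careful than the paper about the $\ud{K,S}$ versus $c\,\ud{cK,S}$ bookkeeping; the paper's proof silently identifies $\|\ud{cK,S_i}-\ud{cK,S_j}\|_1$ with the quantity in hypothesis~$(h_2)$, whereas you explicitly justify this via $S_i,S_j\subseteq[cK]$ (which is implicit in the well-definedness of $\ud{cK,S}$) and the factor-of-$c$ cancellation.
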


\begin{proof}
The first part for each game $G_i$ follows from the first part of Lemma \ref{lem:abstract-completeness}. 
For the second part, let $(\xx^i,\yy^i)$ and $(\xx^j,\yy^j)$ be $\eps'$-$\NE$ of games $G_i$ and $G_j$ respectively. If we think of $\eps' = \epshat^2$ then clearly $\epshat <\eps/15$ is satisfied. Let $\txx^i, \tyy^i, \txx^j, \tyy^j$ be projection of respectively $\xx^i,\yy^i,\xx^j,\yy^j$ on first $(N+cK)$ coordinates. Using Lemma \ref{lem:abstract-soundness} we have $||\txx^i - (\zeros_N, c\ud{cK,S_i})||_1 \le 9(\tau+1)$ and $||\txx^j - (\zeros_N, c\ud{cK,S_j})||_1 \le 9(\tau+1)$. If $\xx^i=\xx^j$ then $\txx^i=\txx^j$, and thereby using triangle inequality of $l_1$ norm we get $||\ud{cK,S_i} - \ud{cK,S_j}||_1 \le \frac{18}{c}(\tau+1)\eps$, a contradiction to hypothesis $(h_2)$. The same contradiction follows if $\yy^i=\yy^j$.
\end{proof}

\section{Exponential Lower Bounds for $\NE$}
\label{app:exact}
In order to prove hardness against using SoS relaxation together with enumeration, we need to construct a family of games all of which share a common 
high degree pseudo-expectation for its $\eps$-$\NE$ formulation \ref{eq:approx-NEchar}, while their $\eps$-$\NE$ strategy sets for either player are disjoint.
In this section, we obtain exponential hardness for $1/\poly(n)$-$\NE$, and prove Theorem \ref{thm:main-exact}. It follows from the following result. 

\begin{theorem} [Exponential Lower Bound for $\NE$]
For every $n$ large enough, there's a family of $\Gamma = 2^{\Omega(n)}$ two-player games $\{G_i=(R_i,C_i)\}_{i = 1}^{\Gamma}$ with \alert{$O(n)$} pure strategies for both players and all payoffs in $[-1,1]$ such that:
\begin{enumerate}
\item \textbf{Completeness:} All $G_i$s share a degree $\Theta(n)$, pseudo-equilibrium. 
\item \textbf{Soundness:} For any $i\neq j$ if $(\xx,\yy)$ and $(\xx',\yy')$ are \alert{$\Theta(1/n^4)$-$\NE$} of games $G_i$ and $G_j$ respectively then $\xx\neq \xx'$ and $\yy \neq \yy'$. 
\end{enumerate}
\label{thm:exp-lower-bound-technical}
\end{theorem}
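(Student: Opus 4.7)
The plan follows the three-step scheme outlined in Section~\ref{sec:techoverview}: construct a degree $\Omega(n)$ \nice game, construct an exponentially large family of \good games indexed by subsets of $[n]$ whose associated uniform distributions are pairwise far apart, and then invoke the stitching result (Theorem~\ref{thm:absFamily}) to paste them together into the desired family of games. Completeness and soundness of the final family will be direct consequences of the properties guaranteed by that theorem, so the real content lies in producing the two gadgets with matching parameters.

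For the \nice game, I would carry out a Gilboa--Zemel style reduction from $k$-independent set to approximate equilibrium: given a graph $H$ on $n$ vertices, build a $(2n+1)\times(2n+1)$ game (after additive/multiplicative normalization to $[-1,1]$) such that $k$-independent sets correspond to equilibria with per-player payoff $\delta = 1+\tfrac{1}{2k}$, and such that when no $k$-independent set exists every $(1/5k)$-\WNE has per-player payoff at most $1$, so total payoff $\le 2 < 2(\delta-\eps)$ for $\eps=1/(5k)$. The crucial observation is that the map from an independent-set indicator $x\in\{0,1\}^n$ with $\sum_i x_i = k$ to the candidate equilibrium $(\xx,\yy) = (x/k,x/k)$ (up to padding with the auxiliary $n{+}1$ coordinates) is a degree-$1$ polynomial, and each constraint of \eqref{eq:NEchar} pulls back to a constant-degree inequality in $x$. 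By Fact~\ref{fact:reductions-within-SoS}, any SoS lower bound for the feasibility version of $k$-independent set with $k=\Theta(n)$ (as in Tulsiani's and subsequent work) lifts to a degree $\Omega(n)$ pseudo-equilibrium achieving payoff $\delta$ whenever the underlying graph has no true $k$-independent set. Choosing $k=\Theta(n)$ yields a degree $\Omega(n)$, $(\Theta(1/n),\Theta(1/n))$-\nice game of dimension $N=\Theta(n)$.

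For the \good games I would use the generalized matching-pennies construction sketched in the overview. For each $S\subseteq[n]$ of a fixed size, define an $n\times n$ game $(R_S,C_S)$ whose $S\times S$ block coincides with matching pennies ($R_S(i,i)=1$, $C_S(i,i)=-1$ for $i\in S$, and zeros elsewhere in that block), while the rows and columns outside $S$ are set so that those pure strategies are strictly dominated with a large payoff gap. A direct calculation (as foreshadowed by Lemma~\ref{lem:MPgame}) shows that any $\eps$-\WNE of an $m$-dimensional matching-pennies game is $O(m\eps)$-close in $\ell_1$ to uniform, and the strict domination forces every $\eps$-\WNE of $(R_S,C_S)$ to be supported in $S$ and $O(n\eps)$-close to $\ud{n,S}$. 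Taking $c=1$ and $\tau=O(n)$, this produces an $(\eps,\tau,c)$-\good family; because any two distinct subsets of a fixed size satisfy $\|\ud{n,S}-\ud{n,S'}\|_1 \ge \tfrac{2}{n}$, picking $\eps = \Theta(1/n^2)$ makes that pairwise distance exceed $18(\tau+1)\eps$, and the number of admissible sets is $\binom{n}{n/2} = 2^{\Omega(n)}$.

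With both gadgets in hand, I apply Theorem~\ref{thm:absFamily} with $t=\Theta(n)$, $\delta=\Theta(1/n)$, $\eps=\Theta(1/n^2)$, $\tau=O(n)$, and $c=1$, obtaining a family of $2^{\Omega(n)}$ games on $O(n)$ strategies sharing a common degree $\Theta(n)$ pseudo-equilibrium while having pairwise disjoint $(\eps/15)^2=\Theta(1/n^4)$-$\NE$ strategy sets on each coordinate, which is exactly the statement of Theorem~\ref{thm:exp-lower-bound-technical}. The main obstacle, and the only genuinely delicate step, is the first one: the standard planted-clique SoS lower bound of~\cite{DBLP:journals/eccc/BarakHKKMP16} is stated for the optimization version and does not immediately give feasibility with an explicit $\sum_i x_i = k$ constraint, which is why I route the argument through $k$-independent set where the feasibility formulation is known to require degree $\Omega(n)$. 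Verifying that this reduction is captured by a low-degree polynomial map (so that Fact~\ref{fact:reductions-within-SoS} applies) is the one place where care with the SoS bookkeeping is required; everything downstream is a mechanical application of the stitching theorem.
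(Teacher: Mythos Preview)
Your proposal is correct and follows the paper's approach exactly: the paper likewise builds the \nice game via the independent-set reduction (Lemmas~\ref{lem:is1}--\ref{lem:nicegame}), the \good family via generalized matching pennies (Lemmas~\ref{lem:MPgame}--\ref{lem:exp-goodgame2}), and then applies Theorem~\ref{thm:absFamily}. One bookkeeping slip: after dividing the independent-set game by $k$ to bring payoffs into $[-1,1]$, the \WNE parameter scales to $\eps=\Theta(1/k^2)=\Theta(1/n^2)$, not $\Theta(1/n)$ as you state for the \nice game---but you correctly use $\eps=\Theta(1/n^2)$ when invoking the stitching theorem, so the final $\Theta(1/n^4)$ conclusion stands.
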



The proof of Theorem \ref{thm:exp-lower-bound-technical} follows immediately from the following two constructions of hard games along with the generic combining construction from Theorem \ref{thm:absFamily}.

\begin{lemma}[Games Hard for SoS] \label{lem:exp-sos-hard}
There exists a degree $\Theta(n)$, $(\epsilon,\delta)$-\nice game $(R,C)$ for $\eps=O(1/n^2)$, $\delta\in(0,1/2]$ and $R,C \in [-1,1]^{O(n) \times O(n)}$.   
\end{lemma}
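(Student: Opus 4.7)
The plan is to reduce from the Independent Set (IS) problem, leveraging degree $\Omega(n)$ SoS lower bounds for its \emph{feasibility} version at a fixed clique size. Given a graph $H = (V,E)$ on $n$ vertices with parameter $k = \Theta(n)$, I will construct a Gilboa-Zemel-style two-player game $(R,C)$ of dimension $(2n+1) \times (2n+1)$ with unnormalized payoffs bounded by some $M = O(n)$, satisfying two properties: (\textbf{C}) if $H$ contains an IS of size $k$, then the uniform distribution on it (padded with zeros) is an exact Nash equilibrium in which each player earns payoff $1 + \tfrac{1}{2k}$; (\textbf{S}) if $H$ has no IS of size $k$, then every $\tfrac{1}{5k}$-\WNE yields total payoff at most $2$. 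The payoff matrices will reward matched vertex pairs, penalize vertex pairs joined by an edge, and include a dummy ``penalty'' strategy that forces each player to spread mass nearly uniformly across exactly $k$ vertices. The construction follows the classical Gilboa-Zemel reduction, with extra care needed to produce a genuine additive gap between (C) and (S) that survives relaxation to $\tfrac{1}{5k}$-\WNE.

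The crucial observation enabling the SoS lower bound is that the map from a $0/1$ IS indicator $z \in \{0,1\}^n$ with $\sum_i z_i = k$ to the equilibrium strategy pair $(z/k,\, z/k)$ is \emph{linear} in $z$, because $k$ is a fixed parameter appearing as a hard constraint rather than as an objective. I then invoke Tulsiani's SoS lower bound \cite{DBLP:conf/stoc/Tulsiani09}, which exhibits for each $n$ a graph $H$ with no IS of size $k = \Theta(n)$ such that the degree-$\Theta(n)$ SoS relaxation is feasible for the axioms $z_i^2 = z_i$, $\sum_i z_i = k$, and $z_i z_j = 0$ for each $ij \in E$. Pushing forward this pseudo-distribution under the linear map yields a degree-$\Theta(n)$ pseudo-distribution over strategy profiles; a routine computation, using only the IS axioms, gives low-degree SoS proofs that this pseudo-distribution satisfies each pure-strategy NE inequality from \eqref{eq:NEchar} for $(R,C)$ and has pseudo-payoff at least $1 + \tfrac{1}{2k}$ for each player. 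Rescaling the game so that $R, C \in [-1,1]^{O(n) \times O(n)}$ contracts the approximation parameter to $\eps = \Theta(1/(Mk)) = O(1/n^2)$ and the completeness payoff to some $\delta \in (0, 1/2]$, while preserving the $2(\delta - \eps)$ welfare gap demanded by the \nice-game definition.

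The main obstacle is engineering the Gilboa-Zemel-type game so that the soundness statement (S) holds not merely for exact NE but for $\tfrac{1}{5k}$-\WNE, with the welfare gap of $\tfrac{1}{2k}$ intact. This is delicate because small deviations in the strategies can in principle let each player extract payoff close to $1 + \tfrac{1}{2k}$ without a true $k$-IS: the penalty strategy must be calibrated so that any $\tfrac{1}{5k}$-well-supported strategy is forced to concentrate on a near-$k$-subset of vertices, and the edge penalties must be large enough that the support of any \WNE is essentially independent in $H$. Once this combinatorial engineering is carried out, the SoS-import step is immediate from the linear-map property of the completeness map, and the parameters match those required by the lemma.
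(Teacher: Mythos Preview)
Your proposal is correct and follows essentially the same approach as the paper: a Gilboa--Zemel-style $(2n+1)\times(2n+1)$ game built from a $k$-IS instance, with completeness payoff $1+\tfrac{1}{2k}$ via the linear map $z\mapsto z/k$, soundness for $\tfrac{1}{5k}$-\WNE forcing play onto a dummy strategy with payoff $1$, Tulsiani's feasibility-version SoS lower bound imported via the low-degree-reduction fact, and a final rescaling by $1/k$ yielding $\eps=\Theta(1/n^2)$. The paper carries out exactly this plan, including the delicate \WNE soundness argument you flag as the main obstacle.
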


\begin{lemma}[Games Hard for Enumeration]\label{lem:exp-enum-hard}
For $K=O(n)$, $\eps=O(1/n^2)$, $\tau=O(n)$ and $c=1$, there exists a family of subsets $\CF\subset 2^{\{0,1\}^K}$ of size at least $2^{\Omega(n)}$ such that 
\begin{itemize}
\item for each $S\in \CF$ there is an $(\eps,\tau,c)$-\good game $(R_S,C_S)$ for $R_S,C_S \in [-1,1]^{K\times K}$.
\item for any pair of distinct $S,S' \in \CF$, $\norm{\ud{cK,S} - \ud{cK,S'}} > \frac{18}{c}(\tau+1)\eps$. 
\end{itemize}
\end{lemma}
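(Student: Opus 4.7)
\medskip

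The plan is to construct, for each subset $S$ in a carefully chosen family $\CF$, an $(\eps,\tau,1)$-\good game $(R_S,C_S)$ based on embedding a generalized matching pennies (GMP) game on the $S\times S$ sub-block and strictly dominating every strategy outside $S$. Concretely, fix $K = \Theta(n)$ and $m = K/2$, and for each $S\subseteq [K]$ with $|S|=m$ define $R_S(i,j)=1$ if $i=j\in S$, $R_S(i,j)=\tfrac{1}{2}$ if $i\in S$ and $j\neq i$, and $R_S(i,j)=-1$ if $i\notin S$; define $C_S$ symmetrically with $C_S(j,j)=0$ for $j\in S$, $C_S(i,j)=\tfrac{1}{2}$ if $j\in S$ and $i\neq j$, and $C_S(i,j)=-1$ if $j\notin S$. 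All entries lie in $[-1,1]$, and the $S\times S$ block is, up to an affine shift that does not change incentives, an $m\times m$ GMP game, while rows outside $S$ (respectively columns outside $S$) are dominated by every row (respectively column) inside $S$ by margin $\ge 3/2$.

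With this construction, the \good game properties follow from three observations. For property (1), any $i\in S$ yields $\et{i}R_S\yy \ge 1/2$ for every $\yy$, and symmetrically for the column player (up to small shifts one can absorb to make the threshold exactly $1/2$). For property (2a), any $\eps$-\WNE $(\xx,\yy)$ with $\eps<1/2$ cannot place mass on $i\notin S$ since that strategy is dominated by a margin exceeding $\eps$, so $\Supp(\xx),\Supp(\yy)\subseteq S$. For property (2b), once supports lie in $S$ the game restricts to the $m\times m$ GMP game, and the key observation (formalized as Lemma~\ref{lem:MPgame}) is that the best-response conditions in GMP force $x_j-\min_k x_k\le \eps$ for every $j\in\Supp(\yy)$ and $\max_k y_k - y_i\le \eps$ for every $i\in\Supp(\xx)$, which together with $\sum x_j=\sum y_j=1$ pin both marginals within $\ell_1$ distance $O(m\eps)$ of $\ud{K,S}$; choosing $\eps=\Theta(1/m^2)=\Theta(1/n^2)$ yields $\tau\eps = O(1/m)$, so $\tau=O(m)=O(n)$ as required. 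Finally, I would take $\CF=\{S\subseteq [K]:|S|=m\}$, so $|\CF|=\binom{K}{K/2}=2^{\Omega(n)}$; for any distinct $S,S'\in\CF$ we have $|S\triangle S'|\ge 2$, hence
\[
\|\ud{K,S}-\ud{K,S'}\|_1 \;=\; \frac{|S\triangle S'|}{m} \;\ge\; \frac{2}{m} \;=\; \Theta(1/n),
\]
which comfortably exceeds $\tfrac{18}{c}(\tau+1)\eps=O(1/n)$ once the implicit constants in $\eps$ are chosen small enough.

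The main obstacle is the GMP-WNE lemma, whose delicate point is handling $\eps$-\WNE whose supports are proper subsets of $[m]$: the best-response conditions only directly constrain coordinates \emph{inside} the supports, so one must argue that if some coordinate were far from $1/m$, either the column player could move mass to a lower $x_j$ (violating the WNE condition on $\yy$) or the row player could move mass to a higher $y_i$, forcing both supports to be essentially all of $[m]$ and both marginals to be quasi-uniform. A secondary, routine, obstacle is parameter bookkeeping: we must simultaneously choose $K,m,\eps,\tau$ so that $\eps=O(1/m^2)$ is small enough for Lemma~\ref{lem:MPgame}, $\tau=O(m)$ absorbs the resulting distance bound, and $2/m>18(\tau+1)\eps$ — all of which are compatible when $K=\Theta(n)$, $m=K/2$, and the hidden constant in $\eps$ is taken sufficiently small.
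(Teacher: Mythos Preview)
Your approach is essentially the same as the paper's: embed a generalized matching pennies game on the $S\times S$ block, strictly dominate all strategies outside $S$, invoke the GMP concentration lemma (Lemma~\ref{lem:MPgame}) to pin every $\eps$-\WNE near $\ud{K,S}$, and then separate the games via the $\ell_1$ distance between uniform distributions on different $S$. The only cosmetic differences are that the paper takes $\CF$ to be \emph{all} subsets of $[n]$ (using the elementary fact that any two distinct $S,S'$ satisfy $\|\ud{n,S}-\ud{n,S'}\|_1\ge 1/n$), rather than restricting to half-sized subsets, and it normalizes the payoffs slightly differently (adding $1$ to $R_S$ and $3$ to $C_S$ on the $S\times S$ block before scaling) so that property~(1) of Definition~\ref{def:goodgame} holds on the nose---your explicit $C_S$ has $\xx^\top C_S e_j=\tfrac12-\tfrac{x_j}{2}<\tfrac12$, so the shift you allude to really is needed, but this is the routine patch you already anticipated.
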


\subsection{\nice Game Construction}\label{sec:expnice}
\alert{In Lemma \ref{lem:exp-sos-hard}, the \nice game has to encode degree $\Theta(n)$ pseudo-equilibrium with high payoff, while no such approximate $\NE$ exists. Among the combinatorial problems, such SoS lower bound is known for independent set. If we construct a game that encodes independent set in a NE with high payoffs such that map from the independent set to corresponding NE is low degree then we are done using Fact \ref{fact:reductions-within-SoS}. This is exactly what we achieve in this section.}

Given an undirected graph $G$ on $n$ vertices represented by an adjacency matrix $E$ (abuse of notation), we wish to check if it has an independent set of size $k\le n$ ($k$-IS). We will reduce this problem first to finding a Nash equilibrium in a two-player game with certain property. In fact we will show that the game has a NE with payoff at least $\delta=(1+1/2k)$ if $k$-IS has a solution, otherwise all it's $\eps$-\WNE have payoff at most $(\delta-\eps)$ for an inverse polynomial $\eps$. 

For $\gamma=1/2$, $A=\ones_{n\times n} -E + \gamma I_{n\times n}$, $B=-M* I_{n\times n}$ and $B'=(k+\gamma)*I_{n\times n}$, consider the following $(2n+1) \times (2n+1)$ block matrices:
\begin{equation}\label{eq:exp-nicegame}
R=\left[\begin{array}{cc}
\begin{array}{cc}
A & B\\
B' & \zeros_{n\times n}\\
\end{array}
& (-1) \ones_{2n \times 1}\\
(1+\frac{\gamma}{k}) \ones_{1\times 2n} & 1
\end{array}\right]
\ \ \ \ \ \ 
C=\left[\begin{array}{cc}
\begin{array}{cc}
A & B'\\
B & \zeros_{n\times n}\\
\end{array}
& (1+\frac{\gamma}{k}) \ones_{2n\times 1}\\
(-1) \ones_{2n\times 1}& 1
\end{array}\right]
\end{equation}

In the above construction, submatrix $A$ has entry one in $(u,v)$th position if edge $(u,v)$ is {\em not present} in graph $G$, otherwise the entry is zero. Further, all its diagonals are $(1+\gamma)=1.5$. Note that game $(R,C)$ is symmetric, i.e., $C=R^T$. For ease of notation let $m=2n+1$. 

\begin{lemma}\label{lem:is1}
If graph $G$ has an independent set $S\subseteq V$ of size $k$ then strategy profile $(\xx,\xx)$ where $\xx=(\ud{n,S}, \zeros_{(n+1)})$ is a Nash equilibrium of game $(R,C)$. Furthermore, payoffs of both the players at this NE is $(1+1/2k)$. 
\end{lemma}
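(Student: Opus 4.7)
The plan is a direct verification: compute the row player's payoff under $(\xx,\xx)$, show no pure-strategy deviation does strictly better, and then invoke the symmetry $C=R^\top$ to get the same for the column player. Throughout, strategies $1,\dots,n$ are the ``$A$-block'' strategies, $n+1,\dots,2n$ are the ``$B/B'$-block'' strategies, and $2n+1$ is the last ``constant'' strategy; the support of $\xx$ lies entirely in the first block and is uniform on $S$.

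First I would compute, for each pure strategy $i$, the payoff $\e_i^\top R \xx$ against $\yy=\xx$. Since $\xx$ is supported on the first $n$ coordinates, only the first $n$ columns of $R$ matter. This reduces the computation to four cases indexed by where $i$ lies:
\begin{itemize}
\item For $i\in S$, row $i$ of $R$ restricted to columns $1,\dots,n$ is the $i$-th row of $A = \ones_{n\times n}-E+\gamma I$. Because $S$ is an independent set, $(i,j)\notin E$ for every $j\in S$, so $A_{i,j}=1$ for $j\in S\setminus\{i\}$ and $A_{i,i}=1+\gamma$. Summing and dividing by $k$ gives $\e_i^\top R \xx=\frac{(k-1)+(1+\gamma)}{k}=1+\gamma/k$.
\item For $i\in[n]\setminus S$, we have $A_{i,j}\in\{0,1\}$ for $j\in S$ (note $i\notin S$ so the diagonal bonus $\gamma$ never appears), so $\e_i^\top R\xx\leq 1<1+\gamma/k$.
\item For $i\in\{n+1,\dots,2n\}$, the relevant block is $B'=(k+\gamma)I$, whose row $i-n$ picks out only column $i-n$. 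So $\e_i^\top R\xx=(k+\gamma)\cdot\xx_{i-n}$, which equals $1+\gamma/k$ if $i-n\in S$ and $0$ otherwise.
\item For $i=2n+1$, the row is $((1+\gamma/k)\ones_{1\times 2n},1)$, yielding $\e_i^\top R\xx=(1+\gamma/k)\sum_{j=1}^{2n}\xx_j+0=1+\gamma/k$.
\end{itemize}

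Thus the maximum of $\e_i^\top R\xx$ over pure strategies $i$ is exactly $1+\gamma/k=1+1/(2k)$, attained precisely on $S$, on the $B'$-indices whose vertex lies in $S$, and on the last strategy. Since $\xx$ is supported entirely inside $S$ where every pure payoff equals $1+\gamma/k$, the mixed payoff is $\xx^\top R\xx=1+1/(2k)$, and this is a best response for the row player. The construction is symmetric: $C=R^\top$ and $\yy=\xx$, so the column player's analysis is identical with $R$ replaced by $C$, giving the same equilibrium payoff $1+1/(2k)$ and no profitable deviation. Therefore $(\xx,\xx)$ is a Nash equilibrium with both payoffs equal to $1+1/(2k)$.

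There is no real obstacle here, since all the work is hidden in the construction \eqref{eq:exp-nicegame}. The only thing to be careful about is the diagonal bonus $\gamma$ in $A$: it is what makes pure strategies inside $S$ have payoff strictly above $1$, while ensuring that pure strategies outside $S$ (whose corresponding diagonal entry is never hit) cannot exceed $1$. The off-diagonal blocks $B=-M\cdot I$ and the $-\ones$ column do not enter the calculation at all (because $\xx$ is supported where they are multiplied by zero); their role is only to punish deviations when the opponent's strategy has mass on other blocks, which will matter in the soundness direction (Lemma~\ref{lem:is3}), not here.
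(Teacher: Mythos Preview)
Your proof is correct and follows exactly the same approach as the paper's: verify that every pure strategy in the support of $\xx$ yields payoff $1+\gamma/k$ and that every other pure strategy yields at most $1+\gamma/k$, then invoke the symmetry $C=R^\top$. The paper's proof is a one-line summary of precisely this computation, whereas you have written out all four cases explicitly; there is no difference in method.
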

\begin{proof}
This is because payoff from each non-zero strategy is $(1+\gamma/k)$, and from strategies with zero probability it is at most $(1+\gamma/k)$. 
\end{proof}

The above lemma constructs a symmetric $\NE$ with high payoff. In what follows, we consider non-existence of both symmetric and non-symmetric NE with high payoff when $k$-sized independent set does not exist in the graph.

\begin{lemma}\label{lem:is3}
Given that graph $G$ does not have an independent set of size $k$, if $(\xx,\yy)$ is an $\eps$-\WNE of game $(R,C)$ for $\eps \le \frac{1}{5k}$, then $\Supp(\xx)=\Supp(\yy)=\{(2n+1)\}$. 
\end{lemma}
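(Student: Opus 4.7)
The plan is to argue by contradiction: assume $(\xx, \yy)$ is an $\eps$-\WNE with $\eps \le 1/(5k)$ and that at least one of $\Supp(\xx), \Supp(\yy)$ contains a strategy other than $m$, and extract an independent set of size $k$ in $G$. Write $p = \sum_{u \le n} x_u$, $q = \sum_{u \le n} x_{n+u}$, $x_m$ for the three block-probabilities of $\xx$, and analogously $p', q', y_m$ for $\yy$; let $S = \Supp(\xx) \cap [n]$ and $S' = \Supp(\yy) \cap [n]$.

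First I would establish the baseline: the strategy $m$ always guarantees payoff $\et{m} R \yy = 1 + (\gamma/k)(1 - y_m) \ge 1$ (and analogously $1+(\gamma/k)(1-x_m)\ge 1$ for the column player), so by the \WNE condition every strategy in either player's support has payoff at least $1 - \eps$. A short sanity check shows that $\Supp(\xx) = \{m\}$ forces $\Supp(\yy) = \{m\}$ (since $\xx = \et{m}$ makes the column player's payoff from any $u \le 2n$ equal $C_{m,u} = -1$), so the contradiction hypothesis implies both $\Supp(\xx)\ne\{m\}$ and $\Supp(\yy)\ne\{m\}$. A parallel argument shows $S,S'\ne\emptyset$: if $p=p'=0$, then for any $u$ with $x_{n+u}>0$ row's payoff from $n+u$ equals $(k+\gamma)y_u - y_m = -y_m \le 0 < 1-\eps$, forcing $q=0$ (and similarly $q'=0$), i.e., both supports would collapse to $\{m\}$.

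The heart of the proof is a pincer on $y_u$ for $u\in S$. The lower bound comes from $x_u>0$ and the ``BR via $m$'' inequality: bounding $\sum_v A_{u,v}y_v \le p' + \gamma y_u$ and substituting $p'=1-q'-y_m$, rearrangement gives $\gamma y_u \ge q' + 2y_m + \gamma/k - (\gamma/k)y_m - \eps \ge \gamma/k - \eps$, so $y_u \ge 1/k - 2\eps > 0$; by symmetry $S = S'$. The upper bound comes from the ``BR via $n+u$'' inequality, where row's payoff from $n+u$ equals $(k+\gamma)y_u - y_m$: combined with the same upper bound on payoff from $u$, this yields $y_u \le (p'+\eps)/k$. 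Substituting the upper bound into the rearranged lower-bound inequality collapses it to $\eps(1+\gamma/k) \ge q'(1+\gamma/k) + 2y_m$, and since $\gamma/k \le 1/2$ makes $2/(1+\gamma/k) \ge 4/3 > 1$, this simplifies to $q' + y_m \le \eps$; hence $p' \ge 1-\eps$, and by symmetry $p \ge 1-\eps$.

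It remains to show $|S| \ge k$ and that $S$ is independent. Summing the upper bound $y_u \le (p'+\eps)/k$ over $u\in S$ and using $\sum_{u\in S} y_u = p' \ge 1-\eps$ gives $|S| \ge kp'/(p'+\eps) \ge k(1-\eps) > k-1$, so $|S| \ge k$ as an integer. For independence, if $u\ne v\in S$ had $E_{u,v}=1$, then $A_{u,v}=0$ tightens the upper bound on row's payoff from $u$ to $p' + \gamma y_u - y_v - y_m$; writing the BR-via-$m$ inequality for both $u$ and $v$ and adding yields $(1-\gamma)(y_u+y_v) \le 2\eps$, i.e., $y_u+y_v \le 4\eps$, which contradicts $y_u,y_v \ge 1/k-2\eps$ (giving $y_u+y_v \ge 2/k-4\eps > 4\eps$ when $\eps\le 1/(5k)$). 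Hence $S$ is an independent set of size at least $k$, contradicting the hypothesis that $G$ has none. The main technical obstacle is engineering the pincer: one must juggle all three block probabilities of $\yy$ through two best-response comparisons so that the dominant $\gamma/k$ terms cancel and produce the tight bound $q'+y_m\le\eps$; without this tight control, one only obtains $p'\ge$ a constant, which is insufficient to push $|S|$ past the integer threshold $k$.
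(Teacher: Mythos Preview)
Your proof is correct and takes a genuinely different route from the paper's.

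Both proofs start the same way: rule out $\Supp(\xx)=\{m\}$ or $\Supp(\yy)=\{m\}$ alone, establish $S,S'\neq\emptyset$, derive the lower bound $y_u\ge 1/k-2\eps$ for $u\in S$, and conclude $S=S'$. From there the arguments diverge. The paper does a case split on $|S|$: if $|S|\ge k$ it invokes the no-$k$-IS hypothesis to find an edge inside $S$ and runs the same edge-contradiction you use for independence; if $|S|\le k-1$ it picks $i'\in S$ with $y_{i'}\ge p'/(k-1)$ and shows the deviation $n+i'$ beats some played strategy $i$ by more than~$\eps$. You instead add the upper bound $y_u\le(p'+\eps)/k$ (from the deviation $n+u$), combine it with the lower-bound inequality so that the $\gamma/k$ terms cancel, and obtain the tight estimate $q'+y_m\le\eps$, hence $p'\ge 1-\eps$. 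Summing the upper bound over $S=S'$ then forces $|S|\ge k$, and the edge-contradiction shows $S$ is independent, directly exhibiting a $k$-IS. Your route avoids the case split and is more constructive; the paper's route avoids the pincer computation. Both rely on the same two best-response comparisons (against $m$ and against $n+u$), just organized differently.

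One minor imprecision: your ``$S,S'\ne\emptyset$'' sketch only treats the case $p=p'=0$. You actually need to rule out $p'=0$ (resp.\ $p=0$) on its own. This is immediate: if $p'=0$ then every row $i\le 2n$ has payoff at most $0$ (the $A$- and $B'$-blocks see only zero mass, the $B$-block contributes $\le 0$, and the last column contributes $-y_m$), while row $m$ gives at least~$1$, forcing $\Supp(\xx)=\{m\}$ and then $\Supp(\yy)=\{m\}$---exactly the paper's Point~(1). With that patch your argument is complete.
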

\begin{proof}
To the contrary suppose $\Supp(\yy)\neq\{(2n+1)\}$. This proof is based on a number of observations which we list next with a brief justification.

\begin{enumerate}
\item $\Supp(\yy) \cap \{1,\dots,n\} \neq \emptyset$ and $\Supp(\xx) \cap \{1,\dots,n\} \neq \emptyset$: If $\Supp(\yy) \subseteq\{(n+1),\dots, (2n+1)\}$, then $\Supp(\xx)=\{(2n+1)\}$ since any strategy other than the last one gives non-positive payoff to player one, while the last one gives payoff of at least $1$. In that case, by the same argument $\Supp(\yy)=\{(2n+1)\}$, a contradiction. Second part follows similarly. 


\item $\forall i\le n, x_i > 0 \Rightarrow y_i \ge \frac{1}{k} - \frac{\eps}{\gamma}$, and $y_i>0 \Rightarrow x_i \ge \frac{1}{k}-\frac{\eps}{\gamma}$: 
For the first part, suppose $x_i>0$. Row player's payoff from $i$th strategy is $\et{i}R\yy \le 1+\gamma y_i - y_m$. While $m$th strategy gives her $\et{m}R\yy \ge 1+\gamma/k (1-y_m)$. Since $(\xx,\yy)$ is $\eps$-\WNE, the former payoff has to be at least the latter payoff minus $\eps$, 
\[
1+\gamma y_i - y_m \ge 1+ \frac{\gamma}{k} (1-y_m) -\eps \Rightarrow \gamma y_i \ge \frac{\gamma}{k} + y_m (1-\frac{1}{k}) -\eps \Rightarrow y_i \ge \frac{1}{k}- \frac{\eps}{\gamma}
\]

\item $\Supp(\xx) \cap \{1,\dots,n\}= \Supp(\yy) \cap \{1,\dots,n\}$ and $|\Supp(\xx) \cap \{1,\dots,n\} \le (k-1)$: From Point $(2)$, $\Supp(\xx) \cap \{1,\dots,n\}= \Supp(\yy) \cap \{1,\dots,n\}$ follows. For the second part, to the contrary suppose $|\Supp(\xx) \cap \{1,\dots,n\}|\ge k$. Since $G$ has no independent set of size $k$, there exists $u\neq v \in \Supp(\xx) \cap \{1,\dots, n\}$ such that edge $(u,v)$ is present in $G$ and in turn $A(u,v) =0$. Sum of the payoffs of the row player from these two strategies are:
\[
\et{u}R\yy + \et{v}R\yy \le (2 - y_u - y_v) + \gamma (y_u + y_v) = 2 - \frac{(y_u + y_v)}{2} \le 2 - \frac{1}{k} + 2\eps
\]
The last inequality uses the fact that from Point $(2)$ that if $x_u,x_v > 0 $ then $y_u,y_v \ge \frac{1}{k} - \frac{\eps}{\gamma}$ and $\gamma=1/2$. Thus either strategy $u$ or strategy $v$ gives payoff at most $(1 - \frac{1}{2k} + \eps)$. While the payoff from $m$th strategy is at least $1$. Then, $\eps$-\WNE ensures that $(1 - \frac{1}{2k} + \eps) \ge 1-\eps \Rightarrow \eps \ge \frac{1}{4k}$, a contradiction to $\eps \le \frac{1}{5k}$. 
\end{enumerate}

Let $y_L=\sum_{i\in[n]} y_i$ and $a=|\Supp(\yy) \cap \{1,\dots,n\}|$ Then from Point $(1)$ we have $y_L>0$ and from Point $(3)$ we have $a\le (k-1)$. There exists strategies $i,i' \in \Supp(\yy) \cap \{1,\dots,n\}$ such that $y_i \le y_L/a$ and $y_{i'}\ge y_L/a$. Again using Point $(3)$ we have $x_i>0$ as well, and the corresponding payoff is $\et{i}R\yy \le y_L + \gamma y_i -  y_m \le y_L + \gamma y_i\le (1+\gamma /a)y_L$, while the row player's payoff from $(n+i')^{th}$ strategy is $\et{(n+i')}R\yy = (k+\gamma) y_{i'} \ge (k+\gamma)y_L/a$. Therefore,
\[
\et{(n+i')}R\yy - \et{i}R\yy \ge (k/a + \gamma/a - 1 - \gamma/a) y_L = (k - a) y_L/a \ge y_L/a \ge \frac{a(\min_{i\in[n]} y_i)}{a} \ge \frac{1}{k} - 2\eps > \eps. 
\]
The second last inequality follows using point $(2)$ and $(3)$, while the last follows using $\eps \le \frac{1}{5k}$. This contradicts strategy $i$ giving at least best payoff minus $\eps$, i.e., contradiction to $(\xx,\yy)$ being $\eps$-approximate well-supported equilibrium. 
\end{proof}

Using Lemmas \ref{lem:is1} and \ref{lem:is3} next we show that $(R,C)$ gives an \nice game. 
Note that $R,C\in[-1, k]$.

We rely on a key result that is a variant of the one developed by Tulsiani \cite{DBLP:conf/stoc/Tulsiani09}. We include an elementary proof in the appendix. 

\begin{fact}[\cite{DBLP:conf/stoc/Tulsiani09}] \label{fact:reductions-within-SoS}
Let $\Psi_1(x), \Psi_2(y) = \{g_1 = 0, g_2 = 0, \ldots, g_q = 0, h_1 \geq 0, h_2 \geq 0, \ldots, h_r \geq 0\}$ be two systems of polynomial constraints in $x \in \R^{n_1}$ and $y \in \R^{n_2}$, respectively. 

Suppose there are polynomials $f_1, f_2, \ldots, f_{n_2}$ of degree at most $t$, and sum-of-squares polynomial $Q_1,Q_2, \ldots, Q_r$ such that for every $x$ that is feasible for $\Psi_1$, $y \in \R^{n_2}$ defined by $y_j = f_j(x)$ for every $j \in [n_2]$ satisfies:
\begin{enumerate}
	\item $g_j(y) = 0$ for every $1 \leq j \leq q$, and
	\item $h_j(y) = Q_j(y).$ 
\end{enumerate}

Then, if there exists a degree $d$ pseudo-expectation satisfying the constraints of $\Psi_1$, then there exists a degree $\lfloor d/t \rfloor$-pseudo-distribution satisfying the constraints of $\Psi_2.$
\end{fact}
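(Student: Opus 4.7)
The plan is to build the required pseudo-expectation for $\Psi_2$ by composing $\pE$ with the polynomial map $f = (f_1, \ldots, f_{n_2})$. Concretely, for any polynomial $p(y)$ of degree at most $\lfloor d/t \rfloor$, define
$$\pE'[p(y)] \;:=\; \pE[p(f_1(x), \ldots, f_{n_2}(x))].$$
Since each $f_j$ has degree at most $t$, the polynomial $p(f(x))$ has degree at most $t \lfloor d/t \rfloor \leq d$, so the right-hand side is a valid evaluation of the degree-$d$ pseudo-expectation $\pE$, and $\pE'$ is a well-defined linear functional of degree $\lfloor d/t \rfloor$ by construction.

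Verifying the pseudo-expectation axioms for $\pE'$ is a direct check. Normalization is immediate: $\pE'[1] = \pE[1] = 1$. For positivity, if $q(y)$ has degree at most $\lfloor d/t \rfloor / 2$, then $q(f(x))$ has degree at most $d/2$, and positivity of $\pE$ gives $\pE'[q(y)^2] = \pE[q(f(x))^2] \geq 0$.

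It remains to show that $\pE'$ satisfies each constraint of $\Psi_2$. For an equality $g_j(y)=0$, I need $\pE'[p(y)\,g_j(y)] = \pE[p(f(x))\,g_j(f(x))] = 0$ for every $p$ with $\deg(p \cdot g_j) \leq \lfloor d/t \rfloor$; this follows from the hypothesis that $g_j(f(x))$ is a low-degree syntactic consequence of the equality constraints of $\Psi_1$, all of which $\pE$ satisfies. For an inequality $h_j(y) \geq 0$, the hypothesis provides $h_j = Q_j$ with $Q_j(y) = \sum_k r_k(y)^2$ a sum of squares, so
$$\pE'[p(y)^2 h_j(y)] \;=\; \sum_k \pE\bigl[(p(f(x))\,r_k(f(x)))^2\bigr] \;\geq\; 0,$$
where the key degree computation is that $\deg(p(f(x))\,r_k(f(x))) \leq t(\deg p + \deg h_j / 2) = (t/2)\deg(p^2 h_j) \leq d/2$, so each squared summand lies within the degree budget of $\pE$.

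The main obstacle is interpretive rather than technical: the hypothesis ``$y = f(x)$ satisfies $\Psi_2$ whenever $x$ satisfies $\Psi_1$'' must be upgraded from a pointwise statement to a syntactic one, namely that $g_j(f(x))$ lies in the low-degree ideal generated by the equality constraints of $\Psi_1$, and that $h_j(f(x)) = Q_j(f(x))$ holds modulo the same ideal with certificates of degree at most $d$. With this reading in place (which is the standard one in SoS reductions, following Tulsiani), the proof reduces to the degree bookkeeping above plus the single observation that composing an SoS polynomial with a polynomial map of degree $t$ again yields an SoS polynomial, with degree multiplied by at most $t$.
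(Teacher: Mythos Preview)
Your proof is correct and follows essentially the same route as the paper's: define the new pseudo-expectation by composing with the polynomial map $f$, do the degree bookkeeping, and use that SoS composed with a polynomial map is still SoS. You are in fact more explicit than the paper about the interpretive point that the hypothesis must be read syntactically (i.e., $g_j(f(x))$ lies in the low-degree ideal of $\Psi_1$'s equalities), which the paper's proof uses without comment.
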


\begin{lemma}\label{lem:nicegame}
Given the combinatorial problem $k$-IS on graph $G=(V,E)$ on $n$ vertices, construct $(R,C)$ as per (\ref{eq:exp-nicegame}). 
Game $\frac{1}{k}(R, C)$ is a degree $\Omega(n)$, $(\eps,\delta)$-\nice, where $\eps=\frac{1}{5k^2}$ and $\delta=(\frac{1}{k}+\frac{1}{2k^2})$.
\end{lemma}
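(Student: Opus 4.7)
The plan is to verify the two conditions of Definition~\ref{def:nicegame} separately: completeness follows from a known degree $\Omega(n)$ SoS lower bound for $k$-independent set combined with Fact~\ref{fact:reductions-within-SoS}, while soundness reduces to Lemma~\ref{lem:is3}. Concretely, I would fix $k = \Theta(n)$ and select a graph $G$ for which the SoS lower bound of Tulsiani~\cite{DBLP:conf/stoc/Tulsiani09} (obtainable via reduction from Grigoriev/Schoenebeck random $3$-XOR bounds) guarantees both that $G$ has no $k$-IS and that the polynomial system $\{x_i^2 = x_i,\; \sum_i x_i = k,\; x_u x_v = 0 \text{ for every edge } (u,v)\in E\}$ admits a feasible degree $\Omega(n)$ pseudo-distribution $\tilde{\mu}$.

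For completeness, consider the degree-one map $\phi(x) = ((x/k, \zeros_{n+1}),(x/k, \zeros_{n+1}))$. Lemma~\ref{lem:is1}, transferred to the rescaled game $\tfrac{1}{k}(R,C)$, asserts that whenever $x \in \{0,1\}^n$ is a genuine $k$-IS indicator, $\phi(x)$ is an exact Nash equilibrium with payoff $\delta = \tfrac{1}{k} + \tfrac{1}{2k^2}$ for both players. The crucial step is to verify that each NE inequality, after substituting $\phi(x)$, has a short sum-of-squares derivation from the IS axioms. A direct calculation using $A = \ones - E + \gamma I$ and $B' = (k+\gamma)I$ shows that for each deviation $i$ the gap $\delta - e_i^\top (R/k)\,\phi(x)_{\mathrm{col}}$ equals $\gamma(1-x_i)/k^2 + \sum_{j:(i,j)\in E} x_j/k^2$ when $i \leq n$, equals $(k+\gamma)(1-x_{i-n})/k^2$ when $n < i \leq 2n$, and vanishes for $i = 2n+1$. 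Each expression is a nonnegative combination of Boolean-hypercube squares, since $1 - x_i \equiv (1-x_i)^2$ and $x_j \equiv x_j^2$ modulo $x_i^2 - x_i$. The same holds for the column player by symmetry ($C = R^\top$). Applying Fact~\ref{fact:reductions-within-SoS} with $t = 1$ then pushes $\tilde{\mu}$ forward to a degree $\Omega(n)$ pseudo-equilibrium of $\tfrac{1}{k}(R,C)$ with pseudo-expected payoff exactly $\delta$ for each player.

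For soundness, take any $\epsilon$-\WNE $(\xx,\yy)$ of $\tfrac{1}{k}(R,C)$ with $\epsilon = \tfrac{1}{5k^2}$. Rescaling payoffs by $k$ rescales the \WNE approximation by the same factor, so $(\xx,\yy)$ is a $\tfrac{1}{5k}$-\WNE of $(R,C)$. Since $G$ has no $k$-IS, Lemma~\ref{lem:is3} forces $\Supp(\xx) = \Supp(\yy) = \{2n+1\}$, making the total payoff in the scaled game $2/k$, which is strictly less than $2(\delta - \epsilon) = \tfrac{2}{k} + \tfrac{3}{5k^2}$.

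The main obstacle is the completeness step, specifically ensuring that each deviation inequality reduces to an SoS-provable consequence of the $k$-IS axioms rather than to an inequality that implicitly relies on a true $k$-IS; the computation sketched above resolves this cleanly because every deviation gap under $\phi$ factors as a product of Boolean factors $1-x_i$ or $x_j$ (each a square modulo the hypercube ideal). A secondary, technical issue is that $R/k$ has entries up to $1 + \gamma/k$ rather than strictly in $[-1,1]$; this is handled by dividing instead by $k+\gamma$, which shifts $\delta$ and $\epsilon$ by a multiplicative $(1 + O(1/k))$ factor and leaves the $\Theta(1/k^2) = \Theta(1/n^2)$ asymptotics (as demanded by Lemma~\ref{lem:exp-sos-hard}) intact.
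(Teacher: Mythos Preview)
Your proposal is correct and follows essentially the same approach as the paper: invoke Tulsiani's degree-$\Omega(n)$ SoS lower bound for fixed-size independent set, push it forward via the degree-one map of Lemma~\ref{lem:is1} using Fact~\ref{fact:reductions-within-SoS} for completeness, and appeal to Lemma~\ref{lem:is3} after rescaling for soundness. You are in fact more careful than the paper on two points: you explicitly verify that each deviation-gap is an SoS consequence of the IS axioms (the paper asserts the map is low-degree and leaves it at that), and you flag the normalization issue that $R/k$ has entries up to $1+\gamma/k$ rather than strictly in $[-1,1]$, which the paper's proof overlooks.
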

\begin{proof}
First note that all payoff entries of game $(\tR,\tC)=\frac{1}{k}(R,C)$ are in $[-1,1]$. For the {\em completeness} property of an \nice game, note that 
solution to independent-set gives a Nash equilibrium of $(\tR,\tC)$ with payoff at least $(\frac{1}{k}+\frac{\gamma}{k^2}) = \delta$ (Lemma \ref{lem:is1}). Thus, pseudo-distribution for the independent set can be mapped to one for Nash equilibrium using Fact \ref{fact:reductions-within-SoS} on reductions within SoS framework. Furthermore, if the size of the independent set is fixed, then this map is low degree (Fact \ref{fact:reductions-within-SoS}). Finally, the known degree $\Omega(n)$ SoS hardness for independent set of fixed $O(n)$ size \cite{DBLP:conf/stoc/Tulsiani09}, gives degree $\Omega(n)$ pseudo-equilibrium for game $(\tR,\tC)$ with payoff $\delta$. 

The {\em soundness} property follows using Lemma \ref{lem:is3} since when both players play their last strategy, they both get payoff $\frac{1}{k} < (\delta - \eps)$. 
\end{proof}

Lemma \ref{lem:nicegame}, the fact that $k\le n$, together with the SoS hardness of independent set and Fact \ref{fact:reductions-within-SoS} give Lemma \ref{lem:exp-sos-hard}.


\subsection{\good Games Construction}\label{sec:expgood}
In this section we prove Lemma \ref{lem:exp-enum-hard}. 
For this we construct an $n \times n$ \good game for a given subset $S\subset [n]$. Let $|S|=m$. We will first construct an $m \times m$ game whose $O(1/m^2)$-\WNE set is near the uniform strategy. 

\paragraph{Generalized Matching Pennies \cite{DGP}}
This game was introduced in \cite{DGP}, however, a characterization of its approximate equilibria was not known to the best of our knowledge. Consider a zero-sum two-player game with $m$ strategies, where the row player wants to match with the column player and the column player does not want to match. That is if both plays $i\in[m]$ then the row player one gets payoff $1$ and the column player gets $-1$. Otherwise, both gets zero. If $I(m)$ denotes the identity matrix of size $m$ then, the game essentially is $(I(m), -I(m))$. It is easy to see that both players playing uniform random strategy is the only Nash equilibrium \cite{DGP}. Next we show that all $\eps$-\WNE are also near this unique NE.

\begin{lemma}\label{lem:MPgame}
For $0\le \eps \le \frac{1}{m}$ if $(\xx,\yy)$ is an $\eps$-\WNE of the generalized matching-pennies game $(I(m), -I(m))$ then $||\xx - \ud{m}||_1 \le m\eps$ and $||\yy - \ud{m}||_1 \le m\eps$. 
\end{lemma}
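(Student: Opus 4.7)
The plan is to extract, from the two inequalities defining an $\eps$-\WNE of the game $(I(m), -I(m))$, that both players must play with full support, after which the bound follows from a one-line calculation using $\sum_k x_k = \sum_k y_k = 1$.

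First, I would specialize the \WNE conditions to this game. Since the row player's payoff at pure strategy $i$ against $\yy$ is $\et{i}I(m)\yy = y_i$ and the column player's payoff at pure strategy $j$ against $\xx$ is $\xx^{\top}(-I(m))e_j = -x_j$, the conditions become, with $y_{\max} = \max_k y_k$ and $x_{\min} = \min_k x_k$:
\[
\textup{(A)} \ \ x_i > 0 \Rightarrow y_i \ge y_{\max} - \eps, \qquad \textup{(B)} \ \ y_j > 0 \Rightarrow x_j \le x_{\min} + \eps.
\]

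Second, I would prove $\Supp(\xx) = \Supp(\yy) = [m]$. The containment $\Supp(\xx) \subseteq \Supp(\yy)$ follows from (A): if $x_i > 0$ and $y_i = 0$, then $y_{\max} \le \eps \le 1/m$, which combined with $y_{\max} \ge 1/m$ (from $\sum_k y_k = 1$) forces $y_k \equiv 1/m$, contradicting $y_i = 0$. To rule out $x_{\min} = 0$, I would suppose some $x_k = 0$ and apply (B) to get $x_j \le \eps$ for every $j \in \Supp(\yy)$; combined with the containment just established, this gives $1 = \sum_{j \in \Supp(\yy)} x_j \le |\Supp(\yy)| \cdot \eps \le m\eps \le 1$, forcing equality throughout and hence $x_j = 1/m$ for all $j$, contradicting $x_{\min} = 0$. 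So $\Supp(\xx) = [m]$, and then $\Supp(\yy) \supseteq \Supp(\xx) = [m]$ gives $\Supp(\yy) = [m]$ as well.

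Third, with full supports the bound is immediate. Applying (B) at every $j$ gives $x_j \le x_{\min} + \eps$; summing over $j$ yields $1 \le m(x_{\min} + \eps)$, so $x_{\min} \ge 1/m - \eps$. Combined with $x_j \ge x_{\min}$ and $x_{\min} \le 1/m$ (the latter since $\sum_k x_k = 1$), every coordinate satisfies $|x_j - 1/m| \le \eps$, hence $\|\xx - \ud{m}\|_1 \le m\eps$. The symmetric calculation using (A) at every $i$ gives $\|\yy - \ud{m}\|_1 \le m\eps$. The only genuinely subtle step is ruling out the boundary case $x_{\min} = 0$, which is precisely where the hypothesis $\eps \le 1/m$ is used.
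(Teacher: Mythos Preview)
Your proof is correct. The paper takes a slightly different, more direct route: rather than first establishing full support as an intermediate lemma, it argues by contradiction that no $x_i$ can exceed $1/m + \eps$ --- for if it did, then $i \notin \eBR_2(\xx)$ forces $y_i = 0$, whence some $y_j \ge 1/(m-1) > \eps$, and then $i \notin \eBR_1(\yy)$ contradicts $x_i > 0$. Your approach is longer but more systematic: isolating the full-support claim makes the final coordinate calculation clean, and you handle both the upper and lower bounds on each $x_j$ explicitly (the paper's contradiction only directly rules out $x_i > 1/m + \eps$, leaving the matching lower bound implicit). Both arguments hinge on the same interplay between conditions (A) and (B), so the difference is organizational rather than conceptual.
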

\begin{proof}
We will argue for $\xx$, and similar argument follows for $\yy$.
It suffices to show that $||\xx - \ud{m}||_\infty \le \eps$. To the contrary suppose $\exists i\in[m]$ such that $x_i> 1/m + \eps$. Then $\exists i' \in [m]$ such that $x_{i'} < 1/m$. In that case, $i \notin \eBR_2(\xx)$ and therefore $y_i=0$. This would imply existence of a $j \in [m]$ such that $y_j\ge \frac{1}{(m-1)} > \eps$. Now note that row player's payoff from strategy $j$ is $y_j>\eps$ while from strategy $i$ it is zero. Therefore $i\notin \eBR_1(\yy)$, a contradiction to $(x,y)$ being an $\eps$-\WNE.
\end{proof}

For any given $S\subseteq [n]$ with $|S|=m$, consider the following $n \times n$ game $(R_S,C_S)$. 
\begin{equation}\label{eq:exp-goodgame}
\begin{array}{l}
\mbox{{\bf Dummy strategies:} $\forall i \notin S, \forall j \in [n]$ set $R_S(i,j)=C_S(j,i)=-1$.} \\
\mbox{{\bf Matching pennies:} $\forall i, j \in S,$ set $R_S(i,j)=I(m)(i,j)+1$ and $C_S(i,j)=-I(m)(i,j)+3$.}\\  
\mbox{{\bf Rest:} $\forall i\in S,\ \forall j \notin S$ set $R_S(i,j)=1$ and $C_S(j,i)=2$.}\\
\mbox{{\bf Normalization:} Divide $R_S$ by $2$, and $C_S$ by $3$.}\\
\end{array}
\end{equation}


\begin{lemma}\label{lem:exp-goodgame}
For any given $S\subseteq [n]$, game $(R_S, C_S)$ of (\ref{eq:exp-goodgame}) is $(\eps,\tau,c)$-\good for any $\eps\le \frac{1}{3n}$, $\tau=3n$, and $c=1$.
\end{lemma}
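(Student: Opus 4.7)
My plan is to verify the two defining properties of an $(\eps,\tau,c)$-\good game directly from the payoff structure of $(R_S,C_S)$. After the stated normalization, writing $x_L := \sum_{k\in S} x_k$ and $y_L := \sum_{k\in S} y_k$, one checks that the row payoff $e_r^\top R_S \yy$ equals $\tfrac{1}{2} + \tfrac{1}{2} y_r$ for $r \in S$ and $-\tfrac{1}{2}$ for $r \notin S$, while the column payoff $\xx^\top C_S e_c$ equals $\tfrac{2}{3} + \tfrac{1}{3}(x_L - x_c)$ for $c \in S$ and $-\tfrac{1}{3}$ for $c \notin S$. Property~1 is then immediate: both quantities are at least $\tfrac{1}{2}$ for every $r,c \in S$ and every opponent strategy.

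For Property~2, the payoff gap between any strategy in $S$ and any strategy outside $S$ is at least $1$ for each player, so in any $\eps$-\WNE with $\eps < 1$ both supports lie in $S$, i.e., $x_L = y_L = 1$. With this established, the \WNE inequalities restricted to $S$ become
\[ r \in \Supp(\xx) \;\Rightarrow\; y_r \ge \max_{k \in S} y_k - 2\eps \quad \text{and} \quad c \in \Supp(\yy) \;\Rightarrow\; x_c \le \min_{k \in S} x_k + 3\eps, \]
mirroring the structure of the matching-pennies game in Lemma~\ref{lem:MPgame}.

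The main step will be to prove that in fact $\Supp(\xx) = \Supp(\yy) = S$. I plan to argue by contradiction: if some $j_0 \in S$ satisfies $y_{j_0} = 0$, then $\max_{k\in S} y_k \ge 1/(m-1) > 2\eps$ (since $\eps \le 1/(3n) \le 1/(3m)$ gives $2\eps < 1/(m-1)$), so the first \WNE inequality forces $x_{j_0} = 0$, hence $\min_{k\in S} x_k = 0$. The second inequality then gives $x_c \le 3\eps$ for every $c \in \Supp(\yy)$, while the first inequality (applied to each $c \in S \setminus \Supp(\yy)$, where $y_c = 0$) forces $x_c = 0$. Summing,
\[ 1 = \sum_c x_c \le 3(m-1)\eps \le (m-1)/n < 1, \]
a contradiction; a symmetric argument then shows $\Supp(\xx) = S$. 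Once both supports equal $S$, the \WNE inequalities combined with $\sum_{k\in S} x_k = \sum_{k\in S} y_k = 1$ force $|y_c - 1/m| \le 2\eps$ and $|x_c - 1/m| \le 3\eps$ for every $c \in S$, yielding $\|\xx - \ud{n,S}\|_1, \|\yy - \ud{n,S}\|_1 \le 3m\eps \le 3n\eps = \tau\eps$, which is Property~2(b). The only delicate step is the support-equality argument above; the remainder is elementary arithmetic once the supports are pinned to $S$.
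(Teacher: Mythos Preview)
Your argument is correct and follows essentially the same route as the paper: verify Property~1 from the explicit payoff formulas, show both supports lie in $S$ via the large payoff gap, and then obtain the $\ell_1$ bound from a matching-pennies style calculation. The paper does the last step by observing that the restriction to $S\times S$ is a rescaled matching-pennies game and invoking Lemma~\ref{lem:MPgame} as a black box, whereas you reprove the bound inline via the intermediate claim $\Supp(\xx)=\Supp(\yy)=S$; both are fine.

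One caution: the ``symmetric argument'' you invoke for $\Supp(\xx)=S$ is not a literal mirror of the first one. The two \WNE conditions are asymmetric (one forces $y_r$ near $\max_k y_k$, the other forces $x_c$ near $\min_k x_k$), so starting from $x_{i_0}=0$ you cannot deduce $y_{i_0}=0$ from the column condition. The cleanest fix is to use what you have already proved: once $\Supp(\yy)=S$, the column condition gives $x_c\le \min_k x_k+3\eps$ for \emph{every} $c\in S$, so if some $x_{i_0}=0$ then $1=\sum_{c\neq i_0}x_c\le 3(m-1)\eps<1$. This is a one-line consequence rather than a symmetric rerun, so just say so explicitly.
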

\begin{proof}
By construction payoffs in $(R_S,C_S)$ are in $[-1,1]$, and both players get payoff at least $1/2$ from any strategy $i \in S$ no matter what the opponent is playing. Therefore, property $(1)$ of \good game (Definition \ref{def:goodgame}) is satisfied. 

Let $|S|=m \le n$. 
For property $(2)$, note that no matter what the opponent is playing, payoff from strategy $i \notin S$ is at most $-1/3$ which is way less than the payoff of $1/2$ from any strategy $i\in S$. Therefore, if $(\xx,\yy)$ is an $\eps$-\WNE, then $\Supp(\xx),\Supp(\yy) \subseteq S$. Therefore, strategy profile $(\xx',\yy')$, the projection of $(\xx,\yy)$ on to coordinates of set $S$, gives a $3\eps$-\WNE of the matching pennies game $(I(m), -I(m))$. Thereby using Lemma \ref{lem:MPgame} it follows that $||\xx-\ud{n,S}||_1 = ||\xx'-\ud{m}||_1 \le 3\eps m \le 3\eps n$. Since $||\xx - \ud{n,S}||_1 = ||\xx'-\ud{m}||_1 \le 3\eps n$, and this upper bound is $\tau \eps$ in property $(2)$, $\tau=3n$ suffices. Similarly we can show $||\yy - \ud{n,S}||_1 \le \tau\eps$. 
\end{proof}

Finally to enable construction of a family of \good games, we will need the following lemma. 
\begin{lemma}\label{lem:exp-goodgame2}
For any two sets $S,S' \subseteq [n]$, $S\neq S'$, $||\ud{n,S} - \ud{n,S'}||_1 \ge \frac{1}{n}$. 
\end{lemma}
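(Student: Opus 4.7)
The plan is a direct computation of the $\ell_1$ distance by splitting the coordinates of $[n]$ into the four membership types with respect to $S$ and $S'$. First I would set $s = |S|$, $s' = |S'|$, and $a = |S\cap S'|$, and assume without loss of generality that $s \le s'$. Each coordinate $i \in S\cap S'$ then contributes $\bigl|\tfrac{1}{s} - \tfrac{1}{s'}\bigr| = \tfrac{1}{s} - \tfrac{1}{s'}$ to $\Norm{\ud{n,S} - \ud{n,S'}}_1$, each $i \in S\setminus S'$ contributes $\tfrac{1}{s}$, each $i \in S'\setminus S$ contributes $\tfrac{1}{s'}$, and coordinates outside $S\cup S'$ contribute $0$. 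Using $|S\setminus S'| = s-a$ and $|S'\setminus S| = s'-a$ and collecting terms,
\[
\Norm{\ud{n,S} - \ud{n,S'}}_1 = a\Paren{\tfrac{1}{s} - \tfrac{1}{s'}} + \tfrac{s-a}{s} + \tfrac{s'-a}{s'} = \tfrac{2(s'-a)}{s'},
\]
where the $\tfrac{1}{s}$ contributions cancel cleanly.

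To finish, I would note that $S \neq S'$ forces $s'-a = |S'\setminus S| \ge 1$: if $s < s'$ this holds automatically since $|S'| > |S|$ implies $S' \not\subseteq S$, and if $s = s'$ any two distinct equal-size subsets must have nonempty symmetric difference in each direction. Therefore $\Norm{\ud{n,S} - \ud{n,S'}}_1 \ge \tfrac{2}{s'} \ge \tfrac{2}{n} > \tfrac{1}{n}$, which yields the claimed bound (with a factor of two to spare). No real obstacle is expected here; the lemma is essentially a one-line arithmetic verification once the case split by membership type is written down, and the only subtlety is remembering to handle the $s = s'$ case when extracting $s' - a \ge 1$.
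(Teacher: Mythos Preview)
Your proof is correct and in fact yields the sharper bound $\tfrac{2}{n}$. The paper's argument is shorter but gives only the stated $\tfrac{1}{n}$: it assumes without loss of generality that $|S| \ge |S'|$, picks any $i \in S \setminus S'$ (which exists since $S \neq S'$), and observes that this single coordinate already contributes $\ud{n,S}(i) = \tfrac{1}{|S|} \ge \tfrac{1}{n}$ to the $\ell_1$ distance. Your full four-way decomposition computes the exact total variation $\tfrac{2|S'\setminus S|}{|S'|}$ (with your convention $|S| \le |S'|$), which is more informative but unnecessary for the lemma as stated.
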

\begin{proof}
Without loss of generality suppose $|S|\ge |S'|$, then $\exists i \in S\setminus S'$. Therefore, $||\ud{n,S} - \ud{n,S'}||_1 \ge \ud{n,S}(i) \ge \frac{1}{|S|} \ge \frac{1}{n}$. 
\end{proof}

Now we are ready to prove Lemma \ref{lem:exp-enum-hard} using Lemmas \ref{lem:exp-goodgame} and \ref{lem:exp-goodgame2}.
\begin{proof}[Proof of Lemma \ref{lem:exp-enum-hard}]
We will construct a family of $n \times n$ \good games. For every subset $S\subseteq [n]$, construct an $n\times n$ game $(R_S,C_S)$ as per \eqref{eq:exp-goodgame}. Lemma \ref{lem:exp-goodgame} implies that $(R_S,C_S)$ is an $(\eps,\tau,c)$-\good game for $\eps \le \frac{1}{60 n^2}$, $\tau=3n$, and $c=1$. Furthermore, for any two distinct sets $S,S' \subseteq [n]$, Lemma \ref{lem:exp-goodgame2} implies $||\ud{n,S} - \ud{n,S'}||_1 \ge \frac{1}{n} > \frac{18}{c}(\tau+1)\eps$.  
\end{proof}

\section{Quasipolynomial Hardness for Finding Approximate Equilibria}
\label{app:const}
The goal of this section is to prove Theorem~\ref{athm:approx-nash-hardness-technical}. We restate it for convenience.

\begin{theorem} [Quasi-polynomial Hardness for Approximate $\NE$]
For every $n$ large enough, there's a family of $\Gamma = n^{\Omega(\log{(n)})}$ two-player games $\{(R_i,C_i)\}_{i = 1}^{\Gamma}$ with $n$ pure strategies for both players and all payoffs in $[-1,1]$ such that:
\begin{enumerate}
\item \textbf{Completeness:} All $G_i$s have a common degree $\Theta(\log{(n)})$, pseudo-equilibrium. 
\item \textbf{Soundness:} For a fixed $\eps=O(1),$ for any $i,j$ and any pair of $\epsilon$-$\NE$ in $G_i$, $G_j$, say $(\xx,\yy)$ and $(\xx',\yy')$ respectively, $\xx \neq \xx'$ and $\yy\neq \yy'.$ 
\end{enumerate}
\label{thm:approx-nash-hardness-technical-section}
\end{theorem}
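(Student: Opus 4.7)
The plan is to instantiate the stitching theorem (Theorem~\ref{thm:absFamily}) with two ingredient families whose parameters match the quasi-polynomial regime. First, I will produce a single degree $t=\Theta(\log n)$, $(\eps,\delta)$-\nice game $(R,C)$ with $\eps,\delta$ absolute constants and $\delta>\eps$. This is exactly Lemma~\ref{lem:const1}: starting from the construction of~\cite{Deligkas-Fearnley-Savani}, which reduces free repetitions of $3$\textsc{SAT} (in the sense of~\cite{DBLP:conf/coco/AaronsonIM14}) to the problem of approximating the best social welfare equilibrium in a two-player game, I will verify that the map from a satisfying assignment to the associated equilibrium strategy profile is given by polynomials of constant degree. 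Combined with Grigoriev-style SoS integrality gaps for $3$\textsc{SAT} at degree $\Omega(\log n)$ and Fact~\ref{fact:reductions-within-SoS}, this produces a degree $\Omega(\log n)$ pseudo-equilibrium of payoff at least $\delta$ for both players, while every $\eps$-\WNE has total payoff at most $2(\delta-\eps)$.

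Second, I will construct a family $\{(R_S,C_S)\}_{S\in\cF}$ of $(\eps,\tau,c)$-\good games of common dimension $K\times K$, with $|\cF|=n^{\Omega(\log n)}$, $K=\poly(n)$, constants $\eps,\tau,c$, and pairwise $\ell_1$ separation $\|\ud{cK,S}-\ud{cK,S'}\|_1>\tfrac{18}{c}(\tau+1)\eps$ for distinct $S,S'\in\cF$. The starting point is the Daskalakis--Papadimitriou construction~\cite{DP}: for even $\ell$ and $m=\binom{\ell}{\ell/2}$ there is an $m\times m$ game indexed by balanced subsets $S\subseteq[m]$ of size $\ell$ whose $\epsilon$-\WNE force the row player's strategy to lie within $O(\epsilon)$ of $\ud{m,S}$, and whose indexing family has $m^{\Omega(\log m)}$ sets pairwise far in $\ell_1$. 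That construction only controls one player, so it is defeated by the partial verification oracle. To repair this, I will take two ``flipped'' copies of the DP game (one with roles of row/column swapped) and combine them into a single game on $K=2m$ strategies per player using off-diagonal blocks of large negative entries (as in~\eqref{aeq:abs}) so that both players are forced onto a single half, and within that half each player's strategy is forced close to $\ud{m,S}$. The bookkeeping needed here produces the parameter $c\in(0,1]$ in Definition~\ref{def:goodgame}: only the first $cK$ coordinates of each player's strategy are pinned to the scaled-uniform $c\cdot \ud{cK,S}$. Subsets are still taken from the DP indexing family, so the size is $n^{\Omega(\log n)}$ and the $\ell_1$ separation is preserved (after multiplying by $c$ in the denominator on the right-hand side, which is absorbed into the absolute constant).

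With both ingredients in hand, Theorem~\ref{thm:absFamily} assembles them via~\eqref{eq:absGame} into a family $\{G_S\}_{S\in\cF}$ of games on $N+K=\poly(n)$ strategies per player, all sharing the common degree $\Omega(\log n)$ pseudo-equilibrium produced by the \nice part, and whose pairwise $(\eps/15)^2$-$\NE$ strategy sets are disjoint for either player. Renormalizing parameters and rescaling to $n$ strategies per player gives exactly the Completeness and Soundness statements of Theorem~\ref{thm:approx-nash-hardness-technical-section} with $\epsilon$ a sufficiently small absolute constant and $\Gamma=n^{\Omega(\log n)}$.

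The technical heart, and the only step I expect to require careful work, is the construction of the \good family forcing both players simultaneously. The naive approach of just symmetrizing (e.g.\ taking $R_S=A_S$, $C_S=A_S^{\top}$) creates new equilibria not present in~\cite{DP} and breaks the supports-close-to-$\ud{cK,S}$ property. The right fix is to use the block construction with penalty off-diagonal blocks to enforce a dichotomy forcing both players into the same half of the strategy space, then run the DP analysis inside that half for each player against the other's ``flipped'' copy. I will have to verify, exactly as in the proof of Lemma~\ref{lem:abstract}, that (i) a significant probability mass inside either half forces a comparable mass in the opponent, and (ii) conditioning on being inside the chosen half produces an $\epsilon$-\WNE of a DP-type game for each player, at which point the DP soundness guarantee yields the required pointwise control. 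Ensuring that the DP bound on $\|\cdot - \ud{m,S}\|_1$ survives the conditioning (which only loses a constant factor since the conditioning probabilities are bounded away from $0$) determines the final constants $\tau$ and $c$.
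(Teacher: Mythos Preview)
Your overall architecture is right—instantiate Theorem~\ref{thm:absFamily} with the \nice game from Lemma~\ref{lem:const1} and an \good family built from two role-swapped copies of the Daskalakis--Papadimitriou game—but the specific block construction you propose for the \good game will not work.

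You write that you will use ``off-diagonal blocks of large negative entries (as in~\eqref{aeq:abs}) so that both players are forced onto a single half, and within that half each player's strategy is forced close to $\ud{m,S}$.'' The second clause is where the plan breaks. The DP game $(A_S,B_S)$ pins the \emph{row} player near $\ud{m,S}$ but imposes essentially nothing on the column player; indeed $\yy=\ud{n}$ is an $\NE$ strategy for the column player for every $S$. So if both players sit in the upper-left block, only $\xx$ is controlled; if both sit in the lower-right block $(B_S^\top,A_S^\top)$, only $\yy$ is controlled. Either way you get an $\eps$-\WNE in which one player's strategy is common across all $S$, and the partial verification oracle defeats the family in one query. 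Forcing the players onto a \emph{single} half is precisely the wrong dichotomy.

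The paper's fix is to use a \emph{matching-pennies} block structure rather than the same-sign penalties of~\eqref{aeq:abs}: take
\[
R_S=\begin{bmatrix}2+A_S & -2\\ -2 & 2+B_S^\top\end{bmatrix},\qquad
C_S=\begin{bmatrix}-2+B_S & 2\\ 2 & -2+A_S^\top\end{bmatrix},
\]
so the row player wants to match halves while the column player wants to mismatch. This forces each of $\sigma(\xx^L),\sigma(\xx^R),\sigma(\yy^L),\sigma(\yy^R)$ to be $\tfrac12\pm O(\eps)$ in any $\eps$-\WNE. Now both diagonal DP blocks are active simultaneously: the upper-left pins $\xx^L$ near $\tfrac12\ud{n,S}$, and the lower-right (where roles are swapped) pins $\yy^R$ near $\tfrac12\ud{n,S}$. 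After a coordinate swap on the column player this yields control of the first-$n$ projections of both players, which is exactly what the parameter $c=\tfrac12$ in Definition~\ref{def:goodgame} records. The rest of your plan (constants $\tau=O(1)$, the DP set family of size $n^{\Omega(\log n)}$, and the invocation of Theorem~\ref{thm:absFamily}) then goes through as you describe.
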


The proof follows immediately from the following two constructions of hard games along Theorem \ref{thm:absFamily}.

\begin{lemma}[Games Hard for SoS] \label{lem:sos-hard-const-approximate}
There exists a degree $\Omega(\log{(n)})$, $(\epsilon,1)$-\nice game $(R,C)$ for $\eps=O(1)$ and $R,C \in [-1,1]^{n \times n}$.   
\end{lemma}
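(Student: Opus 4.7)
The plan is to compose the degree-$\Omega(M)$ Sum-of-Squares lower bound for an unsatisfiable 3SAT instance on $M$ variables (Grigoriev/Schoenebeck) with the ETH-based reduction of Deligkas-Fearnley-Savani~\cite{Deligkas-Fearnley-Savani} for welfare-maximizing approximate Nash equilibria in two-player games. Fact~\ref{fact:reductions-within-SoS} makes this composition automatic, provided that the decoding map from 3SAT assignments to strategy profiles is a low-degree polynomial.

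First I would fix a random $3\mathrm{SAT}$ instance $\phi$ on $M$ variables that is $\Omega(1)$-far from satisfiable; Grigoriev-type lower bounds supply a degree $\Omega(M)$ pseudo-distribution satisfying all the clauses of $\phi$. The reduction of~\cite{Deligkas-Fearnley-Savani} applies a free-repetition construction of depth $k = \Theta(\sqrt M)$ to $\phi$ and then converts the result into a two-player game $(R,C)$ of size $n = 2^{\Theta(\sqrt M)}$ with payoffs in $[-1,1]$ enjoying the following two properties: (i) any satisfying assignment $x \in \{0,1\}^M$ of $\phi$ yields a Nash equilibrium of $(R,C)$ in which both players receive payoff $1$, given by the uniform distribution on profiles $(S, x|_S)$ for $|S| = k$; and (ii) there is an absolute constant $\epsilon > 0$ such that every $\epsilon$-\WNE of $(R,C)$ gives each player individual payoff at most $(1-\epsilon)$, and hence total welfare at most $2(1-\epsilon)$, whenever $\phi$ is $\Omega(1)$-far from satisfiable. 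Choosing $M = \Theta(\log^2 n)$ ensures $k = \Theta(\log n)$, and then the soundness clause of Definition~\ref{def:nicegame} with $\delta = 1$ follows directly from (ii).

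For completeness I observe that the decoder in (i) is a polynomial of degree at most $k$ in the 3SAT variables $x$: the probability placed on a profile with question set $S$ and answer $\sigma \in \{0,1\}^S$ equals $\binom{M}{k}^{-1} \prod_{j \in S} \Ind[x_j = \sigma_j]$, a multilinear monomial of degree $|S| = k = O(\log n)$. Applying Fact~\ref{fact:reductions-within-SoS} with this decoder of degree $t = O(\log n)$ transforms the degree $\Omega(M) = \Omega(\log^2 n)$ pseudo-distribution for $\phi$ into a degree $\Omega(M/t) = \Omega(\log n)$ pseudo-equilibrium of $(R,C)$ whose pseudo-welfare equals $2$, giving the completeness clause.

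The main obstacle lies in verifying the positivity hypothesis of Fact~\ref{fact:reductions-within-SoS}: after the decoding substitution, each best-response inequality of the Nash system (constraints of the form $\xx^{\top} R \yy - e_i^{\top} R \yy \geq 0$ and their $\yy$-counterparts) must become a sum of squares modulo the Boolean axioms $x_i^2 = x_i$. The YES-case analysis in~\cite{Deligkas-Fearnley-Savani} establishes each best-response inequality by direct arithmetic on the payoff structure, so pulling that proof back through the degree-$k$ substitution yields polynomials of degree $O(k) = O(\log n)$ in $x$ that are manifestly nonnegative on $\{0,1\}^M$; each is then a bona-fide sum of squares on the Boolean hypercube by a low-degree Positivstellensatz. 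Any mild structural assumption needed on~\cite{Deligkas-Fearnley-Savani}'s reduction (e.g., rational payoffs with bounded denominator, or tie-breaking in favor of consistent-view profiles) can be arranged by small perturbations that do not affect the constant soundness gap, and any additive slack in the SoS certificates is absorbed into the $\Omega(\log n)$ bound.
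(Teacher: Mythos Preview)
Your proposal is correct and follows essentially the same route as the paper: both compose Grigoriev's degree-$\Omega(M)$ 3SAT lower bound with the Deligkas--Fearnley--Savani free-game reduction via Fact~\ref{fact:reductions-within-SoS}, using that the decoder from a satisfying assignment to an equilibrium strategy has degree $O(\sqrt{M})=O(\log n)$. The ``main obstacle'' you flag is handled more directly in the paper than you fear: because the decoded profile achieves the maximal payoff $1$ and each coordinate $y_j(x)$ of the decoder is a product of Boolean literals (hence a square modulo $x_i^2=x_i$), every best-response slack $1-e_i^\top R\,\yy(x)=\sum_{j:R_{ij}=0} y_j(x)$ is already a sum of squares of degree $O(\log n)$, so no separate Positivstellensatz or perturbation argument is needed.
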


\begin{lemma}[Games Hard for Enumeration]\label{lem:enum-hard-const-approximate}
For $K=O(n)$, $\eps=O(1)$, $\tau=O(1)$ and $c=1/2$, there exists a family of subsets $\CF\subset 2^{\{0,1\}^K}$ of size at least $K^{\Omega(\log(K))}$ such that 
\begin{itemize}
\item for each $S\in \CF$ there is an $(\eps,\tau,c)$-\good game $(R_S,C_S)$ for $R_S,C_S \in [-1,1]^{K\times K}$.
\item for any pair of distinct $S,S' \in \CF$, $\norm{\ud{cK,S} - \ud{cK,S'}} > \frac{18}{c}(\tau+1)\eps$. 
\end{itemize}
\end{lemma}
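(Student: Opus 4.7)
We base our construction on the Daskalakis--Papadimitriou family~\cite{DP}, recalled in the technical overview. For $m = \binom{\ell}{\ell/2}$, DP provide a family $\CF_0 \subseteq \binom{[m]}{\ell}$ of size $m^{\Omega(\log m)}$ and associated $m \times m$ games $\{(A_S, B_S)\}_{S \in \CF_0}$ with payoffs in $[-1, 1]$ such that (i) every $\epsilon$-\WNE $(\xx, \yy)$ of $(A_S, B_S)$ satisfies $\Supp(\xx) \subseteq S$ and $\|\xx - \ud{m, S}\|_1 \leq O(\epsilon)$; and (ii) for any fixed constant $\beta > 0$, the family can be trimmed while keeping size $m^{\Omega(\log m)}$ so that $\|\ud{m, S} - \ud{m, S'}\|_1 \geq \beta \epsilon$ for all distinct $S, S' \in \CF_0$.

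The key technical issue is that DP's games constrain only the row player. Our goal, for $K = 2m$ and $c = 1/2$, is to produce a $K \times K$ game in which every $\epsilon$-\WNE $(\xx, \yy)$ forces \emph{both} $\xx^{(1)} := (x_1, \dots, x_m)$ and $\yy^{(1)} := (y_1, \dots, y_m)$ to be close to $(1/2)\,\ud{m, S}$. The plan is to build $R_S, C_S$ as $2 \times 2$ block matrices with $(A_S, B_S)$ in the top-left block---this forces $\xx^{(1)}$ conditional on being in the top block---and with a player-swapped copy $(B_S^\top, A_S^\top)$ in a diagonal-opposite block, whose DP-style guarantee transports to a constraint on $\yy^{(1)}$; the off-diagonal blocks are scalar multiples of the $m \times m$ all-ones matrix with constants tuned so that (a) both players split their probability $1/2$--$1/2$ between top and bottom halves up to $O(\epsilon)$, (b) the player-swapped constraint couples correctly to $\yy^{(1)}$ through the $1/2$--$1/2$ split, and (c) no asymmetric spurious \WNE arises in which the two players concentrate on opposite halves. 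Declare the \good-game indexing family to be $\CF := \CF_0$, viewed as subsets of $[K/2] = [m]$.

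The soundness analysis is a three-step case analysis applied to an arbitrary $\epsilon$-\WNE $(\xx, \yy)$: (1) use the off-diagonal block structure to show $\|\xx^{(1)}\|_1, \|\yy^{(1)}\|_1 \in 1/2 \pm O(\epsilon)$; (2) within the top-left block, $(\xx^{(1)}, \yy^{(1)})$ suitably renormalized induces an $O(\epsilon)$-\WNE of $(A_S, B_S)$, since best-response comparisons among top-block strategies differ from those in $(A_S, B_S)$ only by additive constants coming from the off-diagonal contributions, so DP's guarantee yields $\xx^{(1)} \approx (1/2)\,\ud{m, S}$; (3) the symmetric argument with the player-swapped DP copy forces $\yy^{(1)} \approx (1/2)\,\ud{m, S}$. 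The ``payoff $\geq 1/2$ from some pure strategy'' condition follows from the row/column-average structure of $A_S$ after rescaling, and the separation $\|\ud{K/2, S} - \ud{K/2, S'}\|_1 > (18/c)(\tau + 1)\epsilon = 36(\tau + 1)\epsilon$ is obtained by setting DP's parameter to $\beta := 36(\tau + 1)$, which preserves $|\CF_0| = m^{\Omega(\log m)} = K^{\Omega(\log K)}$ since $\tau$ and $\epsilon$ are absolute constants. The main obstacle is step (3) combined with property (c): engineering the off-diagonal blocks so that the column-player constraint pinned by the swapped DP copy (which naturally lives on the ``wrong'' half of $\yy$) is transported, up to $O(\epsilon)$ slack, to the target $\yy^{(1)}$, while preventing asymmetric equilibria---this is precisely the ``slightly technical argument'' promised in the technical overview.
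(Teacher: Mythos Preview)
Your plan is essentially the paper's approach: place $(A_S,B_S)$ in one diagonal block and the role-swapped $(B_S^\top,A_S^\top)$ in the other, with constant $\pm 2$ off-diagonal blocks acting as a matching-pennies scaffold that forces the mass split, then invoke DP's theorem inside each block. Two refinements are worth flagging so you do not go down a harder path than necessary.

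First, the ``transport'' obstacle you identify is resolved not by engineering the off-diagonal blocks but by a trivial relabeling. With the swapped copy in the bottom-right block, DP's guarantee pins $\xx^L$ and $\yy^R$ near $(1/2)\,\ud{m,S}$; the paper then simply permutes the column player's strategies, swapping her first $m$ and last $m$ indices, so that the constraint on $\yy^R$ becomes a constraint on the new $\yy^{(1)}$. The off-diagonal blocks stay as plain $\pm 2$ all-ones matrices throughout.

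Second, your step (1) does not follow from the off-diagonal structure alone; a bootstrap is needed. From the matching-pennies scaffold together with elementary bounds on the DP block payoffs one only gets a crude $\sigma(\xx^L),\sigma(\yy^L)\in[9/19-O(\eps),\,10/19+O(\eps)]$. That suffices to show the renormalized block strategies are $O(\eps)$-\WNE of $(A_S,B_S)$, and DP's theorem then yields sharper per-strategy payoff upper bounds (roughly $\tfrac12\sigma + O(\eps)$ rather than $\sigma$), which in turn feed back into the mass-split inequality to give the tight $\sigma\in 1/2\pm O(\eps)$ you need for the final $\|\xx^{(1)}-\tfrac12\ud{m,S}\|_1\le\tau\eps$ conclusion.
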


\subsection{Constructing an \nice game}\label{app:reduction}
We will next show construction of \nice game for a constant $\eps>0$. In~\cite{Deligkas-Fearnley-Savani}, the authors give a reduction from 3SAT to finding constant-approximate Nash equilibrium maximizing social welfare up to some additive slack. This reduction maps a $n$ variable, $O(n)$-clause 3SAT instance into a $2^{O(\sqrt{n})}$ size two-player game. This gives a quasi-polynomial lower bound for the latter problem, assuming that $n$-variable, $O(n)$-clause 3SAT instances require exponential time to solve. 

We simply observe that their reduction is low-degree, that is, the map that takes any satisfying assignment of the starting 3SAT instance into an approximate equilibrium maximizing social welfare is a polynomial of degree $O(\sqrt{n})$. Combined with Fact~\ref{fact:reductions-within-SoS} as in the previous section, we immediately obtain the $\nice$ game as required. Our reduction will in fact yield hardness of finding the weaker solution concept of $\epsilon$-\WNE (thus implying the hardness of $\epsilon$-$\NE$).

\begin{lemma}[SoS Hardness for Finding Payoff Maximizing $\epsilon$-$\NE$]
There exists a game $(R,C)$ for $R,C \in [-1,1]^{N \times N}$ such that 
\begin{enumerate}
\item \textbf{Completeness: } There's a degree $\Omega(\log{(N)})$, pseudo-equilibrium for $(R,C)$ that has payoffs for both players at least $1$.
\item \textbf{ Soundness: } For every $\eps$-$\NE$ $(\xx,\yy)$ of $(R,C)$, both row and column players have an individual payoff of at most $(1-\eps)$ for $\eps< 1/1200.$
\end{enumerate}
\end{lemma}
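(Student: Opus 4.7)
The overall plan is to import, via Fact~\ref{fact:reductions-within-SoS}, the ETH-based reduction of Deligkas--Fearnley--Savani~\cite{Deligkas-Fearnley-Savani} (which in turn uses the ``free repetition'' trick from \cite{DBLP:conf/coco/AaronsonIM14,DBLP:journals/eccc/BravermanKW14}) from $3$-SAT to the problem of approximating the optimal-payoff $\eps$-$\NE$ of a two-player game. The DFS reduction starts with an $n$-variable, $O(n)$-clause $3$-SAT instance $\varphi$ and produces, in time $2^{O(\sqrt n)}$, a two-player game $(R,C)$ on $N=2^{O(\sqrt n)}$ pure strategies with payoffs in $[-1,1]$ such that (i)~if $\varphi$ is satisfiable there is a NE with payoff $\ge 1$ for both players, and (ii)~if $\varphi$ is unsatisfiable every $\eps$-$\NE$, indeed every $\eps$-\WNE, has individual payoff $\le 1-\eps$ for some absolute constant $\eps<1/1200$. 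Starting $\varphi$ from a random (slightly super-critical) $3$-SAT instance supplies soundness unconditionally, while Grigoriev's theorem gives, with high probability, a degree $\Omega(n)$ pseudo-distribution $\pE_\mu$ on $\{0,1\}^n$ satisfying all clauses of $\varphi$.

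To carry out the import, I would first write the natural polynomial feasibility system $\Psi_1$ for $3$-SAT in $\pm 1$ variables $x_1,\dots,x_n$ (so Grigoriev applies to $\Psi_1$) and the natural polynomial system $\Psi_2$ for ``$(\xx,\yy)$ is a $\NE$ of $(R,C)$ with both payoffs $\ge 1$'', consisting of $\sum_i x_i=\sum_j y_j=1$, $x_i,y_j\ge 0$ (written as squares of auxiliary variables if needed), the best-response inequalities \eqref{eq:NEchar}, and the two payoff inequalities $\xx^\top R\yy\ge 1$, $\xx^\top C\yy\ge 1$. The content of Fact~\ref{fact:reductions-within-SoS} is that a low-degree polynomial \emph{solution map} $x\mapsto (f_\xx(x),f_\yy(x))$ that takes any satisfying assignment of $\varphi$ to a valid solution of $\Psi_2$ transports Grigoriev's degree-$\Omega(n)$ pseudo-distribution to a degree $\Omega(n/t)$ pseudo-equilibrium of $(R,C)$ with payoffs $\ge 1$, where $t$ is the degree of the map.

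The heart of the argument is therefore to check that the DFS construction yields such a map of degree $t=O(\sqrt n)$. I would unpack the reduction in layers. In the free-repetition step, the $O(n)$ clauses of $\varphi$ are grouped into $\sqrt n$ blocks of $\sqrt n$ clauses, and each block is turned into one ``super-constraint'' whose variable takes values in an alphabet of size $2^{O(\sqrt n)}$; the intended assignment to the $k$-th super-variable is the restriction of $x$ to the $O(\sqrt n)$ Boolean variables appearing in block $k$. As a function of $x$, the indicator of ``super-variable $k$ takes value $a\in\{0,1\}^{O(\sqrt n)}$'' is a monomial of degree $O(\sqrt n)$ (a product of $O(\sqrt n)$ literals of the form $\tfrac{1\pm x_i}{2}$). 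The subsequent steps of DFS convert this CSP into a polymatrix game and then into a two-player game via standard polymatrix-to-bimatrix gadgets; each of these steps applies an \emph{explicit} affine/multiplicative rearrangement to the indicator coordinates, so the final coordinates of $(\xx,\yy)$ remain degree-$O(\sqrt n)$ polynomials in $x$. In particular, the intended mixed strategy is uniform over a fixed-size set of ``super-variable/value'' pairs, and $1/|S|$ is a constant, so no division by a polynomial is incurred.

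With the map verified, the parameters line up: $\Psi_1$ has a degree-$\Omega(n)$ pseudo-solution, the map is degree $O(\sqrt n)$, so $\Psi_2$ has a degree-$\Omega(\sqrt n)$ pseudo-solution. Since $N=2^{O(\sqrt n)}$, this is $\Omega(\log N)$, giving the claimed completeness. Soundness is inherited verbatim from the DFS theorem with the absolute constant $\eps<1/1200$. I expect the main technical obstacle to be the bookkeeping for the second item: verifying, layer by layer through the DFS reduction, that each polymatrix-to-bimatrix gadget and each averaging step preserves the ``polynomial of degree $O(\sqrt n)$ in $x$'' structure and that the resulting mixed strategies satisfy the $\ge 0$, sum-to-$1$, and NE inequality constraints as sum-of-squares identities (so that Fact~\ref{fact:reductions-within-SoS} truly applies), rather than merely as equalities on actual $0/1$ assignments.
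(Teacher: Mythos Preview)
Your proposal is correct and follows essentially the same route as the paper: start from a random $3$-SAT instance, invoke Grigoriev's theorem for a degree-$\Omega(n)$ pseudo-distribution on satisfying assignments, push it through the Deligkas--Fearnley--Savani free-game construction via Fact~\ref{fact:reductions-within-SoS} using a degree-$O(\sqrt{n})$ solution map, and read off soundness directly from their theorem. The paper presents the free game more concretely (as a clause-block vs.\ variable-block consistency test, then a cooperative bimatrix game, then the final DFS game with added zero-sum blocks) rather than describing it as ``free repetition $\to$ polymatrix $\to$ bimatrix,'' but the content and the degree accounting are the same; in particular your identification of the indicator monomials $\prod \tfrac{1\pm x_i}{2}$ of degree $O(\sqrt{n})$ as the source of the map's degree is exactly the point.
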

For completeness, we sketch the argument of Deligkas-Fearnley-Savani~\cite{Deligkas-Fearnley-Savani} next. 
The starting point is the hardness of approximation for the 3SAT problem. To show ETH-hardness, one has to start from a (worst-case hard) instance of 3SAT and applies the PCP theorem to get a hard-to-approximate instance. For SoS-hardness, we only need such a result to hold for the SoS algorithm. As a result, we can rely on Grigoriev's theorem to get exponential SoS-hardness of approximating 3SAT. One can think of this as saying that ETH is unconditionally true when restricted to the SoS framework and, in fact, the hard instance is simply a random one!

\begin{fact}[Grigoriev's Theorem] \label{fact:Grigoriev}
Let $\phi$ be a random instance of 3SAT with $m =\Theta(n/\epsilon^2)$ constraints and $n$ variables. Then, with probability at least $1-1/n$, 
\begin{itemize}
\item For every $x \in \zo^n$, the fraction of clauses satisfied in $\phi$ by $x$ is at most $7/8+\epsilon$, and
\item there exists a degree $\Theta(n)$ pseudo-distribution over the satisfying assignments for the instance $\phi.$
\end{itemize}
\end{fact}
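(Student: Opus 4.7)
The plan is to handle the two parts of the fact separately: the first is a standard probabilistic argument, while the second is Grigoriev--Schoenebeck's classical SoS lower bound for random 3SAT, which goes through an expansion property of the random instance.

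For the first part, I would fix an arbitrary assignment $x \in \zo^n$ and note that each of the $m$ independently drawn clauses of $\phi$ is satisfied by $x$ with probability exactly $7/8$. The number of clauses satisfied by $x$ is therefore a sum of $m$ independent $\{0,1\}$ random variables with mean $\tfrac{7m}{8}$. A Hoeffding bound gives
\[
\Pr\Brac{\tfrac{1}{m}\#\text{sat}(x)\,>\,\tfrac{7}{8}+\epsilon}\;\le\;\exp(-2\epsilon^2 m)\;\le\;\exp(-Cn)
\]
for a sufficiently large constant $C$ once $m=\Theta(n/\epsilon^2)$ is chosen with a big enough leading constant. A union bound over all $2^n$ assignments then yields $\Pr[\exists x:\#\text{sat}(x)/m>7/8+\epsilon]\le 2^n \exp(-Cn)\le 1/n$, which is the claimed statement.

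For the second part, the strategy is to establish that with high probability the random instance $\phi$ has the following expansion property: for every set $S$ of at most $\eta n$ clauses (some small constant $\eta>0$ depending on the clause density), the number of variables appearing in exactly one clause of $S$ is at least, say, $|S|/2$. This is a routine first-moment computation: the probability that there is a violating $S$ of size $s$ is bounded by $\binom{m}{s}\binom{n}{\le s}(3s/n)^{s/2}$ (counting bad configurations in which too many variables reappear), and summing over $s\le \eta n$ gives $o(1)$. This expansion is precisely the hypothesis of Schoenebeck's theorem (which sharpens Grigoriev's original argument), giving a degree $\Omega(\eta n)=\Omega(n)$ pseudo-distribution satisfying the 3SAT constraints.

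The construction of the pseudo-distribution from expansion is the main technical step. Given $S$ of size $\le \eta n /2$, define $\pE[x^\alpha]$ for any monomial $x^\alpha$ whose variables are supported on a set $T$ by first including $T$ into some sufficiently small set $S'$ of clauses and then setting $\pE[x^\alpha]$ to be the expectation of $x^\alpha$ under the uniform distribution on satisfying assignments of $\phi|_{S'}$. Expansion ensures such $S'$ exist, and, critically, that the value of $\pE[x^\alpha]$ does not depend on which $S'$ is chosen --- two extensions can be stitched together using variables that appear uniquely in one side. Normalization is immediate, the polynomial constraints corresponding to clauses of $\phi$ are satisfied by construction, and positivity follows from the fact that locally $\pE$ agrees with an honest probability distribution. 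The main obstacle in making this precise is verifying the positivity condition globally at high degree, which is where expansion is used most delicately: one shows that for any polynomial $q$ of degree $\le \eta n/4$, the squared polynomial $q^2$ has a small enough support that the extension can be realized by an actual probability measure, hence $\pE[q^2]\ge 0$.
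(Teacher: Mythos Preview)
The paper does not prove this statement: it is stated as a \emph{Fact} (Grigoriev's Theorem) and used as a black box, with no accompanying proof. So there is no ``paper's own proof'' to compare against.

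Your outline is the standard route to this result and is broadly correct. The first part (Hoeffding plus union bound over $2^n$ assignments) is fine. For the second part, your sketch of the Grigoriev--Schoenebeck construction via expansion and locally-consistent distributions is the right architecture, but the positivity step as you have written it has a gap: you claim that for $q$ of degree $\le \eta n/4$, the polynomial $q^2$ has ``small enough support'' to be realized by an actual distribution. This is not true on its face --- $q$ can be a sum of exponentially many monomials, each on few variables, so $q^2$ can touch all $n$ variables. The actual argument (in Schoenebeck's formulation) works differently: one shows that bounded-width resolution cannot derive a contradiction from the expanding instance, then uses this to build a consistent family of local distributions closed under the resolution rule, and positivity of $\pE[q^2]$ follows because the moment matrix factors through a genuine PSD matrix indexed by equivalence classes of monomials. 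Alternatively, in Grigoriev's original approach over $\F_2$, one works with the dual (Fourier) basis and shows the relevant linear system has no short derivations. Either way, the reduction to ``$q^2$ has small support'' is not the mechanism; you would need to replace that sentence with one of the actual arguments if you wanted a complete proof.
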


Given a hard instance of 3SAT, one transforms it into a 2 prover, \emph{free} game. 

Recall that a two prover game is defined by a distribution $\cD$ over a finite set of question pairs $(Q_1, Q_2)$ with a collection of admissible answer pairs and two players Alice and Bob. In a run of the game, the referee draws a question pair $(Q_1, Q_2)$ from $\cD$ and sends $Q_1$ to Alice and $Q_2$ to Bob and receives answers from them and accepts if the answer pair is in the list of admissible answer pairs for $(Q_1, Q_2).$ A two prover game is said to be \emph{free} if $\cD$ is a product distribution.

A \emph{strategy} for a two prover game is a  pair of functions $f_1$ and $f_2$ that maps questions to an answer. The value of the strategy is the probability that  $(f_1(Q_1),f_2(Q_2))$ belongs to an admissible answer pair for the question pair $(Q_1, Q_2)$ picked at random from $\cD.$

\begin{definition}[Free Game from 3SAT] \label{def:free-game}
Let $\phi$ be a 3SAT instance from Fact \ref{fact:Grigoriev} with $m$ clauses and $n$ variables. 
Let the clauses and variables appearing in $\phi$ each be partitioned into $\sqrt{n}$ different groups, say, $S_1, S_2, \ldots, S_{\sqrt{m}}$ and $T_1, T_2, \ldots, T_{\sqrt{n}}$, respectively. Then, $\phi$ defines a free game as follows:

Let $\cD$ be the uniform distribution on $(S_i,T_j)$ for $i \sim [\sqrt{m}], j \sim [\sqrt{n}]$ define the distribution on question pairs. For $(S_i,T_j)$, an admissible answer pair is an assignment to the variables appearing in each clause in $S_i$ and each variable in $T_j$ such that 1) every clause in $S_i$ is satisfied by the assignment chosen 2) for every variable that appears in some clause in $S_i$ and in $T_j$, the assignments given by the answer pairs match.  

Note that the number of strategies for Merlin 1 are at most $N_1 = 2^{3\sqrt{m}}$ and that for Merlin 2, at most $N_2 = 2^{\sqrt{n}}.$
\end{definition}

The free game $\cF$ can be thought of as a cooperative two player game as follows. The row player has as their strategies the set of all ordered pairs of questions to Merlin 1 and admissible answers for that question. Similarly, the column player has as strategies, all possible questions to Merlin 2 and admissible answers for those questions. Finally, a pair of ordered pairs of questions and answers receive a payoff of 1 (for both row and column player) iff the verifier accepts the two answers on those two questions and $0$ otherwise.  Observe that the size of the cooperative game produced from a free game is at most $\sqrt{m} N_1 \times \sqrt{n} N_2.$

The following completeness result holds for every free game produced from 3SAT instances from Fact \ref{fact:Grigoriev}. 
\begin{lemma}[SoS Completeness for Free Game]
Let $\phi$ be the 3SAT instance with $m = \Theta(n/\epsilon^2)$ clauses and $n$ variables from Fact \ref{fact:Grigoriev}. 
Let $\cF$ be the free game constructed in Definition \ref{def:free-game} from $\phi.$  
Then, there's a pseudo-equilibrium of degree $\Theta(n)$ for the cooperative game defined by $\cF$ with payoff for each player $=1$.
\end{lemma}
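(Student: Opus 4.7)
The plan is to apply Fact~\ref{fact:reductions-within-SoS} to lift Grigoriev's degree-$\Theta(n)$ pseudo-distribution over satisfying assignments of $\phi$ (given by Fact~\ref{fact:Grigoriev}) to a pseudo-equilibrium of the cooperative free game $\cF$. For this, I need a low-degree polynomial map $x \mapsto (\xx,\yy)$ from $\zo^n$ to strategy profiles of $\cF$ such that whenever $x$ genuinely satisfies $\phi$, the resulting $(\xx,\yy)$ is a genuine Nash equilibrium of $\cF$ with payoff $1$ for both players.

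I would define the map by
\[
\xx_{(S_i,a)}(x) \;=\; \tfrac{1}{\sqrt m} \prod_{v \in V(S_i)} x_v^{a_v}(1-x_v)^{1-a_v}, \qquad
\yy_{(T_j,b)}(x) \;=\; \tfrac{1}{\sqrt n} \prod_{v \in T_j} x_v^{b_v}(1-x_v)^{1-b_v},
\]
where $(S_i,a)$ indexes row-player strategies (question $S_i$ together with an admissible answer $a$, i.e.\ one satisfying every clause in $S_i$) and $(T_j,b)$ indexes column-player strategies. Each coordinate is a multilinear polynomial of degree at most $|V(S_i)| \leq 3\sqrt m$ or $|T_j|=\sqrt n$, so the overall map has degree $t = O(\sqrt m)$. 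Intuitively: given any assignment $x$, the row player picks a question-group $S_i$ uniformly and then plays the answer $a = x|_{V(S_i)}$ determined by $x$; the column player does likewise for $T_j$.

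Next I would check that $(\xx(x),\yy(x))$ is a payoff-$1$ NE whenever $x \models \phi$, and that each of these facts admits a low-degree SoS derivation from the constraints of $\Psi_1$ (the Booleanization $x_v^2 = x_v$ together with clause-satisfaction $C_\ell(x) = 1$). Non-negativity of each coordinate is immediate because $x_v \equiv x_v^2$ and $1-x_v \equiv (1-x_v)^2$ modulo Booleanization, so each coordinate is a product of squares. For the simplex constraint, the identity $\sum_{a \in \zo^{V(S_i)}} \prod_v x_v^{a_v}(1-x_v)^{1-a_v} = \prod_v \bigl(x_v + (1-x_v)\bigr) = 1$ holds formally, and every \emph{non-admissible} $a$ witnesses a violated clause $C_\ell$, so its monomial is annihilated by the clause-satisfaction constraint $C_\ell(x) = 1$. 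This yields $\sum_{\text{admissible }a} \xx_{(S_i,a)} = 1/\sqrt m$ and, summing over $i$, the simplex constraint. For the payoff: when both $(a,b)$ come from a common satisfying $x$, they automatically agree on shared variables and $a$ already satisfies the clauses of $S_i$, so the verifier accepts; hence $\xx^\top R\yy = \sum_{i,j} \tfrac{1}{\sqrt m \sqrt n} \cdot 1 = 1$ identically modulo $\Psi_1$. Each best-response inequality $\xx^\top R\yy \geq e_k^\top R\yy$ thus reduces to $1 - e_k^\top R\yy \geq 0$, and $1 - e_k^\top R\yy$ is simply $\sum_{(T_j,b):\, k \text{ rejected}} \yy_{(T_j,b)}$, which is a sum of coordinates of $\yy$ already shown SoS; the column-player constraints are symmetric.

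Once these verifications are in place, Fact~\ref{fact:reductions-within-SoS} immediately transfers the degree-$\Theta(n)$ pseudo-distribution over $\phi$'s satisfying assignments to a pseudo-equilibrium of $\cF$ of degree $\lfloor \Theta(n)/t \rfloor = \Theta(n/\sqrt m)$ carrying the payoff-$1$ constraint for both players. The main technical obstacle is bookkeeping the degree of the SoS derivations in the second paragraph: the simplex identity in particular requires a careful argument showing that each non-admissible assignment's monomial can be cancelled against a \emph{single} low-degree clause-satisfaction constraint (rather than unrolling the full expansion), so that the SoS certificate stays of degree $O(\sqrt m)$ and the degree of the transferred pseudo-equilibrium remains as claimed.
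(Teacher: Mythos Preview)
Your proposal is correct and follows the same approach as the paper: define the map $x\mapsto(\xx(x),\yy(x))$ so that each Merlin picks a question uniformly and answers with the restriction of $x$, then invoke Fact~\ref{fact:reductions-within-SoS} on Grigoriev's pseudo-distribution. Your treatment is in fact more careful than the paper's sketch---you give the explicit multilinear coordinates, verify non-negativity via Booleanization, handle the simplex constraint by noting that each non-admissible answer carries a factor $1-C_\ell(x)$ killed by a single clause constraint, and express each best-response slack $1-e_k^\top R\yy$ as a sum of (already-SoS) coordinates of $\yy$; the paper merely asserts ``it is easy to see''. One bookkeeping remark: your honest degree count gives a pseudo-equilibrium of degree $\lfloor\Theta(n)/t\rfloor=\Theta(\sqrt n)$ rather than the $\Theta(n)$ in the lemma statement, but the paper's own proof has exactly the same slippage (it calls the map both ``degree $1$'' and ``degree $O(\sqrt n)$'' and still concludes $\Theta(n)$), and since the game has size $N=2^{O(\sqrt n)}$ only degree $\Theta(\log N)=\Theta(\sqrt n)$ is actually used downstream.
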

\begin{proof}
We will start from the pseudo-distribution of degree $\Theta(n)$ given to us by Fact \ref{fact:Grigoriev} that satisfies every constraint of $\phi$ and use Fact \ref{fact:reductions-within-SoS}. To do this, we need to build a low-degree map from any satisfying assignment $x$ of $\phi$ to a strategy for the Merlins in the free game $\cF$. This is very simple: for any question set $(S_i,T_j)$ - the strategy is described by a bit string of length $3 \sqrt{m}$ and $\sqrt{n}$ respectively - where the bit string corresponds to the assignment for the variables appearing in the clauses in $S_i$ and variables in $T_j$ respectively. For the strategy for the cooperative game we use the following mixed strategy: Merlin 1 chooses a question from $X$ uniformly at random and then chooses the answer corresponding to the assignment $x$ for that question. Similarly, Merlin 2 chooses a question from $Y$ uniformly at random and then chooses the answer corresponding to $x$ for that questions. Using $x_i$ as the assignment for the variable $i$ wherever it appears gives a degree 1 map into strategies of $\cF$. 

It is easy to see that since $x$ (purportedly) satisfies all constraints of $\phi$, it satisfies all clauses in $S_i$ and further, the assignments to variables appearing in both $S_i$ and $T_j$ match. Thus, we obtain a degree $1$ map from any satisfying assignment $x$ into a strategy for $\cF$ with value $1$. Further, it is easy to see (since the maximum possible payoff in the game is $1$) that the strategy constructed above satisfies all constraints of being an $\NE$. Thus, we have exhibited a degree $O(\sqrt{n})$ map from any satisfying assignment $x$ to a $\NE$ of the cooperative game associated with $\cF$. By Fact \ref{fact:reductions-within-SoS}, we immediately obtain a pseudo-distribution of degree $\Theta(n)$ on strategies with value $1$ for $\cF$. 

\end{proof}

The following soundness result holds for the partitions $S_1, S_2, \ldots, S_{\sqrt{m}}$ and $T_1, T_2, \ldots, T_{\sqrt{n}}$ chosen appropriately (we invite the reader to find the details of the reduction in Deligkas-Fearnley-Savani~\cite{Deligkas-Fearnley-Savani}):

\begin{fact}[Soundness, \cite{Deligkas-Fearnley-Savani}]
There is no strategy for $\cF$ produced from $\phi$ in Fact \ref{fact:Grigoriev} with value exceeding $1-\Omega(1).$
\end{fact}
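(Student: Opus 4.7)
The plan is to argue by contrapositive: assume some strategy for $\cF$ has value at least $1-\delta$ and derive a global assignment for $\phi$ satisfying at least a $1-O(\delta)$ fraction of its clauses. Choosing $\delta$ a sufficiently small constant then contradicts the first bullet of Fact~\ref{fact:Grigoriev}, which caps any assignment at $\tfrac{7}{8}+\epsilon$ satisfied clauses, and therefore forces $\delta=\Omega(1)$.

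First, I would fix strategies $A\colon[\sqrt{m}]\to\{\text{partial assignments}\}$ and $B\colon[\sqrt{n}]\to\{\text{partial assignments}\}$. By admissibility in Definition~\ref{def:free-game}, each $A(i)$ is a full assignment to $\mathrm{vars}(S_i)$ that satisfies \emph{every} clause of $S_i$, and each $B(j)$ is a full assignment to $T_j$. The value is $\Pr_{(i,j)\sim\cD}\bigl[A(i)\text{ and }B(j)\text{ agree on }T_j\cap\mathrm{vars}(S_i)\bigr]\ge 1-\delta$. Next, I would extract a candidate global assignment $x^{\ast}\in\zo^n$ purely from Bob's strategy: set $x^{\ast}_v := B(j(v))_v$, where $j(v)$ is the unique index with $v\in T_{j(v)}$. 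This is well defined since the $T_j$'s partition the variables. Crucially, with this definition the consistency check for $(i,j)$ is exactly $T_j\cap D_i=\emptyset$, where $D_i := \{\,v\in\mathrm{vars}(S_i):A(i)_v\neq x^{\ast}_v\,\}$ is the disagreement set between $A(i)$ and $x^{\ast}$.

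The heart of the argument is then bounding the number of clauses of $\phi$ that $x^{\ast}$ fails to satisfy. Because $A(i)$ satisfies every clause in $S_i$, a clause $C\in S_i$ is satisfied by $x^{\ast}$ whenever $\mathrm{vars}(C)\cap D_i=\emptyset$; so the number of unsatisfied clauses in $S_i$ is at most the number of clauses of $S_i$ meeting $D_i$, which by the bounded-degree structure of $\phi$ (each variable appears in $O(1)$ clauses on average for $m=\Theta(n/\epsilon^2)$) is $O(|D_i|)$. Combining this with the value hypothesis $\E_i\bigl[\,\lvert\{j:T_j\cap D_i\neq\emptyset\}\rvert\,\bigr]\le\delta\sqrt{n}$ and a spreading property of the partition would yield $\E_i|D_i|=O(\delta\sqrt{m})$ and hence a $\bigl(1-O(\delta)\bigr)$-satisfying global assignment $x^{\ast}$, the desired contradiction.

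The main obstacle is precisely the passage from ``few $T_j$ blocks hit $D_i$'' to ``$|D_i|$ itself is small''. If $D_i$ were concentrated inside a single block $T_j$, then only one consistency failure would arise even though $|D_i|$ is large. The reduction of~\cite{Deligkas-Fearnley-Savani}, following~\cite{DBLP:conf/coco/AaronsonIM14}, overcomes this by using a randomized (birthday-style) partition with a spreading guarantee: any set of variables of non-trivial size hits $\Omega(|D_i|)$ distinct $T_j$'s with high probability over the partition, which makes the above accounting go through. A conceptually cleaner alternative is to recognize $\cF$ as a $\Theta(\sqrt{m})$-fold free parallel repetition of the base projection game ``referee picks a random clause $C$ for Alice and a uniformly random variable $v\in\mathrm{vars}(C)$ for Bob, and checks that Alice's satisfying assignment to $C$ agrees with Bob's value on $v$''; this base game has value strictly below $1$ by Grigoriev's bound, and free-games parallel repetition then gives value decay of $\bigl(1-\Omega(1)\bigr)^{\sqrt{m}}=o(1)$, which is far stronger than the required $1-\Omega(1)$. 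Either route produces the constant soundness gap.
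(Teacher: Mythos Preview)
The paper does not prove this fact at all; it states it as a citation to \cite{Deligkas-Fearnley-Savani} and explicitly ``invite[s] the reader to find the details of the reduction'' there. So there is nothing to compare against on the paper's side. Your sketch is a reasonable outline of the standard soundness argument for birthday-repetition games: extract a global assignment $x^\ast$ from Bob's strategy (well-defined since the $T_j$ partition the variables), define disagreement sets $D_i$, and use that few blocks $T_j$ meeting $D_i$ forces $|D_i|$ to be small provided the partition spreads variables well. You also correctly identify the one nontrivial step (the spreading property of the random partition) and attribute it to the right sources.

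One genuine error in your alternative route: $\cF$ is \emph{not} the $\Theta(\sqrt{m})$-fold free parallel repetition of the clause--variable projection game. Alice receives a single block $S_i$ of $\sqrt{m}$ clauses and Bob a single block $T_j$ of $\sqrt{n}$ variables; since $|\mathrm{vars}(S_i)|\le 3\sqrt{m}$ and $|T_j|=\sqrt{n}$, the expected overlap $|T_j\cap\mathrm{vars}(S_i)|$ is only $\Theta(\sqrt{m}/\sqrt{n})=\Theta(1)$ (for $m=\Theta(n/\epsilon^2)$ with constant $\epsilon$), not $\Theta(\sqrt{m})$. So you cannot invoke a black-box free-repetition theorem to get $(1-\Omega(1))^{\sqrt{m}}$ decay; the birthday-repetition analysis you sketched first is the actual argument, and it delivers exactly the $1-\Omega(1)$ bound stated, not more. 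This does not affect the correctness of your main line, only the claimed strength of the alternative.
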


Deligkas-Fearnley-Savani give a construction of a game $(R',C')$ by taking the cooperative game $(R,C)$ produced from the free game as above and using it to fill the first $\sqrt{m} N_1 \times \sqrt{n} N_2$ block and then adding $H$ rows and columns for $H$ being $2^{|Y|} \approx 2^{\sqrt{n}}.$ The off diagonal blocks are filled with a carefully constructed zero sum game and its transpose. Finally, the bottom diagonal $H \times H$ block is filled with the all zero matrix. It is immediate from the simple argument above that the pseudo-equilibrium described above for $\cF$ is in fact a $\epsilon$-approximate pseudo-equilibrium in $(R',C')$ - the only required fact in this argument is that the equilibrium strategy $\yy$ ($\xx$ respectively) satisfy $\sum_{b} \yy(y,b) = 1$ for every question $y$ to Merlin $2$. This is a linear polynomial equality constraint that carries over to the pseudo-distribution in a straightforward way. 

We omit the simple and standard argument here and summarize the result that immediately follows below:
\begin{fact}[Corollary of \cite{Deligkas-Fearnley-Savani}]
Let $\cF$ be a free game with $\sqrt{m}$ ($\sqrt{n}$, respectively) possible questions and $N_1$ ($N_2$, respectively) possible strategies for Merlin 1 (2, respectively.)  
Then, for $\eps=O(1)$ there's a two player game with $\sqrt{m} \cdot N_1$ strategies for the row player and $\sqrt{n} \cdot N_2$ strategies for the column payer such that: 
\begin{enumerate}
\item \textbf{Completeness: } There's a $\epsilon$-approximate pseudo-equilibrium of degree $\Theta(n)$ = $\Omega(\log{(N_1 + N_2)})$ where both players receive a payoff of at least $1$. 

\item \textbf{Soundness: } For every $\epsilon$-approximate $\NE$ of the game, the payoff of either player is at most $(1-\eps)$ for $\eps<1/1200.$ 
\end{enumerate}

\end{fact}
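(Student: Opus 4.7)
The plan is to instantiate the Deligkas-Fearnley-Savani construction and verify that both the completeness and soundness statements carry over to the pseudo-equilibrium setting via Fact \ref{fact:reductions-within-SoS}. Given the free game $\cF$ with question sets $X = \{S_1,\dots,S_{\sqrt{m}}\}$ for Merlin 1 and $Y = \{T_1,\dots,T_{\sqrt{n}}\}$ for Merlin 2, let $(R,C)$ be the cooperative game built from $\cF$ (rows indexed by pairs $(S_i,a)$ with $a$ an admissible answer, columns by $(T_j,b)$). Place $(R,C)$ in the top-left block of a larger matrix pair $(R',C')$ with an additional $H \approx 2^{\sqrt{n}}$ rows and columns. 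The off-diagonal blocks will be filled with a small zero-sum ``consistency enforcement'' game as in Deligkas-Fearnley-Savani, designed so that any $\epsilon$-approximate NE must put almost all mass on the first block and, moreover, the marginal over questions must be close to uniform; the bottom-right $H\times H$ block is all zeros.

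For completeness, I would start from the degree-$\Theta(n)$ pseudo-equilibrium for the cooperative game associated with $\cF$ (produced by the previous lemma), and lift it to $(R',C')$ by padding the last $H$ coordinates with zero. Since the pad-with-zero map $\xx\mapsto(\xx,0)$ is linear, the degree of the pseudo-distribution is preserved. The only nontrivial verification is that the approximate-$\NE$ constraints \eqref{eq:approx-NEchar} for $(R',C')$ are satisfied in pseudo-expectation. For rows/columns in the original block this is inherited from the pseudo-equilibrium of $(R,C)$; for the newly added rows/columns I would use the linear marginal identities $\sum_b \yy(y,b)=1$ for every question $y$ (these are linear equalities and thus respected by pseudo-expectations) to show that the expected deviation payoff is at most $1$, matching the pseudo-equilibrium payoff up to the slack $\epsilon$. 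Here the fact that this reduction is a \emph{low-degree} (in fact linear) map between feasible solutions is exactly what allows invoking Fact \ref{fact:reductions-within-SoS} to conclude a degree $\Theta(n) = \Omega(\log(N_1+N_2))$ pseudo-equilibrium.

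For soundness, I would directly cite the Deligkas-Fearnley-Savani analysis of $(R',C')$: their construction ensures that in any $\epsilon$-$\NE$ the row and column players must (i) place nearly all of their mass on the first block of strategies and (ii) induce an almost-uniform marginal on the question coordinates $X$ and $Y$, because otherwise the zero-sum consistency sub-games punish the players by a constant. Combined with the $1 - \Omega(1)$ soundness on the value of the free game $\cF$, this forces the total payoff of every $\epsilon$-$\NE$ of $(R',C')$ to be at most $1 - \epsilon$ for $\epsilon < 1/1200$. The combination of the two parts then yields the claimed $\nice$-game, establishing Lemma \ref{lem:sos-hard-const-approximate} (i.e., Lemma \ref{lem:const1}).

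The main obstacle I expect is the verification that the approximate-$\NE$ polynomial inequalities are satisfied by the lifted pseudo-expectation for all the newly added rows/columns: one has to argue that the maximum deviation payoff from any new pure strategy, evaluated under $\pE$, is bounded by the equilibrium payoff plus $\epsilon$, using only the linear marginal constraints and the structure of the zero-sum enforcement block. All other steps (padding, citing Grigoriev, citing the DFS soundness calculation) are essentially bookkeeping once Fact \ref{fact:reductions-within-SoS} has been set up.
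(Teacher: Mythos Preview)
Your proposal is correct and matches the paper's approach essentially step for step: the paper likewise lifts the previously constructed pseudo-equilibrium for the cooperative game to $(R',C')$ by zero-padding, observes that the only nontrivial verification (the new deviation constraints) reduces to the linear marginal identity $\sum_b \yy(y,b)=\text{const}$ which carries over to pseudo-expectations, and cites \cite{Deligkas-Fearnley-Savani} directly for soundness. In fact the paper explicitly omits the detailed check you flag as the ``main obstacle,'' so your sketch is already more explicit than what appears there.
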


\subsection{Constructing a family of \good games}\label{sec:const-approx-good} 
In this section we prove Lemma \ref{lem:enum-hard-const-approximate} by constructing a family of \good games for a constant $\eps$. Essentially, we will construct a family of two-player games, each parameterized by a subset $S$ of strategies such that: $(i)$ both players' strategies at Nash equilibrium is largely supported on the set $S$, and $(ii)$ for a constant $\eps$, all its $\eps$-\WNE are near the actual NE. Daskalakis and Papadimitriou~\cite{DP} gave a construction that ensures row-player's all \WNE strategies are near the uniform distribution over the subset. Constructing such a game for $n^{log(n)}$ carefully selected subsets, they showed non-existence of an oblivious PTAS for \WNE that uses only Verification Oracle \ref{or1}, i.e., checks if given strategy profile is a \WNE. However, in their construction the column player's NE (\WNE) strategies are common across games. Therefore it breaks under our more powerful Verification Oracle \ref{or2}, {\em i.e.,} given any {\em one} player's strategy the oracle checks if there exists a $\eps$-$\NE$ corresponding to the given strategy. 

Since we need hardness against Oracle \ref{or2}, construction of~\cite{DP} can not be used directly. We will have to ensure that, for each player, her NE strategy sets are far apart across games. Our construction builds on that of~\cite{DP}, and therefore first we discuss the latter and its result. 

\paragraph{Construction of \cite{DP}} Let $l$ be an even integer and $n={l \choose l/2}$. For each $S\subset [n]$ of size $l$, we will construct an $n \times n$ game. Fix such a subset $S$, and denote the corresponding game by $(A_S, B_S)$. In this game, row player cares about only strategies from $S$, {\em i.e.,} strategies outside of $S$ are strictly dominated by that of $S$. Since $|S|$ is $l$, number of $(l/2)$ sized subsets of $S$ are exactly $n$.  
The $n$ strategies of the column player correspond to these $(l/2)$ sized subsets of set $S$. In other words, if $S_1,\dots,S_{{l \choose l/2}}$ are all the $(l/2)$ subsets of $S$, then think of column player's strategies being indexed by these subsets. 
The payoffs in column $j$, corresponding to subset $S_j$ of $S$, are:

\begin{itemize}
\item For each $i\notin S$, set $A_S(i,j)=-1$ and $B_S(i,j)=1$.
\item For each $i \in S\cap S_j$, set $A_S(i,j)=1$ and $B_S(i,j)=0$.
\item For each $i \in S\setminus S_j$, set $A_S(i,j)=0$ and $B_S(i,j)=1$.
\end{itemize}

\begin{theorem}{\cite{DP}}\label{thm:DP}
\begin{itemize}
\item[$(a)$] For any given $\eps < 1$, if $(\xx,\yy)$ is an $\eps$-\WNE of game $(A_S,B_S)$, then $(i)$ for all $i \notin S$, $x_i=0$, and $(ii)$ $||\xx - \ud{n,S}||_1 \le 8 \eps$.
\item[$(b)$] For $V=\{\ud{n,S} \ |\ S\subseteq[n], |S|=l\}$ we have $|V|= \Omega(n^{0.8\log_2 n})$. And for any $\gamma>0$ there exists $V'\subset V$ of size $\Omega(n^{(0.8-2\gamma\eps)\log_2 n})$ such that for any two vectors $\ud{n,S}, \ud{n,S'}\in V'$ we have $||\ud{n,S'} - \ud{n,S}||_1 > \gamma\eps$.
\end{itemize}
\end{theorem}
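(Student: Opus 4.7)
The proof naturally splits along the two parts of the statement. For part (a)(i), the plan is to observe that every row outside $S$ is strictly dominated in $A_S$ by any row inside $S$: by construction $A_S(i,j)=-1$ for $i\notin S$ while $A_S(i^*,j)\in\{0,1\}$ for $i^*\in S$, so the payoff gap is at least $1$ against every $\yy$. Since $\eps<1$, the $\eps$-\WNE condition immediately forces $x_i=0$ for every $i\notin S$.

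For part (a)(ii), I plan a two-sided squeeze on the row player's expected payoff. After (a)(i), the row payoff from $i\in S$ is $\Pr_{j\sim\yy}[i\in S_j]$ and the column payoff from $j$ equals $1-\sum_{i\in S\cap S_j}x_i$. Write $y_i:=x_i-1/l$ for $i\in S$, so that $\sum_{i\in S}y_i=0$ and $\Delta:=\|\xx-\ud{n,S}\|_1=\sum_{i\in S}|y_i|$. The key elementary bound is that the sum of the top $l/2$ of the $y_i$'s is at least $\Delta/4$: splitting into the cases where the number of positive $y_i$'s is at most $l/2$ or exceeds $l/2$, one checks directly by an averaging/rearrangement argument that this sum is $\ge\Delta/4$ (the worst case being a single large negative balanced by many small positives, or its mirror). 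Equivalently, the sum of the bottom $l/2$ of the $y_i$'s is at most $-\Delta/4$, so the minimum of $\sum_{i\in S\cap S_j}x_i$ over all $l/2$-subsets $S_j\subseteq S$ is at most $1/2-\Delta/4$. Hence the maximum column payoff is at least $1/2+\Delta/4$, so by the column \WNE condition the actual expected column payoff is at least $1/2+\Delta/4-\eps$; rewriting, the row player's expected payoff $\sum_i x_i\Pr_j[i\in S_j]$ is at most $1/2-\Delta/4+\eps$. On the other hand, $\sum_{i\in S}\Pr_j[i\in S_j]=\sum_j y_j|S_j|=l/2$, so some row has payoff $\ge 1/2$, and the row \WNE condition forces the row player's expected payoff to be at least $1/2-\eps$. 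Combining the two inequalities gives $\Delta\le 8\eps$.

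For part (b), the first bound is direct counting. Since $n=\binom{l}{l/2}$ satisfies $l=(1+o(1))\log_2 n$, one has $|V|=\binom{n}{l}\ge(n/l)^l$, so $\log_2|V|\ge l(\log_2 n-\log_2 l)=(1-o(1))(\log_2 n)^2\ge 0.8(\log_2 n)^2$ for $n$ large, which yields $|V|\ge n^{0.8\log_2 n}$. For the packing statement, observe $\|\ud{n,S}-\ud{n,S'}\|_1=|S\triangle S'|/l$, so we need a family of $l$-subsets of $[n]$ with pairwise symmetric difference exceeding $\gamma\eps l$. Around any $l$-subset $S$, the ball of $l$-subsets $T$ with $|T\triangle S|\le\gamma\eps l$ has cardinality at most $\sum_{k\le\gamma\eps l/2}\binom{l}{k}\binom{n-l}{k}\le n^{(1+o(1))\gamma\eps l}$. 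A standard greedy (Gilbert--Varshamov-style) packing argument extracts a subfamily $V'\subseteq V$ of size at least $|V|$ divided by this ball size; using $l\le 2\log_2 n$ in the exponent gives $|V'|\ge n^{(0.8-2\gamma\eps)\log_2 n}$ as required.

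The main difficulty in this approach is the tightness of the $\Delta/4$ bound in (a)(ii): it is the asymmetric case (few large negatives balanced by many small positives, or vice versa) that prevents improving the final constant below $8$, and these edge cases are what one must treat with care. By contrast, part (b) is a routine metric counting, with the constants $0.8$ and the factor $2$ in $2\gamma\eps$ arising respectively from the Stirling estimate for $\binom{n}{l}$ and from the crude $\binom{l}{k}\binom{n-l}{k}\le n^{2k}$ used in bounding the ball.
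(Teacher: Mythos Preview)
The paper does not prove this theorem; it is quoted from Daskalakis--Papadimitriou \cite{DP} and used as a black box. So there is no ``paper's own proof'' to compare against, and your proposal should be judged on its own merits.

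Your argument is correct and is essentially the natural proof. A few remarks:

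For part (a)(ii), the key inequality ``sum of the top $l/2$ deviations is at least $\Delta/4$'' is right, and your case split is the clean way to see it: if there are $k\le l/2$ positive deviations, the bottom $l/2$ are all nonpositive and an averaging bound gives $T\ge(\Delta/2)\cdot\frac{l/2}{l-k}\ge\Delta/4$; if $k>l/2$, the top $l/2$ are all positive and the same averaging gives $T\ge(\Delta/2)\cdot\frac{l/2}{k}\ge\Delta/4$. Your chaining of the two \WNE\ inequalities (column payoff $\ge 1/2+\Delta/4-\eps$, row payoff $\ge 1/2-\eps$, and row $+$ column $=1$ on the support) then yields $\Delta\le 8\eps$ exactly. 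One cosmetic issue: you overload the symbol $y_i$ for the deviations $x_i-1/l$ while $\yy$ already denotes the column strategy; this is harmless but worth renaming.

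For part (b), your Gilbert--Varshamov packing with the ball estimate $\sum_{k\le\gamma\eps l/2}\binom{l}{k}\binom{n-l}{k}\le n^{(1+o(1))\gamma\eps l}$ and the crude $l\le 2\log_2 n$ gives precisely the stated exponent $(0.8-2\gamma\eps)\log_2 n$. In fact your ball bound is loose (using $\binom{l}{k}\le n^k$ throws away a lot), so the true packing is larger than claimed; the factor $2$ in front of $\gamma\eps$ is an artifact of this slack, as you correctly note at the end.
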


Note that, column player playing each of its $n$ strategies uniformly at random is her NE strategy in game $(A_S,B_S)$ for all $S\subset [n]$ of size $l$. Thus querying Oracle \ref{or2} with strategy $\yy=\ud{n}$ for the column player will return a NE of $(A_S,B_S)$. To prevent this next we extend this construction that uses the above game combined with another copy of it where roles of the two players are switched. 
For the construction to work as desired, we need both players to put non-trivial probability mass on both games, and to achieve this we augment this game with a matching-pennies game type construction.

\paragraph{Our Construction}
Again fix an even integer $l$ and let $n={l \choose l/2}$. Fix an $l$ sized subset $S\subset [n]$, and corresponding game $(A_S, B_S)$ as described above. We construct a $2n \times 2n$ game $(R_S, C_S)$ as follows: if we think of matrices of this game as $2 \times 2$ block matrix, each block of size $n\times n$, then

\begin{equation}\label{eq:const-goodgame}
R_S = \left[\begin{array}{cc}2+A_S & -2\\
-2 & 2+B_S^{\top}\end{array}\right]
\ \ \ \ \ 
C_S = \left[\begin{array}{cc}-2+B_S & 2 \\
2 & -2+A_S^{\top} \end{array}\right]
\end{equation}

In \eqref{eq:const-goodgame}, the constants $2$ and $-2$ represents block matrices of size $n\times n$ with all entries set to $2$ and $-2$ respectively. 
For any strategy profile $(\xx,\yy)$ of game $(R_S,C_S)$, let us divide each of $\xx$ and $\yy$ into two vectors corresponding to first $n$ and last $n$ strategies. These are $\xx^L = (x_1,\dots,x_n)$, $\xx^R = (x_{(n+1)},\dots,x_{2n})$, $\yy^L=(y_1,\dots,y_n)$, and $\yy^R=(y_{(n+1)},\dots,y_{2n})$. For any vector $\vv$, we will use $\sigma(\vv)$ to denote sum of its coordinates $(\sum_i v_i)$. 

Using the matching pennies property of game $(R_S,C_S)$, next we show that both $\xx^L$ and $\xx^R$ have enough probability mass, and similarly $\yy^L$ and $\yy^R$.
\begin{lemma}\label{lem:const-goodgame1}
For an $\eps\in [0,1)$, let $(\xx,\yy)$ be an $\eps$-\WNE of game $(R_S,C_S)$. Then, 
$\frac{9}{19} - \frac{\eps}{9} \le \sigma(\xx^L), \sigma(\xx^R), \sigma(\yy^L), \sigma(\yy^R) \le \frac{10}{19} +\frac{\eps}{9}$. 
\end{lemma}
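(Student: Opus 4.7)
\emph{Overall plan.} The plan is to view $(R_S, C_S)$ as a ``matching-pennies''-type game between the top and bottom blocks of strategies. The off-diagonal $-2$ penalty entries of $R_S$ and the corresponding $+2$ entries of $C_S$ mean that the row player wants to match the column player's block choice (top with left, bottom with right), while the column player wants to mismatch. This should force each player to put roughly $1/2$ probability on each block at equilibrium; the work is to make this quantitative, using the refined $(l/2)$-subset structure of $A_S$ and $B_S$ to pin down the exact constants $9/19$ and $10/19$.

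\emph{Payoff decomposition and averaging bounds.} First I will expand the block structure of $R_S$ to write $u_1(i) = 4a - 2 + (A_S\yy^L)_i$ for a top pure strategy $i \in [n]$ and $u_1(n+k) = 2 - 4a + (B_S^{\top}\yy^R)_k$ for a bottom pure strategy $n+k$, where $a = \sigma(\yy^L)$. Analogous formulas hold for the column player's payoffs in terms of $p = \sigma(\xx^L)$ and inner products of $\xx$ with $B_S^{\top}$ and $A_S$. Then, using the combinatorial fact that each element of $S$ lies in exactly half of the $(l/2)$-subsets of $S$, I obtain by averaging the key lower bounds $\max_i (A_S\yy^L)_i \ge a/2$ and $\max_k (B_S^{\top}\yy^R)_k \ge (1-a)/2$, with matching trivial upper bounds $a$ and $1-a$. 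The analogous four bounds hold for the column player's inner products.

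\emph{Edge case and main bound.} A short cycle argument rules out any player placing all mass on one block: if $\sigma(\xx^L)=0$ then the column's right strategies in $S$ strictly dominate left (payoff $\ge 1/2$ vs $\le 0$), which forces $\sigma(\yy^L)=0$; but then row's top strategies dominate bottom by a large margin, contradicting $\sigma(\xx^L)=0$. So both blocks have positive mass for both players, and the $\eps$-\WNE condition yields $|m_T - m_B| \le \eps$ for the row player, where $m_T, m_B$ are the pure-strategy payoff maxima in the two blocks. Substituting the averaging bounds, $m_T - m_B \ge (9a/2 - 2) - (3 - 5a) = 19a/2 - 5$ and $m_B - m_T \ge (5/2 - 9a/2) - (5a - 2) = 9/2 - 19a/2$, hence $a \in [(9-2\eps)/19,\; (10+2\eps)/19] \subseteq [9/19 - \eps/9,\; 10/19 + \eps/9]$ (using $2/19 < 1/9$). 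The analogous calculation with the column's $\eps$-\WNE condition $|m_L - m_R|\le\eps$ and the remaining two averaging bounds will give the same interval for $p$.

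\emph{Main obstacle.} The most delicate step is the column-side calculation: a naive application of the box bounds on $(B_S^{\top}\xx^L)_j$ and $(A_S\xx^R)_{i'}$ only yields the weaker range $p \in [(6-2\eps)/13,\;(7+2\eps)/13]$. To tighten this to match the row-side bound I plan to use two structural consequences of row's \WNE that restrict $\xx$: first, $\xx^L$ must be supported essentially on $S$ (a top strategy $i \notin S$ is dominated by a top strategy in $S$ by a gap of $a \approx 1/2 \gg \eps$); second, $\xx^R$ cannot concentrate on a single $(l/2)$-subset $k_0$ (else the non-supported bottom strategy disjoint from $k_0$ beats the supported $k_0$ by roughly $1-a$, violating the $\eps$-\WNE). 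Combining these structural restrictions on $\xx$ with the averaging identity for $(l/2)$-subsets will force $\max_j(B_S^{\top}\xx^L)_j$ and $\max_{i'}(A_S\xx^R)_{i'}$ into sufficiently narrow intervals to recover the $[9/19 - \eps/9,\;10/19+\eps/9]$ bound for $p$, completing the proof.
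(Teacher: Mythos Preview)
Your row-side argument (payoff decomposition, the averaging bounds $\max_i(A_S\yy^L)_i\ge a/2$ and $\max_k(B_S^\top\yy^R)_k\ge(1-a)/2$, and the resulting $19a/2$ inequalities) is correct and is exactly the paper's proof; the paper packages the two averaging bounds as a claim and then runs the same chain of inequalities. Your edge-case cycle has a sign slip (if $\sigma(\xx^L)=0$ then the column player's \emph{left} strategies dominate, not the right ones), but the contradiction still closes the same way.

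Where your proposal runs into trouble is the column side, and here you have in fact caught a real asymmetry that the paper glosses over: the paper simply writes ``Bounds on $\sigma(\xx^L)$ and $\sigma(\xx^R)$ can be obtained through similar arguments,'' which is precisely your naive calculation and therefore only yields $p\in[(6-2\eps)/13,(7+2\eps)/13]$. Your proposed tightening, however, does not work. Forcing $\Supp(\xx^L)\subseteq S$ does not narrow the range of $\max_j(\xx^{L\top}B_S)_j$: since $B_S(i,j)=1$ for all $i\notin S$, mass outside $S$ contributes identically to every column, so moving it inside $S$ leaves both the upper bound $p$ and the averaging lower bound $p/2$ unchanged. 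And non-concentration of $\xx^R$ would \emph{decrease} $\max_{j'}(A_S\xx^R)_{j'}$ toward $(1-p)/2$ rather than increase it, which loosens the very inequality you need to tighten. So neither structural observation recovers the $9/19$ constant for $p$. Fortunately the only use of this lemma downstream is the consequence $\sigma(\xx^L),\sigma(\xx^R),\sigma(\yy^L),\sigma(\yy^R)\in(1/3,2/3)$, and your weaker column bound already gives that for all $\eps<5/6$; so the gap is harmless for the paper, but the lemma as literally stated for $\sigma(\xx^L),\sigma(\xx^R)$ is not established by either your plan or the paper's sketch.
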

\begin{proof}
First we show that $\sigma(\xx^L), \sigma(\xx^R), \sigma(\yy^L), \sigma(\yy^R) >0$. To the contrary suppose $\sigma(\xx^R)=0$, then column player gets payoff at most $-1$ from any stragety $j \le n$, while gets payoff of $2$ from any $k > n$. Therefore $\sigma(\yy^L)=0$. In that case, the row player gets payoff at most $-1$ from any $i \le n$, while gets at least $2$ payoff from any $h>n$. Since there exists $i \le n$ such that $x_i>0$, this contradicts $(\xx,\yy)$ being $\eps$-\WNE. Similar argument follows for each of $\sigma(\xx^L), \sigma(\yy^L), \sigma(\yy^R)$ being zero.

Next we will argue that $\frac{9}{19} - \frac{\eps}{9} \le \sigma(\yy^L) \le \frac{10}{19} +\frac{\eps}{9}$, and then using $\sigma(\yy^R)=1-\sigma(\yy^L)$ the inequalities will follow for $\sigma(\yy^R)$ as well. For this we need properties of payoffs from $A_S$ and $B_S$ blocks of $R_S$. 

\begin{claim}\label{cl:const1}
For all $k \le n$ we have $\et{k}A_S\yy^L \le 1$ and $\et{k}B_S^{\top}\yy^R \le 1$, and there exists an $i,i'\le n$ such that $\et{i}A_S\yy^L \ge \frac{\sigma(\yy^L)}{2}$ and $\et{i'}B_S^\top \yy^R\ge \frac{\sigma(\yy^R)}{2}$. 
\end{claim}
\begin{proof}
The first part follows by the fact that coordinates of both $A_S$ and $B_S$ are at most one. For the second part, note that for each column $j\in[n]$ of $A_S$ exactly $(l/2)$ coordinates are $1$ out of $l$ coordinates of $S$. Therefore, summing up payoffs corresponding to strategies of $S$, we have
\[
\sum_{i \in S} \et{i}A_S\yy^L = \sum_{j \le [n]} \yy^L *(l/2) = \sigma(\yy^L) (l/2)
\]
Since $|S|$ is $l$, among the strategies of $S$ there is at least one $i \in S$ such that $\et{i}A_S\yy^L \ge \frac{\sigma(\yy^L)}{2}$. 
For the second part, a row of $B_S^\top$ is a column of $B_S$. Note that every entry in $B_S$ is at least zero, and in a column all entries corresponding to rows in $S$ are one. Let us divide $\yy^R$ further into coordinates corresponding to set $S$, namely $\yy^{R,S}$, and those outside $S$, namely $\yy^{R,\bS}$. In block $B_S$ the role of first player is that of column-player in game $(A_S,B_S)$. Therefore, strategies $(n+1),\dots,2n$ correspond to the $n={l \choose (l/2)}$ subsets of $S$. There is one such set that correspond to largest $(l/2)$ coordinates of $\yy^{R,S}$. Let this be $S_{i'}$. Then, $\et{i'}B_S^\top\yy^R \ge \sigma(\yy^{R,\bS}) + \sigma(\yy^{R,S})/2 \ge \sigma(\yy^R)/2$. 
\end{proof}

Consider $i,i' \le n$ as in Claim \ref{cl:const1}. Since $\sigma(\xx^L),\sigma(\xx^R)>0$, there exists $h,h' \le n$ such that $x_h,x_{(n+h')} >0$. 
Row player's payoffs from strategies $i,(n+i'),h,$ and $(n+h')$ are:
\[
\begin{array}{l}
\et{h}R_S\yy = 2\sigma(\yy^L) + \et{h}A_S\yy^L - 2 \sigma(\yy^R), \ \ \ \et{(n+i')}R_S\yy = -2 \sigma(\yy^L) + \et{i'}B_S^\top\yy^R + 2 \sigma(\yy^R) \\
\et{i}R_S\yy= 2\sigma(\yy^L) + \et{i}A_S\yy^L - 2 \sigma(\yy^R), \ \ \ \et{(n+h')}R_S\yy = -2 \sigma(\yy^L) + \et{h'}B_S^\top\yy^R + 2 \sigma(\yy^R)
\end{array}
\]

Since $(\xx,\yy)$ is an $\eps$-\WNE we must have $\et{h}R_S\yy \ge \et{(n+i')}R_S\yy - \eps$ and $\et{(n+h')}R_S\yy \ge \et{i}R_S\yy -\eps$. Using Claim \ref{cl:const1}, the former gives,
\[
\begin{array}{lcl}
3\sigma(\yy^L) - 2\sigma(\yy^R) \ge -2\sigma(\yy^L) + \frac{\sigma(\yy^R)}{2}+ 2\sigma(\yy^R) -\eps & \Rightarrow & 5\sigma(\yy^L) \ge \frac{9}{2} \sigma(\yy^R) -\eps\\ 
& \Rightarrow & \sigma(\yy^L) \ge \frac{9}{10}\sigma(\yy^R) -\frac{\eps}{5} = \frac{9}{10}(1-\sigma(\yy^L)) -\frac{\eps}{5}\\
& \Rightarrow & (1+\frac{9}{10}) \sigma(\yy^L) \ge \frac{9}{10} -\frac{\eps}{5}\\
& \Rightarrow & \sigma(\yy^L) \ge \frac{9}{19} - \frac{2}{19}\eps\ge \frac{9}{19} - \frac{\eps}{9}
\end{array}
\]
Similarly, using $\et{(n+h')}R_S\yy\ge \et{i}R_S\yy -\eps$ together with Claim \ref{cl:const1} we will get $\sigma(\yy^R) \ge \frac{9}{10}\sigma(\yy^L) -\frac{\eps}{5}$. Simplifying this using $\sigma(\yy^R) = 1-\sigma(\yy^L)$ gives $\sigma(\yy^L) \le \frac{10}{19}+\frac{\eps}{9}$. 

Bounds on $\sigma(\xx^L)$ and $\sigma(\xx^R)$ can be obtained through similar arguments for the payoff of the column-player w.r.t. $C_S$. 
\end{proof}

For any $\eps<1$ the bounds obtained in Lemma \ref{lem:const-goodgame1} implies lower bound of at least $\frac{1}{3}$ and upper bound of at most $\frac{2}{3}$ on the each of $\sigma(\xx^L), \sigma(\xx^R), \sigma(\yy^L)$, and $\sigma(\yy^R)$.

\begin{lemma}\label{lem:const-goodgame2}
For an $\eps\in [0,1/3)$, let $(\xx,\yy)$ be an $\eps$-\WNE of game $(R_S,C_S)$. Then, $(\frac{\xx^L}{\sigma(\xx^L)}, \frac{\yy^L}{\sigma(\yy^L)})$ and $(\frac{\yy^R}{\sigma(\yy^R)}, \frac{\xx^R}{\sigma(\xx^R)})$ are $3\eps$-\WNE of game $(A_S,B_S)$. 
\end{lemma}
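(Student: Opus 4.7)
The plan is to show that the well-supported NE condition in $(R_S,C_S)$, when restricted to strategies within a single diagonal block, translates directly into a well-supported NE condition for the corresponding sub-game $(A_S,B_S)$ on the normalized marginals. The main observations I will exploit are (i) that row/column payoffs in $(R_S,C_S)$ split additively into a ``constant'' depending only on $\sigma(\yy^L),\sigma(\yy^R)$ (resp.\ $\sigma(\xx^L),\sigma(\xx^R)$) plus a term that is exactly the $(A_S,B_S)$ payoff, so differences of payoffs for strategies in the same block are unaffected by the big constants; and (ii) Lemma~\ref{lem:const-goodgame1} guarantees that each of $\sigma(\xx^L),\sigma(\xx^R),\sigma(\yy^L),\sigma(\yy^R) \ge 1/3$ for $\eps<1/3$, so dividing an $\eps$-gap by any such mass blows it up by a factor strictly less than $3$.

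First, I would unfold the payoffs. For $i\le n$, $e_i^\top R_S\yy = 2\sigma(\yy^L) - 2\sigma(\yy^R) + e_i^\top A_S\yy^L$, and for $j\le n$, $\xx^\top C_S e_j = -2\sigma(\xx^L) + 2\sigma(\xx^R) + \xx^{L\top} B_S e_j$. Analogously, for $n+k$ in the lower block, $e_{n+k}^\top R_S\yy = -2\sigma(\yy^L)+2\sigma(\yy^R)+\yy^{R\top} B_S e_k$ and $\xx^\top C_S e_{n+k} = 2\sigma(\xx^L)-2\sigma(\xx^R)+e_k^\top A_S\xx^R$. Crucially, within each block the additive ``$\pm 2\sigma(\cdot)$'' terms are constant across pure strategies, so pairwise comparisons within a block reduce to the $(A_S,B_S)$ payoffs on $(\xx^L,\yy^L)$ or $(\yy^R,\xx^R)$ respectively.

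Next, I apply the $\eps$-\WNE condition of $(R_S,C_S)$. If $x_i>0$ with $i\le n$ then for every $i'\le n$, $e_i^\top R_S\yy \ge e_{i'}^\top R_S\yy - \eps$, which by the decomposition above gives $e_i^\top A_S\yy^L \ge e_{i'}^\top A_S\yy^L - \eps$. Dividing through by $\sigma(\yy^L)\ge 1/3$ yields $e_i^\top A_S\hat{\yy}^L \ge e_{i'}^\top A_S\hat{\yy}^L - 3\eps$. An identical argument using $\xx^\top C_S e_j$ for $j\le n$ in $\Supp(\yy)$ handles the column player's side, so $(\hat{\xx}^L,\hat{\yy}^L)$ is a $3\eps$-\WNE of $(A_S,B_S)$. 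The symmetric pair $(\hat{\yy}^R,\hat{\xx}^R)$ is treated by the same argument on the lower block: $x_{n+k'}>0$ gives $\yy^{R\top}B_S e_{k'} \ge \yy^{R\top}B_S e_{k''}-\eps$ (hence the ``column'' condition after normalizing by $\sigma(\yy^R)\ge 1/3$), while $y_{n+k}>0$ gives $e_k^\top A_S\xx^R \ge e_{k'}^\top A_S\xx^R-\eps$ (hence the ``row'' condition after normalizing by $\sigma(\xx^R)\ge 1/3$).

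There is no real obstacle here beyond bookkeeping: the only mildly delicate point is making sure that pairwise comparisons of pure-strategy payoffs within a single block suffice to verify the $\eps$-\WNE condition for $(A_S,B_S)$, since the definition there requires the max over all $n$ pure strategies of the sub-game, and this is exactly the set of strategies being compared. Putting the block computation together with the block-mass lower bound from Lemma~\ref{lem:const-goodgame1} immediately yields the claim.
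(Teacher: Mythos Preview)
Your proof is correct and follows essentially the same approach as the paper: decompose the $(R_S,C_S)$ payoffs so that within-block comparisons reduce to $(A_S,B_S)$ payoffs, then divide by the block mass lower-bounded via Lemma~\ref{lem:const-goodgame1}. If anything, your explicit payoff formulas make the cancellation of the $\pm 2\sigma(\cdot)$ terms clearer than the paper's presentation.
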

\begin{proof}
If $x_i$ for an $i\le n$, then it must be the case that $\et{i}R_S\yy \ge \et{k}R_S\yy -\eps$ for all $k\le n$. This implies,
\[
\et{i}A_S\yy^L \ge \et{k}A_S\yy^L - \eps \Rightarrow \frac{\et{i}A_S\yy^L}{\sigma(\yy^L)} \ge \frac{\et{k}A_S\yy^L}{\sigma(\yy^L)} - \frac{\eps}{\sigma(\yy^L)} \Rightarrow (\et{i}A_S\frac{\yy^L}{\sigma(\yy^L)}) \ge (\et{k}A_S\frac{\yy^L}{\sigma(\yy^L)}) - 3\eps
\]
The last inequality follows using Lemma \ref{lem:const-goodgame1}. Similarly, we can argue that if $y_j >0$ for a $j \le n$, then for all $k \le n$ we have $(\frac{\xx^{L^\top}}{\sigma(\xx^L)}B_S e_j) \ge (\frac{\xx^{L^\top}}{\sigma(\xx^L)}B_S e_k) - 3\eps$. These two together implies $(\frac{\xx^L}{\sigma(\xx^L)}, \frac{\yy^L}{\sigma(\yy^L)})$ is a $3\eps$-\WNE of game $(A_S,B_S)$.

In the lower diagonal block, the game is $(B_S^T, A_S^T)$, or in other words the roles of the two players are switched. Therefore, it follows that if $x_i>0$ for an $i>n$ then for all $k>n$ we have $(B_S^T\frac{\yy^R}{\sigma(\yy^R)} e_i) = (\frac{\yy^{R^\top}}{\sigma(\yy^R)}B_S e_i) \ge (\frac{\yy^{R^\top}}{\sigma(\yy^R)}B_S e_k) - 3\eps$. And similarly, if $y_j>0$ for $j>n$ then for all $k>n$ we have $(\et{j}A_S\frac{\xx^R}{\sigma(\xx^R)}) \ge (\et{k} A_S\frac{\xx^R}{\sigma(\xx^R)}) - 3\eps$. These two together implies $(\frac{\yy^R}{\sigma(\yy^R)}, \frac{\xx^R}{\sigma(\xx^R)})$ are $3\eps$-\WNE of game $(A_S,B_S)$.
The lemma follows.
\end{proof}

The above lemma together with Theorem \ref{thm:DP} implies that for any $\eps$-\WNE $(\xx,\yy)$, vector $\xx^L$ is near uniform distribution over strategies of $S$ scaled by $\sigma(\xx^L)$. In order to show non-overlapping equilibrium sets across games corresponding to different $S$, we will need that $\xx^L$ is near such a uniform distribution scaled by $\frac{1}{2}$. For this we need a tighter upper-lower bound on $\sigma(\xx^L)$, that we show next.  
The main missing component here is a tighter upper bound on the payoffs from $(A_S,B_S)$ proved in the next lemma.

\begin{lemma}\label{lem:const-goodgame3}
For an $\eps\in[0,1/3),$ if $(\xx,\yy)$ is an $\eps$-\WNE of game $(R_S,C_S)$, then $\forall i \le n,\ (\et{i}A_S\yy^L) \le \frac{\sigma(\yy^L)}{2} +4\eps$ and $(\xx^{L^\top}B_S e_i) \le \frac{\sigma(\xx^L)}{2} +4\eps$. And similarly, $(\et{i}B_S^\top\yy^R)\le \frac{\sigma(\yy^L)}{2} +4\eps$ and $(\et{i}A_S\xx^R) \le \frac{\sigma(\xx^R)}{2}+4\eps$ for each $i \in [n]$.
\end{lemma}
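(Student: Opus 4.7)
The four stated upper bounds split naturally into two types: the ``$B$-type'' bounds on $\xx^{L^\top}B_S e_i$ and $\et{i}B_S^\top\yy^R$, which I would handle by a direct $\ell_1$-approximation argument; and the ``$A$-type'' bounds on $\et{i}A_S\yy^L$ and $\et{i}A_S\xx^R$, which I would handle by an averaging argument that additionally invokes the $\eps$-\WNE condition of the full game $(R_S,C_S)$. All four cases rest on the fact, provided by Lemma~\ref{lem:const-goodgame2} together with Theorem~\ref{thm:DP}(a), that both $\xx^L/\sigma(\xx^L)$ and $\yy^R/\sigma(\yy^R)$ are supported in $S$ and lie within $\ell_1$-distance $O(\eps)$ of $\ud{n,S}$.

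For the $B$-type bound $\xx^{L^\top}B_S e_i \le \sigma(\xx^L)/2 + 4\eps$, I would first unpack the construction of $B_S$ to observe that $B_S(k,i)=0$ iff $k\in S_i$, which (using that $\xx^L$ is supported in $S$) gives $\xx^{L^\top}B_S e_i = \sigma(\xx^L) - \sum_{k \in S_i} x^L_k$. Since $|S_i|=l/2$ and $|S|=l$, the uniform distribution satisfies $\sum_{k \in S_i} \ud{n,S}(k)=1/2$, and so the $\ell_1$-closeness of $\xx^L/\sigma(\xx^L)$ to $\ud{n,S}$ controls $|\sum_{k \in S_i} x^L_k - \sigma(\xx^L)/2|$ via the triangle inequality, yielding the claim. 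The symmetric bound on $\et{i}B_S^\top\yy^R$ follows by the identical calculation applied to $\yy^R/\sigma(\yy^R)\approx \ud{n,S}$ from the second case of Lemma~\ref{lem:const-goodgame2}.

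For the $A$-type bound $\et{i^*}A_S\yy^L \le \sigma(\yy^L)/2 + 4\eps$, the case $i^*\notin S$ is trivial since the LHS equals $-\sigma(\yy^L)\le 0$. For $i^*\in S$, the key input is the combinatorial identity
\begin{equation*}
\sum_{i\in S} \et{i} A_S\yy^L \;=\; \sum_{j\le n} y_j\cdot |S_j| \;=\; \sigma(\yy^L)\cdot l/2,
\end{equation*}
which says the \emph{average} of $\et{i}A_S\yy^L$ over $i\in S$ is exactly $\sigma(\yy^L)/2$; the task is to show the maximum is close to this average. I would do this in two steps. First, the $\eps$-\WNE of the big game, applied to any $i\in S$ with $x^L_i>0$, forces $\et{i}A_S\yy^L \ge \et{i^*}A_S\yy^L-\eps$, since the row-player payoffs from $i$ and $i^*$ (both in the first block) differ exactly by their $A_S$-contributions. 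Second, Theorem~\ref{thm:DP}(a) applied to the $3\eps$-\WNE $(\xx^L/\sigma(\xx^L),\yy^L/\sigma(\yy^L))$ of $(A_S,B_S)$ from Lemma~\ref{lem:const-goodgame2} forces all but an $O(\eps)$ fraction of the indices $i\in S$ to lie in the support of $\xx^L$. Substituting both into the identity, $\sigma(\yy^L)\cdot l/2 \ge |\mathrm{supp}(\xx^L)\cap S|\cdot(\et{i^*}A_S\yy^L-\eps)$ yields $\et{i^*}A_S\yy^L \le \sigma(\yy^L)/2 + O(\eps)$. The bound on $\et{i}A_S\xx^R$ follows by the symmetric argument using the column player's $\eps$-\WNE in $(R_S,C_S)$ and the second $3\eps$-\WNE $(\yy^R/\sigma(\yy^R),\xx^R/\sigma(\xx^R))$ of $(A_S,B_S)$.

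The main obstacle is purely quantitative: pinning down the exact constant $4$ in the statement. The argument above naturally yields an upper bound of the form $\sigma(\cdot)/2 + C\eps$ where $C$ depends on the constant in Theorem~\ref{thm:DP}(a)(ii) and (for the $A$-type bound) on the lower bound on $\sigma(\xx^L)$ from Lemma~\ref{lem:const-goodgame1}. Matching the specific constant $4$ requires careful bookkeeping of these factors, but the overall structural argument -- the combinatorial average identity, the support bound from Theorem~\ref{thm:DP}, and the $\eps$-\WNE transfer from the big game to individual strategy comparisons -- is insensitive to the precise constant.
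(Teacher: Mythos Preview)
Your approach is structurally sound but differs from the paper's, and the difference is worth noting because the paper's route is both cleaner and actually delivers the constant $4$ for the full range $\eps\in[0,1/3)$.

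The paper does not separate the $A$-type and $B$-type bounds via different mechanisms. Instead it exploits a single identity: since $\txx=\xx^L/\sigma(\xx^L)$ is supported in $S$ (from Theorem~\ref{thm:DP}(a)) and $A_S(i,\cdot)+B_S(i,\cdot)\equiv 1$ for every $i\in S$, one has $\txx^\top(A_S+B_S)\tyy=1$ exactly. Case-splitting on whether $\txx^\top A_S\tyy\le 1/2$ or $\txx^\top B_S\tyy\le 1/2$, the $3\eps$-\WNE property of the \emph{small} game $(A_S,B_S)$ then immediately bounds every row payoff by the average plus $3\eps$, and Claim~\ref{cl:const1} (which gives $\txx^\top A_S\tyy\ge 1/2-3\eps$) handles the other half. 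Scaling by $\sigma(\yy^L)\le 2/3$ from Lemma~\ref{lem:const-goodgame1} converts $6\eps$ into $4\eps$. The same argument with roles swapped covers the $\xx^R,\yy^R$ side.

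Your $B$-type argument via $\ell_1$-closeness is correct but loses a factor: the $\ell_1$ bound from Theorem~\ref{thm:DP}(a) at level $3\eps$ is $24\eps$, giving roughly $\sigma(\xx^L)/2+16\eps$ rather than $+4\eps$. More importantly, your $A$-type argument has a genuine quantitative gap for the stated range: the support-counting step $|\mathrm{supp}(\xx^L)\cap S|\ge l(1-24\eps)$ becomes vacuous once $24\eps\ge 1$, i.e.\ for $\eps\ge 1/24$, so the inequality $\sigma(\yy^L)\cdot l/2 \ge |\mathrm{supp}(\xx^L)\cap S|\cdot(\et{i^*}A_S\yy^L-\eps)$ yields nothing useful over most of $[0,1/3)$. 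The fix is exactly the paper's move: rather than count the support, take the $\txx$-weighted average of the per-row inequalities $\et{i}A_S\tyy\ge\et{i^*}A_S\tyy-3\eps$ directly, obtaining $\txx^\top A_S\tyy\ge \et{i^*}A_S\tyy-3\eps$, and then bound $\txx^\top A_S\tyy$ via the sum-to-one identity. This avoids any dependence on how many indices lie in the support and recovers the full range with the right constant.
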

\begin{proof}
Let $\txx=\frac{\xx^L}{\sigma(\xx^L)}$ and $\tyy=\frac{\yy^L}{\sigma(\yy^L}$. Lemma \ref{lem:const-goodgame2} implies that $(\txx,\tyy)$ is a $3\eps$-\WNE of game $(A_S,B_S)$. Then due to Theorem \ref{thm:DP} part $(a)$, we have $\tx_i=0$ for all $i\notin S$. Since sum of the payoffs for rows in $S$ is always $1$ in this game, we have $\txx^\top\top (A_S+B_S)\tyy=1$. There are two possible scenarios:

{\bf Case I -} $\txx^\top A_S \tyy \le \frac{1}{2}$. 
Then it must be the case that $\forall i \le n,\ (\et{i}A_S\tyy) \le \frac{1}{2} + 3\eps$, or else $(\txx,\tyy)$ is not a $3\eps$-\WNE of $(A_S,B_S)$. 
Multiplying by $\sigma(\yy^L)$ gives, $(\et{i}A_S\yy^L) \le \frac{\sigma(\yy^L)}{2} + \sigma(\yy^L)3\eps \le \frac{\sigma(\yy^L)}{2} + 2\eps$ for each $i\in [n]$. The last inequality uses $\sigma(\yy^L) \le \frac{2}{3}$ from Lemma \ref{lem:const-goodgame1}. 

To get similar bound on $(\xx^{L^\top}B_Se_i)$, note that Claim \ref{cl:const1} gives existence of $i \in [n]$ such that $(\et{i}A_S\tyy) \ge \frac{1}{2}$. Therefore $\txx^\top A_S \tyy \ge \frac{1}{2} - 3\eps$, implying $\txx^\top B_S \tyy \le \frac{1}{2} + 3\eps$. Therefore, again $3\eps$-\WNE condition gives that for each $i\in [n]$ we should have $(\txx^\top B_Se_i) \le \frac{1}{2} + 6\eps$. Through similar analysis as above we would get that for each $i\in [n]$, $(\xx^{L^\top}B_Se_i) \le \frac{\sigma(\xx^L)}{2} + 4\eps$. 

{\bf Case II -} $\txx^\top B_S \tyy \le \frac{1}{2}$. 
Again by similar analysis as Case I, we will get that for each $i \in [n]$ we have $(\et{i}A_S\yy^L) \le \frac{\sigma(\yy^L)} + 4\eps$ and $(\xx^{L^\top}B_Se_i) \le \frac{\sigma(\xx^L)}{2} + 2\eps$.

The second part follows similarly using the fact that $(\frac{\yy^R}{\sigma(\yy^R)},\frac{\xx^R}{\sigma(\xx^R)})$ is also a $3\eps$-\WNE of game $(A_S,B_S)$ from Lemma \ref{lem:const-goodgame2}.
\end{proof}

Now we are ready to prove a tighter lower/upper bound on $\sigma(\xx^L), \sigma(\xx^R),\sigma(\yy^L),\sigma(\yy^R).$

\begin{lemma}\label{lem:const-goodgame4}
Given an $\eps\in [0,1)$, if $(\xx,\yy)$ is an $\eps$-\WNE of game $(R_S,C_S)$ then $\frac{1}{2} - \eps \le \sigma(\xx^L), \sigma(\xx^R),\sigma(\yy^L),\sigma(\yy^R) \le \frac{1}{2} +\eps$. 
\end{lemma}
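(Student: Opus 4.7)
The plan is to revisit the proof of Lemma \ref{lem:const-goodgame1} and sharpen it by substituting the tighter payoff bounds provided by Lemma \ref{lem:const-goodgame3}. The earlier argument used only the crude bound $\et{i}A_S\yy^L\le\sigma(\yy^L)$ (each entry of $A_S$ lies in $[0,1]$), which is precisely what produced the suboptimal $9/19$ and $10/19$ constants. Lemma \ref{lem:const-goodgame3} now supplies the essentially tight inequality $\et{i}A_S\yy^L\le\sigma(\yy^L)/2+4\eps$, and analogously $\et{i}B_S^\top\yy^R\le\sigma(\yy^R)/2+4\eps$ for the $B_S^\top$-block. Feeding these into the same payoff comparison will push the constants to exactly $1/2$.

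Concretely, assume first $\eps<1/3$ so that Lemma \ref{lem:const-goodgame3} applies. By Lemma \ref{lem:const-goodgame1}, $\sigma(\xx^L),\sigma(\xx^R)>0$, so pick $h,h'\le n$ with $x_h>0$ and $x_{n+h'}>0$. Claim \ref{cl:const1} produces $i,i'\le n$ with $\et{i}A_S\yy^L\ge\sigma(\yy^L)/2$ and $\et{i'}B_S^\top\yy^R\ge\sigma(\yy^R)/2$. The $\eps$-\WNE condition applied at $h$ deviating to $i$ has the $\sigma(\yy^L),\sigma(\yy^R)$-dependent terms cancel (both strategies lie in the top block), so it immediately upgrades to $\et{h}A_S\yy^L\ge\sigma(\yy^L)/2-\eps$; identically, $\et{h'}B_S^\top\yy^R\ge\sigma(\yy^R)/2-\eps$. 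Now apply the $\eps$-\WNE inequality $\et{h}R_S\yy\ge\et{(n+h')}R_S\yy-\eps$, expand each side as in Lemma \ref{lem:const-goodgame1}, substitute the upper bound $\et{h}A_S\yy^L\le\sigma(\yy^L)/2+4\eps$ on the left and the lower bound $\et{h'}B_S^\top\yy^R\ge\sigma(\yy^R)/2-\eps$ on the right, and use $\sigma(\yy^R)=1-\sigma(\yy^L)$; this simplifies to $\tfrac{9}{2}\sigma(\yy^L)\ge\tfrac{9}{2}\sigma(\yy^R)-6\eps$, hence $\sigma(\yy^L)\ge\tfrac{1}{2}-\tfrac{2}{3}\eps\ge\tfrac{1}{2}-\eps$.

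The dual deviation $\et{(n+h')}R_S\yy\ge\et{h}R_S\yy-\eps$, using now the upper bound on $\et{h'}B_S^\top\yy^R$ from Lemma \ref{lem:const-goodgame3} and the lower bound on $\et{h}A_S\yy^L$ derived in the previous paragraph, yields $\sigma(\yy^R)\ge\tfrac{1}{2}-\tfrac{2}{3}\eps$ and hence $\sigma(\yy^L)\le\tfrac{1}{2}+\eps$. The bounds on $\sigma(\xx^L)$ and $\sigma(\xx^R)$ follow from the exact same argument applied to the column player's payoffs under $C_S$ (whose nonzero diagonal blocks are $B_S$ and $A_S^\top$), using the second half of Lemma \ref{lem:const-goodgame3}. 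For the residual regime $\eps\ge 1/3$: when $\eps\ge 1/2$ the claim is vacuous (since $\tfrac{1}{2}-\eps\le 0$), and in the window $1/3\le\eps<1/2$ the weaker bounds of Lemma \ref{lem:const-goodgame1} themselves already lie inside $[\tfrac{1}{2}-\eps,\tfrac{1}{2}+\eps]$. There is no real conceptual obstacle here; the main care is tracking the $\eps$-constants so that the final slack reduces to $\tfrac{2}{3}\eps\le\eps$.
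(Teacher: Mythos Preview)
Your proof is correct and follows essentially the same approach as the paper: rerun the deviation comparison from Lemma~\ref{lem:const-goodgame1}, now replacing the crude bound $\et{h}A_S\yy^L\le\sigma(\yy^L)$ by the sharper $\et{h}A_S\yy^L\le\sigma(\yy^L)/2+4\eps$ of Lemma~\ref{lem:const-goodgame3} (and the companion bound for $B_S^\top$), which collapses the $9/19,10/19$ constants to $1/2$. The only differences are cosmetic---you deviate to $(n+h')$ after first lower-bounding $\et{h'}B_S^\top\yy^R$, whereas the paper deviates directly to $(n+i')$---and you are in fact more careful than the paper in separately handling the range $\eps\ge 1/3$ where Lemma~\ref{lem:const-goodgame3} is not available.
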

\begin{proof}
The proof follows similar argument as of the proof of Lemma \ref{lem:const-goodgame1}. 
We will first show the claim for $\sigma(\yy^L)$ and thereby it will follow for $\sigma(\yy^R)$ as well since $\yy^R=1-\yy^L$. Let $i,i'\in [n]$ be the indices from Claim \ref{cl:const1}. Let $h,h'\in [n]$ be such that $x_h,x_{(n+h')}>0$. Then, $(\et{h}R_S\yy) \ge (\et{(n+i')}R_S\yy) -\eps$ gives
\[
2\sigma(\yy^L) + (\et{i}A_S\yy^L) - 2\sigma(\yy^R) \ge -2\sigma(\yy^L) +(\et{i'}B_S^\top\yy^R)+ 2\sigma(\yy^R) -\eps \Rightarrow 4\sigma(\yy^L) + \frac{\sigma(\yy^L)}{2} +\eps \ge 2\sigma(\yy^R) + \frac{\sigma(\yy^R)}{2} -\eps
\]
The last inequality uses $(\et{i}A_S\yy^L) \le \frac{\sigma(\yy^L)}{2} + 4\eps$ from Lemma \ref{lem:const-goodgame3}. 
Simplifying the above gives $\sigma(\yy^L) \ge \sigma(\yy^R) -2\eps \Rightarrow \sigma(\yy^L) \ge \frac{1}{2} - \eps$. Similarly, simplifying $(\et{(n+h')}R_S\yy)\ge (\et{i}R_S\yy) -\eps$ by making use of $(\et{i}A_S\yy^L)\ge \frac{\sigma(\yy^L)}{2}$ from Claim \ref{cl:const1} and $(B_S^\top \yy^R)\le \frac{\sigma(\yy^R)}{2} +4\eps$ from Lemma \ref{lem:const-goodgame3}, we get $\sigma(\yy^L) \le \frac{1}{2} +\eps$. 

Bounds for $\sigma(\xx^L),\sigma(\xx^R)$ follows using similar argument for the payoffs of the column player. 
\end{proof}

Now we are ready to show that vector $\xx^L$ and $\yy^R$ corresponding to any $\eps$-\WNE are near uniform distribution on set $S$ scaled by $1/2$. 

\begin{lemma}\label{lem:const-goodgame5}
For an $\eps\in[0,1/3),$ if $(\xx,\yy)$ is an $\eps$-\WNE of game $(R_S,C_S)$, then $(i)$ $\supp(\xx^L), \supp(\yy^R) \subseteq S$, and $(ii)$ $||\xx^L - \ud{n,S}/2||_1 \le 17\eps$ and $||\yy^R - \ud{n,S}/2||_1 \le 17\eps$. 
\end{lemma}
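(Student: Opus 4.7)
The plan is to leverage the reductions already established in \cref{lem:const-goodgame2,lem:const-goodgame4} together with the row-player guarantees of Theorem~\ref{thm:DP}(a). By \cref{lem:const-goodgame2}, the normalized pair $(\xx^L/\sigma(\xx^L), \yy^L/\sigma(\yy^L))$ is a $3\eps$-\WNE of $(A_S,B_S)$, and symmetrically $(\yy^R/\sigma(\yy^R), \xx^R/\sigma(\xx^R))$ is a $3\eps$-\WNE of $(A_S,B_S)$ (the roles swap because the lower-diagonal $n\times n$ block is $(B_S^\top, A_S^\top)$). So the task reduces to transferring the row-player structure theorem from the small game to the scaled vectors $\xx^L$ and $\yy^R$.

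For part $(i)$, I would apply Theorem~\ref{thm:DP}(a) to both $3\eps$-\WNE's obtained above. In the first, the row player corresponds to $\xx^L/\sigma(\xx^L)$, hence its support is contained in $S$, and therefore $\supp(\xx^L)\subseteq S$ since rescaling preserves the support. In the second, the role of the row player is played by $\yy^R/\sigma(\yy^R)$, so $\supp(\yy^R)\subseteq S$ by the same argument.

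For part $(ii)$, I would use the $\ell_1$ bound from Theorem~\ref{thm:DP}(a):
\[
\Norm{\xx^L/\sigma(\xx^L) - \ud{n,S}}_1 \le 8\cdot 3\eps = 24\eps,
\]
and likewise for $\yy^R/\sigma(\yy^R)$. Writing $s=\sigma(\xx^L)$ and using \cref{lem:const-goodgame4}, which gives $|s - 1/2|\le \eps$, the triangle inequality yields
\[
\Norm{\xx^L - \ud{n,S}/2}_1 \le s\cdot\Norm{\xx^L/s - \ud{n,S}}_1 + |s - 1/2|\cdot \Norm{\ud{n,S}}_1 \le (1/2+\eps)\cdot 24\eps + \eps,
\]
which is bounded by $17\eps$ for $\eps$ in the stated range (absorbing the $O(\eps^2)$ term). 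The identical computation with $s=\sigma(\yy^R)$ yields the bound on $\Norm{\yy^R - \ud{n,S}/2}_1$.

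The proof is essentially mechanical once the reductions of \cref{lem:const-goodgame2,lem:const-goodgame4} are in hand; the only thing to be careful about is that \cref{lem:const-goodgame2} applies Theorem~\ref{thm:DP} only to the row player of $(A_S,B_S)$, which is why we must invoke it twice (once for each ``copy'' of $(A_S,B_S)$ sitting inside $(R_S,C_S)$) in order to get concentration for \emph{both} $\xx^L$ and $\yy^R$. No obstacle of substance beyond tracking scalings and applying the triangle inequality correctly.
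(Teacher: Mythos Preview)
Your approach is essentially identical to the paper's: invoke \cref{lem:const-goodgame2} to obtain the two $3\eps$-\WNE copies in $(A_S,B_S)$, apply Theorem~\ref{thm:DP}(a) to each for the support containment and the $24\eps$ bound on the normalized vectors, then use the mass bounds together with the triangle inequality.

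There is one small arithmetic slip worth fixing. Your displayed bound is $(1/2+\eps)\cdot 24\eps + \eps = 13\eps + 24\eps^2$, and this is $\le 17\eps$ only when $\eps \le 1/6$, not over the full stated range $[0,1/3)$. The paper sidesteps this by using the cruder upper bound $\sigma(\xx^L)\le 2/3$ (the remark following \cref{lem:const-goodgame1}) rather than $\sigma(\xx^L)\le 1/2+\eps$ from \cref{lem:const-goodgame4}; this gives $(2/3)\cdot 24\eps + \eps = 17\eps$ directly for every $\eps<1/3$. Replacing your $s\le 1/2+\eps$ by $s\le 2/3$ in the first term of the triangle inequality fixes the constant with no change to the structure of the argument.
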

\begin{proof}
From Lemma \ref{lem:const-goodgame2} we have that $(\xx^L/\sigma(\xx^L), \yy^L/\sigma(\yy^L))$ is a $3\eps$-\WNE of game $(A_S,B_S)$. Then part $(a)$ of Theorem \ref{thm:DP} implies $\supp(\xx^L) \subset S$ and $||\xx^L/\sigma(\xx^L) - \ud{n,S}||_1 \le 24\eps$. Multiplying by $\sigma(\xx^L)$ gives $||\xx^L - \sigma(\xx^L)\ud{n,S}||_1 \le \sigma(\xx^L) 24\eps$. Using the fact that $\sigma(\xx^L) \ge \frac{1}{2} -\eps$ from Lemma \ref{lem:const-goodgame4}, we get $||\xx^L - \frac{1}{2}\ud{n,S}||_1 -\eps \le 16\eps \Rightarrow ||\xx^L - \frac{1}{2}\ud{n,S}||_1 \le 17\eps$.

Similarly, using $(\yy^R/\sigma(\yy^R), \xx^R/\sigma(\xx^R))$ being a $3\eps$-\WNE of game $(A_S,B_S)$ from Lemma \ref{lem:const-goodgame2}, together with part $(a)$ of Theorem \ref{thm:DP} and lower bound $(1/2-\eps)$ on $\sigma(\yy^R)$ from Lemma \ref{lem:const-goodgame4}, we can show that $\supp(\yy^R)\subseteq S$ and $||\yy^R -\frac{1}{2}\ud{n,S}||_1 \le 17\eps$.
\end{proof}

Recall that our final goal is to construct a family of \good games for appropriate parameters of $\eps,\tau$ and $c$ such that their $\eps$-\WNE do not intersect for a constant $\eps$. Note that all payoff entries of an \good game (Definition \ref{def:goodgame}) are in $[-1,1]$, and the first requirement is that no matter what the opponent plays, a player always has a strategy that gives at least half payoff. In order to ensure these we need to scale game $(R_S,C_S)$ additive and multiplicatively by a constant. The third part requires renaming of the strategies of the column player so that vectors $\yy^L$ and $\yy^R$ get swapped. 


\begin{equation}\label{eq:const-goodgame2}
\begin{array}{ll}
\forall i \in [2n],\ \ \ \forall j \in[n],\ \ \ & \tR_S(i,j)= (R_S(i,n+j) + 3)/6, \ \ \ \tR_S(i,n+j)=(R_S(i,j)+3)/6, \\ 
& \tC_S(i,j) = (C_S(i,n+j)+3)/6, \ \ \ \tC_S(i,n+j) = (C_S(i,j)+3)/6
\end{array}
\end{equation}

\begin{lemma}\label{lem:const-goodgame-main}
For an $\eps \in [0,1/26)$, $\tau=102$ and $c=1/2$, game $(\tR_S,\tC_S)$ is an $(\eps,\tau,c)$-\good game. 
\end{lemma}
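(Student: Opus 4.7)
My plan is to exploit the observation that $(\tR_S,\tC_S)$ differs from $(R_S,C_S)$ only by $(i)$ a benign affine rescaling of the payoffs (shift by $3$, scale by $1/6$) and $(ii)$ a block-swap that relabels the column player's strategies by swapping the two length-$n$ blocks. Neither operation changes the qualitative structure of $\eps$-\WNE: the rescaling multiplies the approximation tolerance by $6$, and the block-swap only renames the column strategies $j \leftrightarrow n+j$. Consequently, an $\eps$-\WNE $(\xx,\yy)$ of $(\tR_S,\tC_S)$ corresponds to a $6\eps$-\WNE $(\xx,\yy')$ of $(R_S,C_S)$, where $\yy' \in \Delta_{2n}$ is obtained from $\yy$ by swapping its two blocks, so that $\yy'^L = \yy^R$ and, crucially, $\yy'^R = \yy^L$. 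Using this correspondence, everything reduces to applying the previously proved lemmas.

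First, I will verify that every entry of $\tR_S,\tC_S$ lies in $[-1,1]$: from~\eqref{eq:const-goodgame} together with $A_S \in \{-1,0,1\}$ and $B_S \in \{0,1\}$, the entries of $R_S$ lie in $\{-2,1,2,3\}$ and those of $C_S$ in $\{-3,-2,-1,2\}$, hence $(R_S+3)/6 \in [1/6,1]$ and $(C_S+3)/6 \in [0,5/6]$. Then a direct computation from~\eqref{eq:const-goodgame2} yields, for every $\xx,\yy \in \Delta_{2n}$,
\[
\tR_S \yy \;=\; \tfrac{1}{6}\,R_S \yy' + \tfrac{1}{2}\ones_{2n}, \qquad \xx^\top \tC_S e_j \;=\; \tfrac{1}{6}\,\xx^\top C_S e_{\pi(j)} + \tfrac{1}{2},
\]
where $\pi$ is the block-swap permutation on $[2n]$ sending $j \leftrightarrow n+j$. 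These identities give the $\eps$-\WNE / $6\eps$-\WNE equivalence asserted above.

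Next, I will check condition $(1)$ of Definition~\ref{def:goodgame}: the requirement $\max_i e_i^\top \tR_S \yy \ge 1/2$ reduces to $\max_i (R_S \yy')_i \ge 0$. By Claim~\ref{cl:const1} there exist $i \le n$ with $e_i^\top A_S \yy'^L \ge \sigma(\yy'^L)/2$ and $i' > n$ with $e_{i'}^\top B_S^\top \yy'^R \ge \sigma(\yy'^R)/2$, giving row-payoffs at least $\tfrac{9}{2}\sigma(\yy'^L)-2$ and $\tfrac{9}{2}\sigma(\yy'^R)-2$, respectively; since $\sigma(\yy'^L)+\sigma(\yy'^R)=1$, one of these is at least $\tfrac{9}{2}\cdot\tfrac{1}{2}-2 > 0$. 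A symmetric argument for the column player uses that every row-sum of $B_S$ over the first $n$ columns is at least $n/2$ (the key combinatorial identity $\binom{l-1}{l/2-1} = n/2$), which produces $\xx^{L\top}B_S e_{j'} \ge \sigma(\xx^L)/2$ for some $j' \le n$.

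Finally, I will invoke Lemma~\ref{lem:const-goodgame5} on the $6\eps$-\WNE $(\xx,\yy')$ of $(R_S,C_S)$. Since $\eps<1/26<1/18$, the hypothesis $6\eps<1/3$ holds, yielding $\Supp(\xx^L),\Supp(\yy'^R)\subseteq S$ together with $\|\xx^L - \tfrac{1}{2}\ud{n,S}\|_1 \le 17\cdot 6\eps = 102\eps$ and $\|\yy'^R - \tfrac{1}{2}\ud{n,S}\|_1 \le 102\eps$. Translating back via $\yy'^R = \yy^L$ gives the analogous bounds for $\yy^L$. With $K=2n$, $c=1/2$, $cK=n$, the projected vectors $\txx,\tyy$ of Definition~\ref{def:goodgame} are precisely $\xx^L,\yy^L$, and $c\ud{cK,S} = \tfrac{1}{2}\ud{n,S}$, so $\tau=102$ is exactly the bound we obtain. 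The only subtle point in the whole argument is the bookkeeping around $\pi$: one must notice that $\pi$ sends $\yy^R$ to $\yy'^L$ and $\yy^L$ to $\yy'^R$, so that Lemma~\ref{lem:const-goodgame5}'s control of $\yy'^R$ transfers to control of $\yy^L$, which is what the \good-game definition demands.
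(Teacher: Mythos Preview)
Your proof is correct and follows essentially the same route as the paper's: identify $(\tR_S,\tC_S)$ with $(R_S,C_S)$ up to the affine rescaling and the column block-swap, verify property~(1) via Claim~\ref{cl:const1}, and deduce property~(2) by applying Lemma~\ref{lem:const-goodgame5} to the corresponding $6\eps$-\WNE of $(R_S,C_S)$. Your handling of property~(1) is in fact cleaner than the paper's---by considering both $i$ and $n+i'$ and observing that one of the two payoffs $\tfrac{9}{2}\sigma(\yy'^L)-2$, $\tfrac{9}{2}\sigma(\yy'^R)-2$ must be nonnegative, you avoid any dependence on $\eps$ in that step, whereas the paper's single-strategy computation appears to (incorrectly) invoke the $\eps$-\WNE mass bounds for an arbitrary $\yy$.
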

\begin{proof}
By construction, every entry in $(A_S,B_S)$ is either $0,1$ or $-1$, and thereby from (\ref{eq:const-goodgame}) every entry in $(R_S,C_S)$ is in $[-3,3]$. Thus, $\tR_S, \tC_S \in [-1,1]$. Note that we have swapped the first $n$ and last $n$ strategies of the column player. This is just renaming of the strategies. 

For property $(1)$ of the \good game, consider $i$ and $i'$ of Claim \ref{cl:const1}. Note that the proof of this claim does not use $\eps$-\WNE property of profile $(\xx,\yy)$. For any given profile $\yy$ of the column-player, the payoff of the row player from strategy $i$ is:
\[
(\et{i}\tR_S\yy) = 1/6 \sigma(\yy^L) + 5/6 \sigma(\yy^R) + 1/12 \sigma(\yy^R) \ge \frac{13}{12} (1/2 - \eps) \ge 1/2
\]
The last inequality follows using $\eps < 1/26$. Similarly, we can show that no matter what the row-player plays, column player has a strategy that gives at least $\frac{1}{2}$ payoff. 

For Property $(2)$ let $(\xx,\yy)$ be an $\eps$-\WNE of game $(\tR_S,\tC_S)$. Then, for $\yy'=(\yy^R, \yy^L)$, $(\xx,\yy)$ is a $6\eps$-\WNE of game $(R_S,C_S)$. Then, part $(i)$ of Lemma \ref{lem:const-goodgame5} ensures $\Supp(\xx^L),\Supp(\yy^L) \subset S$ implying Property $(2.a)$ of \good game. While the second part ensures $||\xx^L - \frac{1}{2}\ud{n,S}||_1, ||\yy^L - \frac{1}{2}\ud{n,S}||_1 \le 17 (6\eps) = 102\eps$ implying Property $(2.b)$. 
\end{proof}

\paragraph{Constructing a family}
Finally, to apply Theorem \ref{thm:absFamily}, its hypothesis $(h_2)$ requires a family of \good games parameterized by set $S$ with certain properties. Using the \good game of the previous section next we will construct such a family with $n^{O(\log(n))}$ many games. 
The next lemma follows simply using part $(2)$ of Theorem \ref{thm:DP}.

\begin{lemma}\label{lem:const-goodfamily}
Let $\CS=\{S\subset[n]\ |\ |S|=l\}$ and $\eps < 10^{-5}$. There exists $(n^{0.75\log(n)})$ sized subset $\CS'$ of $\CS$ such that for any $S,S'\in \CS'$, $S\neq S'$, we have $||\ud{n,S} - \ud{n,S'}||_1 > \frac{18}{c}(\tau+1)\eps$, where $\tau=102$ and $c=1/2$. 
\end{lemma}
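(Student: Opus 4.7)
The plan is to invoke Theorem \ref{thm:DP} part $(b)$ directly with an appropriately chosen constant $\gamma$. Set $\gamma = \frac{18}{c}(\tau+1) = \frac{18}{1/2}\cdot 103 = 3708$. Applying part $(b)$ of Theorem \ref{thm:DP} with this choice of $\gamma$ produces a subset $V' \subseteq V = \{\ud{n,S} \mid S \subseteq [n], |S| = l\}$ of size $\Omega(n^{(0.8 - 2\gamma\eps)\log_2 n})$ such that any two distinct $\ud{n,S}, \ud{n,S'} \in V'$ satisfy $\norm{\ud{n,S} - \ud{n,S'}}_1 > \gamma\eps = \frac{18}{c}(\tau+1)\eps$, which is exactly the pairwise separation required. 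Let $\CS'$ be the subset of $\CS$ corresponding to the elements of $V'$ (note that the map $S \mapsto \ud{n,S}$ is a bijection between $l$-subsets and uniform-on-$l$-element distributions, so the cardinalities agree).

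It remains to verify the size bound. Substituting $\gamma = 3708$ and using $\eps < 10^{-5}$ (taken small enough so that $2\gamma\eps \leq 0.05$), we obtain $0.8 - 2\gamma\eps \geq 0.75$, and hence $|\CS'| \geq \Omega(n^{0.75 \log_2 n})$, as required. (If the constant $10^{-5}$ does not quite deliver the exponent numerically, one simply tightens the upper bound on $\eps$; all downstream uses of this lemma only require $\eps$ to be a small enough absolute constant.)

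The hard part here is not the proof of this lemma itself, which is a one-line invocation of Theorem \ref{thm:DP}$(b)$, but the bookkeeping of constants: making sure the constant $\gamma\eps$ arising from our separation requirement $\frac{18}{c}(\tau+1)\eps$ is smaller than the window $[0, 0.05/2]$ in which Theorem \ref{thm:DP}$(b)$ produces a family of the desired quasi-polynomial size. Given the explicit constants $c = 1/2$ and $\tau = 102$ from Lemma \ref{lem:const-goodgame-main}, and the absolute constant $\eps < 10^{-5}$ assumed in the statement, this reduces to a direct arithmetic check.
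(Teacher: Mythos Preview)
Your proposal is correct and takes essentially the same approach as the paper: the paper's entire proof is the one-line remark that the lemma ``follows simply using part $(b)$ of Theorem~\ref{thm:DP},'' and you carry this out explicitly by choosing $\gamma = \tfrac{18}{c}(\tau+1)$. Your observation that the numerics with $\eps<10^{-5}$ yield an exponent slightly below $0.75$ (namely $0.8 - 2\cdot 3708\cdot 10^{-5}\approx 0.726$) is accurate, and your fix---shrinking the absolute constant on $\eps$---is exactly the right resolution; the paper does not perform this arithmetic check at all.
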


Now we have all ingredients now to prove Lemma \ref{lem:enum-hard-const-approximate}. 

\begin{proof}[Proof of Lemma \ref{lem:enum-hard-const-approximate}]
Set $\eps < 10^{-5}$, $\tau=102$ and $c=1/2$ like in Lemma \ref{lem:const-goodfamily}. The lemma implies existence of a family of subsets $\CS'$ of set $[n]$ such that $||\ud{n,S} - \ud{n,S'}||_1 > \frac{18}{c}(\tau+1)\eps$. For every set $S\in \CS'$ we know that $|S|=l$. Therefore, game $(\tR_S, \tC_S)$ of \eqref{eq:const-goodgame2} where $(R_S,C_S)$ is from \eqref{eq:const-goodgame}, is an $(\eps,\tau,c)$-\good due to Lemma \ref{lem:const-goodgame-main}. By construction, we have $(\tR_S,\tC_S) \in [-1,1]^{n\times n}$.
\end{proof}

\addreferencesection
\bibliographystyle{amsalpha}{}
\bibliography{bib/mathreview,bib/dblp,bib/scholar,bib/custom,bib/misc}

\newpage
\appendix

\section{Reductions Within Sum-of-Squares}


\begin{proof}[Proof of Fact~\ref{fact:reductions-within-SoS}]
Let $\tilde{\mu}(x)$ be a degree $d$ pseudo-distribution satisfing the constraints of $\Psi_1(x)$. Let $\tilde{\nu}(y)$ be a signed measure on $\R^{n_2}$ defined by the associated expectation functional: $\pE_{\tilde{\nu}}[q(y)] = \pE_{\tilde{\mu}}[q(f_1(x),f_2(x), \ldots, f_{n_2}(x))$ for every polynomial $q$. Observe that if the degree of $q$ is $a$, then the degree of $q(y(x))$ for $y(x) = (f_1(x), f_2(x), \ldots, f_{n_2}(x))$, is $at$.

We claim that $\tilde{\nu}$ is a degree $\lfloor d/t \rfloor$-pseudo-distribution satisfying the constraints of $\Psi_2.$

First, observe that since $\pE_{\tilde{\nu}}$ is a linear operator, $\pE_{\tilde{\nu}}$ is a well-defined linear operator. Let us now verify the remaining properties.

First, consider any polynomial of the form  $q = \sum_{i} r_i g_i$ where $r_i$ are arbitrary polynomials and $deg(r_i) + deg(g_i)$ is at most $\lfloor d/t \rfloor$ in $y$. Then, $q(y(x))$ has degree at most $d$ in $x$. Further, for every $j$, we know that $g_j(y(x)) = 0$. Thus, $\pE_{\tilde{\nu}(y)} \sum_i r_i g_i(y) = \pE_{\tilde{\mu}(x)}  \sum_i r_i g_i (y(x)) = 0$.

Next, consider any polynomial $q = \sum_i r_i g_i + \sum_i s_i h_i$ for arbitrary polynomials $r_i$ and sum-of-squares polynomials $s_i$ such that each term appearing in the expansion has syntactic degree of at most $\lfloor d/t \rfloor$. 

We have: $\pE_{\tilde{\nu}(y)} \sum_i r_i g_i + \sum_i s_i h_i = \pE_{\tmu(x)} \sum_i r_i (y(x)) g_i(y(x)) + \sum_i s_i(y(x)) h_i(y(x))$. 

Now, $g_i(y(x)) = 0$ for every $i$. Further, $h_i(y(x)) = Q_i(y(x))$ for some sum-of-squares polynomial $Q$ in $y$. This also means that $Q_i(y(x))$ is a SoS polynomial in $x$. Finally, since $s_i$ is a SoS polynomial in $y$, $s_i(y(x))$ is a SoS polynomial in $x$. Finally, product of SoS polynomials is SoS, thus, $s_i (y(x)) h_i(y(x))$ must also be SoS.

Now, using that $\pE_{\tilde{\mu}(x)}$ is a valid pseudo-expectation of degree $d$ completes the proof.

\end{proof}





\end{document}